\newtheorem{theorem}{Theorem}[section]
\newtheorem{corollary}[theorem]{Corollary}
\newtheorem{lemma}[theorem]{Lemma}
\newtheorem{proposition}[theorem]{Proposition}
\theoremstyle{definition}
\newtheorem{definition}[theorem]{Definition}
\newtheorem{hypothesis}[theorem]{Hypothesis}
\theoremstyle{remark}
\newtheorem{remark}[theorem]{Remark}
\newtheorem*{example}{Example}
\newcommand{\hilb}{\mathcal{H}}
\newcommand{\hilbn}{\mathcal{H}^{(n)}}
\newcommand{\fock}{\mathcal{F}}
\newcommand{\focks}{\mathcal{F}}
\newcommand{\hfrak}{\mathfrak{H}}
\newcommand{\e}{\mathrm{e}}
\renewcommand{\Im}{\operatorname{Im}}
\newcommand{\Psie}{\Psi_{\mathrm{e}}}
\newcommand{\Psig}{\Psi_{\mathrm{g}}}
\newcommand{\Phie}{\Phi_{\mathrm{e}}}
\newcommand{\Phig}{\Phi_{\mathrm{g}}}
\renewcommand{\a}[1]{a\!\left(#1\right)}
\newcommand{\adag}[1]{a^\dag\!\left(#1\right)}
\newcommand{\ii}{\mathrm{i}}
\newcommand{\dOmega}{\mathrm{d}\Gamma(\omega)}
\def\smalloverbrace#1{\mathop{\vbox{\m@th\ialign{##\crcr\noalign{\kern3\p@}%
	\tiny\downbracefill\crcr\noalign{\kern3\p@\nointerlineskip}%
	$\hfil\displaystyle{#1}\hfil$\crcr}}}\limits}
\definecolor{cblue}{rgb}{0.16, 0.32, 0.75}
\definecolor{cred}{rgb}{0.7, 0.11, 0.11}
\title{\textbf{Generalized spin-boson models\\with non-normalizable form factors}}
\author[$1,2,\star$]{Davide Lonigro}
\affil[$1$]{\small Dipartimento di Fisica and MECENAS, Universit\`a di Bari, I-70126 Bari, Italy}
\affil[$2$]{\small INFN, Sezione di Bari, I-70126 Bari, Italy}
\affil[$\star$]{\small \texttt{davide.lonigro@ba.infn.it}}
\begin{document}
	
\maketitle

\begin{abstract}
Generalized spin-boson (GSB) models describe the interaction between a quantum mechanical system and a structured boson environment, mediated by a family of coupling functions known as form factors. We propose an extension of the class of GSB models which can accommodate non-normalizable form factors, provided that they satisfy a weaker growth constraint, thus accounting for a rigorous description of a wider range of physical scenarios; we also show that such ``singular'' GSB models can be rigorously approximated by GSB models with normalizable form factors. Furthermore, we discuss in greater detail the structure of the spin-boson model with a rotating wave approximation (RWA): for this model, the result is improved via a nonperturbative approach which enables us to further extend the class of admissible form factors, as well as to compute its resolvent and characterize its self-adjointness domain.
\end{abstract}

\section{Introduction}

The spin-boson model, which describes the interaction between a quantum mechanical two-state system (qubit) and a structured boson environment, is one of the cornerstone of physics~\cite{weiss2012quantum,leggett1987dynamics}. Apart from providing a comprehensive and tractable description of fundamental phenomena, such as quantum noise, decoherence and non-Markovianity in open quantum systems~\cite{breuer2002theory,ingold2002path,grifoni1999dissipation,thorwart2004dynamics,clos2012quantification,costi2003entanglement}, it finds applications in a wide range of topics, such as quantum optics~\cite{bloch2012quantum,blatt2012quantum,porras2008mesoscopic,vogel1995nonlinear,phoenix1991establishment}, quantum information and simulation~\cite{hao2013dynamics,ge2010quantum,lemmer2018trapped,leppakangas2018quantum}, solid state and chemical physics~\cite{wipf1987influence,suarez1991hydrogen,guinea1985bosonization}. The interest in such models is increasingly fostered by the recent breakthrough in quantum technology: complex high-dimensional quantum systems can now be inspected and controlled with an unprecedented degree of precision~\cite{dowling2003quantum,nielsen2002quantum,white2020demonstration}. As such, the demand for a thorough analysis of the properties of the spin-boson model, as well as its many generalizations, is far from worn out.

The mathematical properties of the spin-boson model have been extensively analyzed in recent years; its spectrum has been investigated, and the existence and uniqueness of its ground state has been discussed~\cite{hirokawa2001remarks,hubner1995spectral,hirokawa1999expression,arai1990asymptotic,amann1991ground,davies1981symmetry,fannes1988equilibrium,hubner1995radiative,reker2020existence,hasler2021existence}. Going beyond the qubit case, a wider class of Hamiltonians describing the interaction between a quantum mechanical system and a structured boson field, known as \textit{generalized spin-boson} (GSB) models, was introduced by Arai and Hirokawa~\cite{arai1997existence}, and has been investigated as well~\cite{arai2000essential,arai2000ground,falconi2015self,takaesu2010generalized,teranishi2015self,teranishi2018absence}.

GSB models are defined as follows. Let $\mathfrak{h}$ be the Hilbert space describing the system, and $\focks(\hilb)$ the symmetric Fock space associated with a boson field, with $\hilb$ being its single-particle subspace. The free energy of the system and the field is associated with the following Hamiltonian on $\mathfrak{h}\otimes\focks(\hilb)$:
\begin{equation}
	H_0=A\otimes I+I\otimes\dOmega,
\end{equation}
with $A$ being the free Hamiltonian of the system, and $\dOmega$ being the free Hamiltonian of the boson field with dispersion relation $\omega$. GSB models are thus given by
\begin{equation}
	H_{f_1,\dots,f_r}=H_0+\lambda\sum_{j=1}^r\left(B_j\otimes\adag{f_j}+B_j^*\otimes\a{f_j}\right),
\end{equation}
with $\lambda\in\mathbb{R}$ being a coupling constant, $B_1,\dots,B_r$ a family of operators on the system, and $\a{f_j},\adag{f_j}$ being the creation and annihilation operators associated with a family of coupling functions $f_1,\dots,f_r$, which we denote as the form factors of the model.

GSB models include, among others, the spin-boson model on $\mathbb{C}^2\otimes\focks$:
\begin{equation}\label{eq:intro_sb}
H_f=H_0+\lambda\sigma_x\otimes\left(\a{f}+\adag{f}\right),
\end{equation}
a variant of the spin-boson model given by
\begin{equation}\label{eq:intro_rwasb}
	H_f=H_0+\lambda\left(\sigma_+\otimes\a{f}+\sigma_-\otimes\adag{f}\right),
\end{equation}
and a ``dephasing-type'' spin-boson model:
\begin{equation}\label{eq:intro_sb_pd}
	H_f=H_0+\lambda\sigma_z\otimes\left(\a{f}+\adag{f}\right),
\end{equation}
where, in Eqs.~\eqref{eq:intro_sb}--\eqref{eq:intro_rwasb},
\begin{equation}\label{eq:pauli}
\sigma_x=\begin{pmatrix}
0&1\\1&0
\end{pmatrix},\qquad\sigma_z=\begin{pmatrix}
1&0\\0&-1
\end{pmatrix},\qquad\sigma_+=\begin{pmatrix}
0&1\\0&0
\end{pmatrix},\qquad\sigma_-=\begin{pmatrix}
0&0\\1&0
\end{pmatrix},
\end{equation}
as well as their many-atom generalization; as we will discuss, the model in Eq.~\eqref{eq:intro_rwasb} can be obtained by neglecting counter-rotating terms in the spin-boson model~\eqref{eq:intro_sb}, a procedure often denoted as rotating-wave approximation (RWA)~\cite{agarwal1971rotating,agarwal1973rotating}. For a monochromatic boson field, the models in Eqs.~\eqref{eq:intro_sb} and~\eqref{eq:intro_rwasb} reduce to the well-known Rabi model~\cite{xie2017quantum,braak2011integrability,hwang2015quantum,zhong2013analytical} and Jaynes-Cummings model~\cite{shore1993jaynes,phoenix1991establishment,vogel1995nonlinear}, respectively.

GSB models (and, in particular, the models in Eqs.~\eqref{eq:intro_sb}--\eqref{eq:intro_sb_pd}) must obviously correspond to self-adjoint operators on the Hilbert space $\mathfrak{h}\otimes\focks(\hilb)$. A basic, and apparently natural,  assumption is the following one: the form factors $f_1,\dots,f_r$ must be \textit{normalizable}, that is, they must belong to the single-particle Hilbert space $\hilb$. From the mathematical point of view, this request ensures that the creation and annihilation operators $\adag{f}$, $\a{f}$ are closed operators on the Fock space: without such a condition, $\a{f}$ fails to be closed, so that its adjoint is not defined \cite{nelson1964interaction}.

Nevertheless, this assumption may be troublesome for applications: in formal calculations, physicists often make use of \textit{non-normalizable} form factors, e.g. Dirac distributions, thus dealing with operators whose very well-definiteness, not to mention self-adjointness, is questionable. Remarkably, non-normalizable form factors may come out from first principles; a basic example comes from waveguide quantum electrodynamics. The interaction between a single transverse mode of an electromagnetic field confined in an infinitely long waveguide and a pointlike quantum emitter can be described by a spin-boson model with $\hilb=L^2(\mathbb{R})$ and the following choices (in natural units) for the dispersion relation and the form factor:~\cite{facchi2019bound,lonigro2021stationary}
 \begin{equation}\label{eq:wqed}
 \omega(k)=\sqrt{k^2+m^2},\qquad	f(k)\propto\frac{\e^{-\ii kx_0}}{\sqrt[4]{k^2+m^2}},
 \end{equation}
 with $x_0$ being the position of the emitter in the guide, and $m>0$ an effective mass~\cite{jackson,dutra2005cavity}. Clearly, $f\notin\hilb$: the form factor does not decrease sufficiently quickly at $|k|\to\infty$. Furthermore, either as a byproduct of linear expansions or as \textit{a priori} toy models, one often encounters flat form factors $f(k)=\text{const.}$, modeling an idealized situation in which all field momenta are coupled to the spin with uniform strength, and corresponding, in the position representation, to Dirac distributions representing zero-range interactions. Interestingly, such choices of form factors have been shown to ensure the validity of quantum regression for specific classes of GSB models \cite{lonigro2022regression,lonigro2022beyond}.
 
 While such ``singular'' choices of form factors may be justified a posteriori via cutoff procedures or discretization arguments, a precise mathematical framework for non-normalizable form factors is desirable. This work represents a general effort in that direction: under minimal assumptions, we will define \textit{singular} GSB models which can accommodate form factors $f\notin\hilb$, provided that weaker constraints are fulfilled. Precisely, assuming $\omega(k)\geq m>0$ and denoting by $\hilb_{-s}$, $s\geq0$, the space of functions satisfying the condition
 \begin{equation}
\int\frac{|f(k)|^2}{\omega(k)^s}\,\mathrm{d}\mu(k)<\infty,
 \end{equation}
 the following results will be shown:
 \begin{itemize}
 	\item all generalized spin-boson models can accommodate, for small enough values of the coupling constant $\lambda$, form factors $f_1,\dots,f_r\in\hilb_{-1}$ (Prop.~\ref{prop:singgsb});
 	\item for the rotating-wave spin-boson model with form factor $f\in\hilb_{-1}$, the result can be improved in such a way to admit arbitrary values of $\lambda$, and by characterizing the operator domain and finding a closed expression for the resolvent (Theorem~\ref{thm:singrwa});
 	\item finally, via a renormalization procedure, a rotating-wave spin-boson model with form factor up to $f\in\hilb_{-2}$ can be defined, again with its operator domain and its resolvent being characterized (Theorem~\ref{thm:singrwa2}).
 \end{itemize}
In all cases listed above, the new models reduce to the ``regular'' ones when $f_1,\dots,f_r\in\hilb$; besides, it is always possible to approximate a singular GSB model by regular ones in the (either norm or strong) resolvent topology.
 
The main mathematical tool of our analysis will be the construction of scales of Hilbert spaces associated with self-adjoint operators: this will enable us to define $\a{f}$, $\adag{f}$, for all $f\in\hilb_{-s}$, as continuous maps between two properly chosen Hilbert spaces (Props.~\ref{prop:af_sing}--\ref{prop:adagf_sing}), instead that as unbounded operators on $\focks(\hilb)$, thus circumventing the aforementioned issue. Hilbert scales has been long applied to the study of singular perturbations of differential operators~\cite{albeverio2000singular,albeverio2007singularly,simon1995spectral,posilicano2001krein}, and were applied in order to introduce a singular Friedrichs-Lee Hamiltonian~\cite{derezinski2002renormalization,facchi2021spectral,lonigro2021selfenergy}, which indeed corresponds to the single-excitation sector of the model in Eq.~\eqref{eq:intro_rwasb}.

This work is organized as follows:
\begin{itemize}
	\item in Section~\ref{sec:fock} we sum up the basic definitions and properties of symmetric Fock spaces and operators on them, as well as the definition of generalized spin-boson (GSB) models with normalizable form factors;
	\item in Section~\ref{sec:singcreation} we introduce a scale of Fock spaces and we define creation and annihilation operators on the scale. These operators are compatible with the standard (regular) ones in the case of normalizable form factors, but can accommodate non-normalizable form factors;
	\item in Section~\ref{sec:singgsb} we introduce singular GSB models with form factors $f_1,\dots,f_r\in\hilb_{-1}$, proving their self-adjointness for small values of the coupling constant $\lambda$, and we show that every singular GSB model can be approximated by a sequence of regular GSB models;
	\item in Section~\ref{sec:singrwa} we study in greater detail the model in Eq.~\eqref{eq:intro_rwasb} with form factor $f\in\hilb_{-1}$. Improving the general results by following an alternative, nonperturbative strategy based on resolvent methods, we extend its structure to arbitrary values of $\lambda$, also characterizing its operator domain and computing its resolvent;
	\item in Section~\ref{sec:singrwa2} we improve the results of the previous section by showing that, via a renormalization procedure again based on resolvent methods, a further generalization of the model in Eq.~\eqref{eq:intro_rwasb} with form factor $f\in\hilb_{-2}$ (or, more specifically, $f\in\hilb_{-s}$ for $s\leq2$) can be obtained.
\end{itemize}
In the concluding section, further possible improvements of our results are discussed.\\

\textbf{Nomenclature}. We will denote by $\bar{z}$ the complex conjugate of a complex number $z\in\mathbb{C}$. Given a Borel measure space $(X,\mu)$, the Lebesgue integral of a measurable function on $X$ will be denoted by
\begin{equation}
\int f(k)\,\mathrm{d}\mu(k)\quad\text{or}\quad	\int f(k)\,\mathrm{d}\mu,
\end{equation}
the second expression being used whenever there is no risk of confusion. An analogous notation will be used for multiple integrals:
\begin{equation}
\int f(k_1,\dots,k_n)\:\mathrm{d}^n\mu.
\end{equation}
Given a Hilbert space $\mathcal{K}$, the scalar product on $\mathcal{K}$ and its associated norm will be denoted by
\begin{equation}
\Braket{\Psi,\Phi}_{\mathcal{K}},\qquad \|\Psi\|_{\mathcal{K}}=\Bigl(\Braket{\Psi,\Psi}_{\mathcal{K}}\Bigr)^{1/2};
\end{equation}
the scalar product is linear at the right and antilinear at the left. In particular, given a Borel measure space $(X,\mu)$, we will denote by $L^2_\mu(X)$ the space of square-integrable functions on $X$ endowed with the scalar product
\begin{equation}
\Braket{f,g}_{L^2_\mu(X)}=\int\overline{f(x)}g(x)\;\mathrm{d}\mu(x).
\end{equation}
Given a (possibly unbounded) closed linear operator $T$ on $\mathcal{K}$, its domain and (if applicable) form domain shall be denoted via $\mathcal{D}(T)$ and $\mathcal{Q}(T)$; we shall denote by $T^*$ the adjoint of $T$, defined via
\begin{equation}
\Braket{\Psi,T\Phi}_{\mathcal{K}}=\Braket{T^*\Psi,\Phi}_{\mathcal{K}},\qquad \Phi\in\mathcal{D}(T),\;\Psi\in\mathcal{D}(T^*).
\end{equation}
Bounded (or, equivalently, continuous) operators are understood to be defined with domain $\mathcal{D}(T)=\mathcal{K}$. The space of bounded operators on $\mathcal{K}$ will be denoted as $\mathcal{B}(\mathcal{K})$. Finally, given two distinct Hilbert spaces $\mathcal{K}_1,\mathcal{K}_2$, the space of bounded (continuous) operators between $\mathcal{K}_1$ and $\mathcal{K}_2$ will be denoted by $\mathcal{B}(\mathcal{K}_1,\mathcal{K}_2)$.

\section{Preliminaries: operators on Fock spaces}\label{sec:fock}
For completeness, and to fix the notation, we will recall in the present section some known properties of Fock spaces that will be needed in our discussion. Subsection~\ref{subsec:fock} is devoted to the basic definitions, while Subsection~\ref{subsec:creation} is devoted to the (regular) bosonic creation and annihilation operators. See e.g.~\cite{reed1975fourier,derezinski2013mathematics,bratteli1987operator,folland2021quantum} for a thorough introduction to the subject. Finally, in Subsection~\ref{subsec:gsb} we introduce the class of  (regular) generalized spin-boson models (GSB) and discuss some examples.

\subsection{Fock spaces and second-quantized operators}\label{subsec:fock}
Here we will recall the main properties of symmetric Fock spaces and discuss some operators on them. For simplicity, we shall always consider Fock spaces constructed on a Hilbert space $\hilb=L^2_\mu(X)$, with $(X,\mu)$ being a Borel measure space; however, our discussion will be largely independent of this choice. 
\begin{definition}[Symmetric Fock space]
	Let $\hilb^{(0)}=\mathbb{C}$ and, for $n\geq1$,
	\begin{equation}
	\hilbn=\bigotimes_{j=1}^n\hilb\simeq\left\{\Psi^{(n)}:X^n\rightarrow\mathbb{C}:\int|\Psi^{(n)}(k_1,\dots,k_n)|^2\,\mathrm{d}^n\mu<\infty\right\},
	\end{equation}
endowed with the scalar product
\begin{equation}
	\Braket{\Psi^{(n)},\Phi^{(n)}}_{\hilbn}=\int\overline{\Psi^{(n)}(k_1,\dots,k_n)}\Phi^{(n)}(k_1,\dots,k_n)\,\mathrm{d}^n\mu.
\end{equation}
The \textit{symmetric Fock space} $\fock(\hilb)$ on $\hilb$ is the space
\begin{equation}
\fock(\hilb)=\bigoplus_{n\in\mathbb{N}}S_n\hilbn
\end{equation}
with $S_n$ being the symmetrization operator on $\hilbn$, i.e.
\begin{equation}
	\left(S_n\Psi^{(n)}\right)(k_1,\ldots,k_n)=\frac{1}{n!}\sum_{\sigma\in\mathfrak{S}_n}\Psi^{(n)}(k_{\sigma_1},\ldots,k_{\sigma_n}),
\end{equation}
with $\mathfrak{S}_n$ being the group of permutations on $\{1,\dots,n\}$. The space $\hilb^{(n)}$ will be referred to as the $n$-particle subspace of $\fock(\hilb)$; the vacuum state of the field, i.e. the unique (up to a phase) normalized element of $\hilb^{(0)}$, will be denoted as $\Omega$. 
\end{definition}
In the following we will use the shorthand $\fock(\hilb)\equiv\fock$. The scalar product and norm on the symmetric Fock space are therefore given by
\begin{equation}
\Braket{\Psi,\Phi}_{\fock}=\sum_{n\in\mathbb{N}}\braket{\Psi^{(n)},\Phi^{(n)}}_{\hilbn},\quad\bigl\|\Psi\bigr\|^2_{\fock}=\sum_{n\in\mathbb{N}}\left\|\Psi^{(n)}\right\|^2_{\hilbn};
\end{equation}
the elements of $\fock$ are sequences $\Psi=\{\Psi^{(n)}\}_{n\in\mathbb{N}}$ such that $\|\Psi\|_\fock<\infty$ and, for all $n\in\mathbb{N}$, $S_n\Psi^{(n)}=\Psi^{(n)}$, that is, they are invariant under any permutation of the variables:
\begin{equation}
	\Psi^{(n)}(k_1,\ldots,k_n)=\Psi^{(n)}(k_{\sigma_1},\ldots,k_{\sigma_n})
\end{equation}
for all $\sigma\in\mathfrak{S}_n$. Necessarily, all operators on $\focks$ must map completely symmetric states into completely symmetric states.

\begin{definition}[Second quantization]
Let $T$ be a densely defined, closed operator on the single-particle space $\hilb$. Its \textit{second quantization} $\mathrm{d}\Gamma(T)$ is the operator on $\focks$ defined via
\begin{equation}
	\mathrm{d}\Gamma(T)=\bigoplus_{n\in\mathbb{N}} T^{(n)},\qquad T^{(n)}=\sum_{j=1}^n\left( I\otimes\dots\otimes\overbrace{T}^{j\text{th}}\otimes\dots\otimes I\right),
\end{equation}
where we set $T^{(0)}=0$.
\end{definition}
By definition, its domain is given by
\begin{equation}
\mathcal{D}\left(\mathrm{d}\Gamma(T)\right)=\left\{\Psi\in\focks:\,\Psi^{(n)}\in\mathcal{D}\left(T^{(n)}\right),\;\sum_{n\in\mathbb{N}}\left\|T^{(n)}\Psi^{(n)}\right\|_{\hilbn}^2<\infty\right\},
\end{equation}
where, for all $n\geq1$,
\begin{equation}
\mathcal{D}\left(T^{(n)}\right)=\bigotimes_{j=1}^n\mathcal{D}(T).
\end{equation}
Notice that $\mathrm{d}\Gamma(T)$ is a legitimate operator on the symmetric Fock space $\focks$ since, by construction, it preserves the complete symmetry of the vectors. By the properties of direct sums (see e.g.~\cite{nussbaum1964reduction}), it is a densely defined closed operator on $\focks$, and it is self-adjoint if and only if $T$ is self-adjoint. Two fundamental examples follow. \\

\textbf{Number operator}. The second quantization of the identity on $\hilb$, $N=\mathrm{d}\Gamma(I)$, is the number operator:
\begin{equation}
\mathcal{D}(N)=\left\{\Psi\in\focks:\,\sum_{n\in\mathbb{N}} n^2\left\|\Phi^{(n)}\right\|^2_{\hilbn}<\infty\right\},\quad N\Psi^{(n)}=n\Psi^{(n)},
\end{equation}
its spectrum being the set $\mathbb{N}$ of nonnegative integers. Notice that, while $I$ is a bounded operator on $\hilb$, $N$ is obviously unbounded on $\focks$; in general, the second quantization operator of every single-particle operator but the null one is an unbounded operator. Physically, since the number of particles is allowed to be arbitrarily large, so is the average value of every single-particle observable.\\

\textbf{Second quantization of a multiplication operator}. Let $\omega$ a real-valued Borel measurable function; with a slight abuse of notation, the same symbol $\omega$ will be used for the multiplication operator associated with it, that is,
\begin{equation}
\left(\omega\psi\right)(k)=\omega(k)\psi(k),
\end{equation}
with domain
\begin{equation}
\mathcal{D}(\omega)=\left\{\psi\in\hilb:\,\int\omega(k)^2\,|\psi(k)|^2\,\mathrm{d}\mu<\infty \right\};
\end{equation}
then the operator $\omega^{(n)}$ on $\hilbn$ acts, for $n\geq1$, as
\begin{equation}
\left(\omega^{(n)}\Psi^{(n)}\right)(k_1,\dots,k_n)=\left(\sum_{j=1}^n\omega(k_j)\right)\Psi^{(n)}(k_1,\dots,k_n),
\end{equation}
and the second quantization of $\omega$ on the symmetric Fock space has domain
\begin{equation}
\mathcal{D}(\dOmega)=\left\{\Psi\in\focks:\,\sum_{n\in\mathbb{N}}\int\left(\sum_{j=1}^n\omega(k_j)\right)^2\left|\Psi^{(n)}(k_1,\dots,k_n)\right|^2\;\mathrm{d}^n\mu<\infty\right\}.
\end{equation}
The following straightforward property holds:
\begin{proposition}\label{prop:numb}
	Suppose that
	\begin{equation}
	m=\inf_{k\in X}\omega(k)\geq0;
	\end{equation}
	then, for all $s\geq0$, we have $\mathcal{D}(\dOmega^{s/2})\subset\mathcal{D}(N^{s/2})$ and, for all $\Psi\in\mathcal{D}(\dOmega^{s/2})$,
	\begin{equation}\label{eq:diseq}
	\left\|\dOmega^{s/2}\Psi\right\|_\fock\geq m^{s/2}\,\left\|N^{s/2}\Psi\right\|_\fock.
	\end{equation}
\end{proposition}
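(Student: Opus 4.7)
The plan is to exploit the fact that $\mathrm{d}\Gamma(\omega)$ and $N$ are both diagonal in the Fock grading and, on each $n$-particle sector, both act as multiplication operators (by $\sum_{j=1}^n \omega(k_j)$ and by $n$, respectively). In particular they commute, and the required inequality will reduce, at the level of spectral multipliers, to a trivial pointwise inequality.

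First, I observe that since $\omega(k) \geq m \geq 0$ for all $k \in X$, the operator $\mathrm{d}\Gamma(\omega)$ is nonnegative and self-adjoint, so the fractional power $\mathrm{d}\Gamma(\omega)^{s/2}$ is well-defined through the Borel functional calculus for any $s \geq 0$. Being a direct sum, $\mathrm{d}\Gamma(\omega)^{s/2}$ acts on the $n$-particle sector as the multiplication operator by $\left(\sum_{j=1}^n \omega(k_j)\right)^{s/2}$, with the natural domain consisting of those $\Psi^{(n)} \in \hilbn$ for which this multiplication is square-integrable. The analogous statement for $N^{s/2}$ is immediate: it acts on $\hilbn$ as multiplication by the scalar $n^{s/2}$.

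The pointwise bound $\sum_{j=1}^n \omega(k_j) \geq nm$ together with monotonicity of $x \mapsto x^{s/2}$ on $[0,\infty)$ for $s \geq 0$ yields
\begin{equation}
\left(\sum_{j=1}^n \omega(k_j)\right)^{s/2} \;\geq\; (nm)^{s/2} \;=\; m^{s/2}\, n^{s/2}
\end{equation}
almost everywhere on $X^n$. For any $\Psi \in \mathcal{D}(\mathrm{d}\Gamma(\omega)^{s/2})$, squaring, multiplying by $|\Psi^{(n)}(k_1,\dots,k_n)|^2$, integrating against $\mathrm{d}^n\mu$, and summing over $n$ gives
\begin{equation}
\left\|\mathrm{d}\Gamma(\omega)^{s/2}\Psi\right\|_\fock^2
= \sum_{n\in\mathbb{N}}\int \left(\sum_{j=1}^n \omega(k_j)\right)^{s}\left|\Psi^{(n)}\right|^2 \mathrm{d}^n\mu
\;\geq\; m^{s}\sum_{n\in\mathbb{N}} n^{s}\,\|\Psi^{(n)}\|_{\hilbn}^2,
\end{equation}
which is precisely $m^{s}\|N^{s/2}\Psi\|_\fock^2$, establishing~\eqref{eq:diseq}. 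The domain inclusion then follows for free: finiteness of the left-hand side forces finiteness of the right-hand side, so $\Psi \in \mathcal{D}(N^{s/2})$.

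I do not foresee a genuine obstacle; the only mild care is in invoking the functional calculus for $\mathrm{d}\Gamma(\omega)^{s/2}$ in a way that is consistent with the fibered description on the $n$-particle sectors. This is standard because the direct-sum structure intertwines with the spectral theorem (the spectral measure of $\mathrm{d}\Gamma(\omega)$ decomposes as the sum of the pushforwards of $\mathrm{d}^n\mu$ under $(k_1,\dots,k_n)\mapsto \sum_j \omega(k_j)$), and one could equivalently avoid the calculus entirely by \emph{defining} $\mathrm{d}\Gamma(\omega)^{s/2}$ sectorwise as the above multiplication operator.
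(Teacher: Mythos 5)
Your proof is correct and is exactly the argument the paper has in mind: the paper's own proof is the one-line remark ``trivial consequence of the inequality $\omega(k)\geq m\geq0$'', and your sectorwise multiplication-operator computation is simply the spelled-out version of that observation. No issues.
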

\begin{proof}
Trivial consequence of the inequality $\omega(k)\geq m\geq0$.
\end{proof}
For simplicity, we shall always suppose $m>0$ hereafter. We remark that, even if $m>0$, $\dOmega$ is \textit{not} a strictly positive operator, since $\dOmega\Omega=0$; however, $\dOmega$ is indeed strictly positive on all sectors $\hilb^{(n)}$ with $n\geq1$.

\subsection{Creation and annihilation operators}\label{subsec:creation}
We will now introduce the bosonic creation and annihilation operators on $\focks$ associated with an element $f\in\hilb$ of the single-particle space $\hilb$, which enter crucially in the definition of generalized spin-boson models. The primary goal of Section~\ref{sec:singcreation} will be to generalize the construction presented here.

\begin{definition}[Creation and annihilation operators]
	Let $f\in\hilb$. The \textit{creation operator} $a^\dag(f)$ and the \textit{annihilation operator} $a(f)$ are the operators on $\focks$ with domain\footnote{Usually (see e.g.~\cite{reed1975fourier,bratteli1987operator}), the creation and annihilation operators are equivalently introduced by defining them on the (dense) subspace of all Fock states with finite number of particles, that is, $\Psi^{(n)}=0$ for sufficiently large $n$, and then taking the closure.} $\mathcal{D}(\a{f})=\mathcal{D}(\adag{f})$ given by
	\begin{equation}
\mathcal{D}(\a{f})=\left\{\Psi\in\focks:\,\sum_{n\in\mathbb{N}}n \left|\int\overline{f(k_n)}\Psi^{(n)}(k_1,\dots,k_{n-1},k_n)\;\mathrm{d}\mu(k_n)\right|^2<\infty\right\}
	\end{equation}	
	acting as follows: $\a{f}\Omega=0$ and, for $n\geq1$,
	\begin{equation}\label{eq:af_explicit}
		\left(\a{f}\Psi^{(n)}\right)(k_1,\dots,k_{n-1})=\sqrt{n}\int\overline{f(k_n)}\Psi^{(n)}(k_1,\dots,k_{n-1},k_n)\;\mathrm{d}\mu(k_n),
	\end{equation}
	and, for all $n\geq0$,
	\begin{eqnarray}\label{eq:adagf_explicit}
		\left(\adag{f}\Psi^{(n)}\right)(k_1,\dots,k_{n},k_{n+1})&=&\frac{1}{\sqrt{n+1}}\bigg(\sum_{j=1}^n\Psi^{(n)}(k_1,\dots,\overbrace{k_{n+1}}^{j\text{th}},\dots,k_n)f(k_j)\nonumber\\&&+\Psi^{(n)}(k_1,\dots,k_n)f(k_{n+1})\bigg).
	\end{eqnarray}
\end{definition}
By construction, for all $n\in\mathbb{N}$,
\begin{equation}\label{eq:ladder}
	\a{f}\hilb^{(n+1)}\subset\hilb^{(n)},\qquad\adag{f}\hilb^{(n)}\subset\hilb^{(n+1)}.
\end{equation} 
Both $a(f),a^\dag(f)$ are known to be densely defined, closed and unbounded operators on $\focks$ satisfying
\begin{equation}
	\a{f}^*=\adag{f},
\end{equation}
i.e. they are mutually adjoint~\cite{reed1975fourier,nelson1964interaction,falconi2015self}, and satisfy the well-known commutation property:
\begin{equation}\label{eq:commute}
\left[\a{f},\adag{g}\right]\Psi:=\left(\a{f}\adag{g}-\adag{g}\a{f}\right)\Psi=\braket{f,g}\,\Psi
\end{equation}
for all vectors $\Psi\in\focks$ such that the left-hand side of Eq.~\eqref{eq:commute} is well-defined. Physically, the unboundedness of the creation operators reflects the absence of a bound of the number of bosons which can occupy a given state, differently from what happens to fermions~\cite{bratteli1987operator}.

In particular, their natural domain contains $\mathcal{D}(N^{1/2})$, i.e. the space of all Fock states with a finite average number of particles:
\begin{proposition}\label{prop:domaf}
	Let $f\in\hilb$; then $\mathcal{D}(\a{f})\supset\mathcal{D}(N^{1/2})$.
\end{proposition}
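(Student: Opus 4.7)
The plan is to reduce the membership condition defining $\mathcal{D}(\a{f})$ to the defining condition of $\mathcal{D}(N^{1/2})$ via a single application of the Cauchy--Schwarz inequality on the single-particle space $\hilb$. Concretely, fix $\Psi\in\mathcal{D}(N^{1/2})$, which by the explicit form of the number operator in Section~\ref{subsec:fock} is equivalent to
\begin{equation}
\|N^{1/2}\Psi\|_\fock^2=\sum_{n\in\mathbb{N}}n\,\bigl\|\Psi^{(n)}\bigr\|_{\hilbn}^2<\infty.
\end{equation}
The goal is then to control $\|\a{f}\Psi^{(n)}\|_{\hilb^{(n-1)}}^2$ by $n\|f\|_\hilb^2\,\|\Psi^{(n)}\|_{\hilbn}^2$ for each $n\geq1$, which upon summation yields $\|\a{f}\Psi\|_\fock^2\leq\|f\|_\hilb^2\,\|N^{1/2}\Psi\|_\fock^2$; this bound simultaneously gives $\Psi\in\mathcal{D}(\a{f})$ and the desired inclusion.

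The key estimate is obtained from~\eqref{eq:af_explicit}: for fixed $(k_1,\dots,k_{n-1})$, the quantity inside the absolute value is the scalar product in $\hilb=L^2_\mu(X)$ between $f$ and the function $k_n\mapsto\Psi^{(n)}(k_1,\dots,k_{n-1},k_n)$. By Cauchy--Schwarz,
\begin{equation}
\left|\int\overline{f(k_n)}\,\Psi^{(n)}(k_1,\dots,k_{n-1},k_n)\,\mathrm{d}\mu(k_n)\right|^2\leq\|f\|_\hilb^2\int\bigl|\Psi^{(n)}(k_1,\dots,k_{n-1},k_n)\bigr|^2\,\mathrm{d}\mu(k_n).
\end{equation}
Integrating both sides over $(k_1,\dots,k_{n-1})$ with respect to $\mathrm{d}^{n-1}\mu$ and multiplying by the prefactor $n$ coming from~\eqref{eq:af_explicit}, I obtain
\begin{equation}
\bigl\|\a{f}\Psi^{(n)}\bigr\|_{\hilb^{(n-1)}}^2\leq n\,\|f\|_\hilb^2\,\bigl\|\Psi^{(n)}\bigr\|_{\hilbn}^2.
\end{equation}

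Summing over $n$, one gets $\sum_n\|\a{f}\Psi^{(n)}\|_{\hilb^{(n-1)}}^2\leq\|f\|_\hilb^2\|N^{1/2}\Psi\|_\fock^2<\infty$, which is precisely the defining condition for $\Psi$ to belong to $\mathcal{D}(\a{f})$, and simultaneously establishes the operator bound $\|\a{f}\Psi\|_\fock\leq\|f\|_\hilb\,\|N^{1/2}\Psi\|_\fock$. There is no real obstacle: the argument is essentially a single Cauchy--Schwarz plus Fubini to commute the integration in $(k_1,\dots,k_{n-1})$ with the bounding, and it is precisely this kind of estimate which will need to be generalized in Section~\ref{sec:singcreation} to accommodate non-normalizable $f$, where $\|f\|_\hilb^2$ will have to be replaced by the weighted norm $\int|f(k)|^2/\omega(k)^s\,\mathrm{d}\mu(k)$ at the cost of using $\mathrm{d}\Gamma(\omega)^{s/2}$ in place of $N^{1/2}$.
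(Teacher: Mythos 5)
Your argument is correct and coincides with the paper's own proof: both apply the Cauchy--Schwarz inequality in $\hilb$ to the inner integral in~\eqref{eq:af_explicit}, integrate over the remaining variables to obtain $\|\a{f}\Psi^{(n)}\|_{\hilb^{(n-1)}}^2\leq n\,\|f\|^2\,\|\Psi^{(n)}\|_{\hilbn}^2$, and sum over $n$ to conclude $\|\a{f}\Psi\|_\fock\leq\|f\|\,\|N^{1/2}\Psi\|_\fock$. Nothing is missing.
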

\begin{proof}
Let $\Psi\in\mathcal{D}(N^{1/2})$. Then, applying the Cauchy-Schwartz inequality, for all $n\geq1$ we have
\begin{eqnarray}
	\left|\left(\a{f}\Psi^{(n)}\right)(k_1,\dots,k_{n-1})\right|^2&=&\left|\sqrt{n}\int\overline{f(k_n)}\Psi^{(n)}(k_1,\dots,k_{n-1},k_n)\,\mathrm{d}\mu(k_n)\right|^2\nonumber\\
	&\leq&n\|f\|^2\int|\Psi^{(n)}(k_1,\dots,k_{n-1},k_n)|^2\,\mathrm{d}\mu(k_n),
\end{eqnarray}
hence, integrating on the variables $k_1,\dots,k_{n-1}$,
\begin{eqnarray}
\left\|\a{f}\Psi^{(n)}\right\|_{\hilb^{(n-1)}}^2\leq\|f\|^2\,n\|\Psi^{(n)}\|^2_{\hilbn}
\end{eqnarray}
and therefore
\begin{equation}
\left\|\a{f}\Psi\right\|^2_\fock\leq\|f\|^2\sum_{n\in\mathbb{N}}n\|\Psi^{(n)}\|^2_{\hilbn}=\|f\|^2\|N^{1/2}\Psi\|^2_\fock<\infty,
\end{equation}
implying $\Psi\in\mathcal{D}(\a{f})$.
\end{proof}
As an immediate consequence of Props.~\ref{prop:numb} and~\ref{prop:domaf}, for every $f\in\hilb$ we have
\begin{equation}\label{eq:domaf}
	\mathcal{D}(a(f))\supset\mathcal{D}(\dOmega^{1/2})\supset\mathcal{D}(\dOmega).
\end{equation}

\subsection{Generalized spin-boson (GSB) models}\label{subsec:gsb}
We can now introduce the class of generalized spin-boson (GSB) models.
\begin{definition}
	Let $\mathfrak{h}$ a Hilbert space and $A\in\mathcal{B}(\mathfrak{h})$ a nonnegative bounded\footnote{Here, to keep the discussion simple, we are only considering bounded operators on $\hfrak$ (which is indeed the case whenever $\hfrak$ is finite-dimensional, as it usually is in applications); however, GSB models with unbounded operators on the space $\mathfrak{h}$ could indeed be considered, see e.g.~\cite{arai1997existence}. The nonnegativity hypothesis is also easily amendable.} self-adjoint operator; define the self-adjoint operator on the Hilbert space $\hfrak=\mathfrak{h}\otimes\focks$ via
	\begin{equation}\label{eq:def_h0}
		H_0=A\otimes I+I\otimes\dOmega.
	\end{equation}
	with domain $\mathcal{D}(H_0)=\mathfrak{h}\otimes\mathcal{D}(\dOmega)$.
	
	Given $f_1,\dots,f_r\in\hilb$,  $B_1,\dots,B_r\in\mathcal{B}(\mathfrak{h})$, and a coupling constant $\lambda\in\mathbb{R}$, a \textit{generalized spin-boson model} (GSB model)~\cite{arai1997existence} is an operator on $\hfrak$, with domain $\mathcal{D}(H_{f_1,\dots,f_r})=\mathcal{D}(H_0)$, defined via
	\begin{equation}\label{eq:def_gsb}
		H_{f_1,\dots,f_r}=H_0+\lambda\sum_{j=1}^r\left(B_j\otimes a^\dag(f_j)+B_j^*\otimes a(f_j)\right)
	\end{equation}
	 $f_1,\dots,f_r$ are the \textit{form factors} of the model. 
\end{definition}
Notice that the operator in Eq.~\eqref{eq:def_gsb} is well-defined on $\mathcal{D}(H_0)$ because of Eq.~\eqref{eq:domaf}. The physical meaning of this model is transparent: $H_0$ is the Hamiltonian associated with the free energy of a quantum system (e.g. an ensemble of atoms), with free Hamiltonian $A$, and a boson field with dispersion relation $\omega$; the interaction term is constructed in such a way that either
\begin{itemize}
	\item a boson with wavefunction $f_j$ is annihilated and the operator $B_j^*$ is applied to the system, or
	\item a boson with wavefunction $f_j$ is created and the operator $B_j$ is applied to the system.
\end{itemize}
For every choice of the parameters, GSB models can be shown to be self-adjoint operators~\cite[Prop.~1.1]{arai1997existence}.
\begin{remark}
An alternative (and equivalent) representation of such models, which is often found in the literature, is the following one:
	\begin{equation}\label{eq:def_gsb2}
H_{f_1,\dots,f_r}=H_0+\lambda\sum_{j=1}^rC_j\otimes\phi(f_j),\qquad C_j=C_j^*,
\end{equation}
where $\phi(f)$, denoted as the Segal field operator~\cite{reed1975fourier}, is simply defined via
\begin{equation}
	\phi(f)=\frac{1}{\sqrt{2}}\left(a(f)+a^\dag(f)\right).
\end{equation}
\end{remark} 
Reprising the discussion in the introductory section, let us elaborate more on the particular GSB models listed in the Introduction, cf.~Eqs.~\eqref{eq:intro_sb}--\eqref{eq:intro_sb_pd}.\\

\textbf{The spin-boson model}. Given $\mathfrak{h}=\mathbb{C}^2$, let $H_0$ be as in Eq.~\eqref{eq:def_h0} where we set
\begin{equation}\label{eq:freesb}
A=\begin{pmatrix}
\omega_{\mathrm{e}}&0\\0&\omega_{\mathrm{g}}
\end{pmatrix},
\end{equation}
the latter being the energy of a two-level system with excited and ground energy respectively equal to $\omega_{\mathrm{e}}$ and $\omega_{\mathrm{g}}$. The \textit{spin-boson model} is defined via
\begin{equation}\label{eq:def_sb}
H_{f}=H_0+\lambda\sigma_x\otimes\left(a(f)+a^\dag(f)\right),
\end{equation}
with $\sigma_x$ being the first Pauli matrix, cf.~Eq.~\eqref{eq:pauli}. This model describes the interaction between a two-level system (spin) and a structured boson field, and is encountered in many branches of physics (see references in the Introduction). We remark that, when choosing the boson field to be monochromatic (i.e. $\mu$ is a Dirac measure), the spin-boson model reduces to the Rabi model. A generalization of the model, describing an ensemble of $r$ atoms each solely interacting with the field, can be easily constructed by choosing $\mathfrak{h}=\mathbb{C}^{2r}$,
\begin{equation}\label{eq:freesbn}
	A=\bigoplus_{j=1}^r\left( I\otimes\dots\otimes A_j\otimes\dots\otimes I\right),\qquad A_j=\begin{pmatrix}
		\omega^j_{\mathrm{e}}&0\\0&\omega^j_{\mathrm{g}}
	\end{pmatrix},
\end{equation}
and
\begin{equation}
	H_f=H_0+\lambda\sum_{j=1}^r\sigma^j_x\otimes\left(\a{f_j}+\adag{f_j}\right),
\end{equation}
with $H_0$ again as in Eq.~\eqref{eq:h02}, and
\begin{equation}
	\sigma_x^j=\bigoplus_{j=1}^r\left( I\otimes\dots\otimes \overbrace{\sigma_x}^{j\text{th}}\otimes\dots\otimes I\right);
\end{equation}
in the monochromatic case, this is called the Dicke model~\cite{garraway2011dicke,emary2003chaos,wang1973phase}.\\

\textbf{The rotating-wave spin-boson model}. A variation of the spin-boson model introduced above is the following:
\begin{equation}\label{eq:def_sbrwa}
H_{f}=H_0+\lambda\left(\sigma_+\otimes a(f)+\sigma_-\otimes a^\dag(f)\right),
\end{equation}
with the matrices $\sigma_\pm$ as in Eq.~\eqref{eq:pauli}. Since $\sigma_z=\sigma_+-\sigma_-$, it is immediate to show that the Hamiltonian above can be obtained by expanding the spin-boson model in Eq.~\eqref{eq:def_sb} and neglecting the two terms $\sigma_+\otimes a^\dag(f)$ and $\sigma_-\otimes a(f)$. Such a procedure is usually referred to as a \textit{rotating-wave approximation} (RWA), which is often invoked in the small-coupling regime. As we will extensively discuss in Section~\ref{sec:singrwa}, this model preserves the total number of excitations (Prop.~\ref{prop:numexc}), which makes it far easier to solve. The model is also fundamental in the theory of open quantum systems: its reduced dynamics on $\mathbb{C}^2$ corresponds to the amplitude-damping (AD) qubit channel.

We will refer to this model as the rotating-wave spin-boson model; this is sometimes also referred to as a Wigner-Weisskopf model~\cite{hirokawa2001remarks}. When choosing a monochromatic field, it reduces to the \textit{Jaynes-Cummings model}, which is again ubiquitous in quantum optics.

Finally, an $r$-atom generalization of this model can be readily constructed:
\begin{equation}
	H_f=H_0+\lambda\sum_{j=1}^r\left(\sigma^j_+\otimes\a{f_j}+\sigma^j_-\otimes\adag{f_j}\right),
\end{equation}
where
\begin{equation}
	\sigma_\pm^j=\bigoplus_{j=1}^r\left( I\otimes\dots\otimes \overbrace{\sigma_\pm}^{j\text{th}}\otimes\dots\otimes I\right);
\end{equation}
in the monochromatic case, this is known as the Tavis-Cummings model~\cite{bogoliubov1996exact,fink2009dressed}.\\

\textbf{The dephasing-type spin-boson model}. Another interesting model, which we shall refer to as the \textit{dephasing-type spin-boson model}, is the following one:
\begin{equation}\label{eq:def_sb_pd}
	H_{f}=H_0+\lambda\sigma_z\otimes\left(a(f)+a^\dag(f)\right),
\end{equation}
which is obtained by replacing $\sigma_x$, in Eq.~\eqref{eq:def_sb}, with the third Pauli matrix $\sigma_z$, cf.~Eq.~\eqref{eq:pauli}. 

Despite its similarity with the spin-boson models in Eq.~\eqref{eq:def_sb}--\eqref{eq:def_sbrwa}, the dynamics induced by the Hamiltonian in Eq.~\eqref{eq:def_sb_pd} greatly differs from those since it involves no transitions between the two spin states: in fact, while the reduced dynamics on $\mathbb{C}^2$ induced by the model in Eq.~\eqref{eq:def_sbrwa} is an amplitude-damping (AD) qubit channel, the one induced by Eq.~\eqref{eq:def_sb_pd} is a phase-damping (PD), or dephasing, qubit channel. As such, the dephasing-type spin-boson model represents the paradigmatic toy model to study decoherence phenomena in open quantum systems.

\section{Singular creation and annihilation operators}\label{sec:singcreation}

The first step towards a rigorous implementation of GSB models with non-normalizable form factors necessarily involves a redefinition of the creation and annihilation operators $\adag{f},a(f)$ introduced in the previous section. The basic idea is simple: even if
\begin{equation}\label{eq:normalization}
	\int|f(k)|^2\,\mathrm{d}\mu=+\infty,
\end{equation}
we may still have\footnote{It is worth mentioning that, when $m=0$, this is no longer the case: Eq.~\eqref{eq:normalization} is not a stronger condition than Eq.~\eqref{eq:weaker}. In fact, mathematicians have often analyzed the \textit{converse} situation: $f$ is normalizable but Eq.~\eqref{eq:weaker} does not hold. In physicists' jargon, this is an example of an infrared divergence, while, in the present paper, we are rather analyzing form factors with an ultraviolet divergence.}
\begin{equation}\label{eq:weaker}
	\int\frac{|f(k)|^2}{\omega(k)^s}\,\mathrm{d}\mu<+\infty
\end{equation}
for some $s>0$, since we are assuming $\omega(k)\geq m>0$. An example is the form factor in Eq.~\eqref{eq:wqed} in the Introduction, which is not normalizable but satisfies the weaker normalization constraint~\eqref{eq:weaker} for every $s>0$.

This simple observation will lead us to the concept of scales of Hilbert spaces. After recalling in Subsection~\ref{subsec:scale} the standard construction of the scale of Hilbert spaces associated with a nonnegative self-adjoint operator, in Subsection~\ref{subsec:creationscalefock} we will introduce two scales associated with the operators $\omega$ and $\dOmega$, and define a generalization of creation and annihilation operators as continuous maps on them, which will allow us to give a sense to the formal expressions $\a{f},\adag{f}$ even if $f$ is not normalizable, provided that Eq.~\eqref{eq:weaker} holds for some $s\geq1$. The approximation of singular creation and annihilation operators via sequences of standard ones will be finally discussed in Subsection~\ref{subsec:approx}.

This construction will be crucially employed in Sections~\ref{sec:singgsb}--\ref{sec:singrwa2} in order to construct GSB models with non-normalizable form factors.

\subsection{Generalities on scales of Hilbert spaces}\label{subsec:scale}
We will now revise here, in an abstract setting, some basic definition and properties of the scales of Hilbert spaces; see e.g.~\cite{albeverio2000singular,albeverio2007singularly,simon1995spectral} for further details and applications. The reader familiar with this formalism may jump directly to Subsection~\ref{subsec:creationscalefock}.
\begin{definition}
	Let $\mathcal{K}$ be a Hilbert space and $T_0$ a nonnegative self-adjoint operator on it. For all $s\in\mathbb{R}$, the space $\mathcal{K}_s$ is the completion of the set $\bigcap_{n\in\mathbb{N}}\mathcal{D}(T_0^n)$ with respect to the scalar product\footnote{We may equivalently use $|T_0-z_0|$ for any $z_0$ in the resolvent set of $T_0$; all such choices would yield equivalent norms. In particular, if $T_0$ is strictly positive (i.e. $T_0\geq\epsilon>0$ for some $\epsilon>0$), we can replace $T_0+1$ with $T_0$ in Eq.~\eqref{eq:scalar} as well as in the remainder of this discussion.}
	\begin{equation}\label{eq:scalar}
		\phi,\psi\mapsto\braket{\phi,\psi}_{\mathcal{K}_s}=\Braket{(T_0+1)^{s/2}\phi,(T_0+1)^{s/2}\xi}_{\mathcal{K}}.
	\end{equation}
	The family of spaces $\{\mathcal{K}_s\}_{s\in\mathbb{R}}$ is the \textit{scale of Hilbert spaces} associated with $T_0$ ($T_0$-scale).
\end{definition}
The denomination follows from the fact that, by construction, we have $\mathcal{K}_s\subset\mathcal{K}_{s'}$ whenever $s\geq s'$, all inclusions being dense. In particular, $\mathcal{K}_0=\mathcal{K}$, while $\mathcal{K}_1$ and $\mathcal{K}_2$ coincide respectively with the form domain $\mathcal{Q}(T_0)$ and the domain $\mathcal{D}(T_0)$ of the operator $T_0$.

By construction, for all $r,s\in\mathbb{R}$ the operator $(T_0+1)^{r/2}$ can be continuously extended to a continuous isometry between the Hilbert spaces $\mathcal{K}_s$ and $\mathcal{K}_{s-r}$; with an abuse of notation, we will still denote such an operator with the same symbol and say, in the various cases, that $T_0$ is ``interpreted'' either as an unbounded operator on $\mathcal{K}$, or as a bounded operator between two members of the scale.

Besides, for all $s\geq0$, the spaces $\mathcal{K}_{+s}$ and $\mathcal{K}_{-s}$ are dual under the pairing
\begin{equation}\label{eq:dualitypairing}
	\phi\in\mathcal{K}_{+s},\psi\in\mathcal{K}_{-s}\mapsto\left(\phi,\psi\right)_{\mathcal{K}_{+s},\mathcal{K}_{-s}}=\Braket{(T_0+1)^{s/2}\phi,(T_0+1)^{-s/2}\psi}_{\mathcal{K}},
\end{equation}
with $(T_0+1)^{\pm s/2}$ above being interpreted as isometries between $\mathcal{K}_{\pm s}$ and $\mathcal{K}$. We note that the triple of Hilbert spaces $(\mathcal{K}_{-s},\mathcal{K},\mathcal{K}_{+s})$ is an example of a Gelfand triple~\cite{bohm1974rigged,de2005role,bohm1989dirac}. Eq.~\eqref{eq:dualitypairing} easily implies a Cauchy-Schwartz-like inequality:
\begin{equation}\label{eq:cauchyschwartzlike}
	\left|\left(\phi,\psi\right)_{\mathcal{K}_{+s},\mathcal{K}_{-s}}\right|\leq\|\phi\|_{\mathcal{K}_{+s}}\|\psi\|_{\mathcal{K}_{-s}}.
\end{equation}
Continuous operators between $\mathcal{K}_{+s}$ and $\mathcal{K}_{-s}$ are also associated with sesquilinear forms. Indeed, given $T_1\in\mathcal{B}(\mathcal{K}_{+s},\mathcal{K}_{+s})$, we can define the form
\begin{equation}
	\phi,\xi\in\mathcal{K}_{+s}\mapsto t_1(\phi,\xi)=\left(\phi,T_1\xi\right)_{\mathcal{K}_{+s},\mathcal{K}_{-s}}\in\mathbb{C}.
\end{equation}
This can be interpreted as an unbounded form on $\mathcal{K}$ with domain $\mathcal{K}_{+s}$; we will say that $T_1$ is symmetric if the associated form is symmetric. In particular, $T_0$ itself, interpreted as an operator between $\mathcal{K}_{+1}$ and $\mathcal{K}_{-1}$, is uniquely associated with the sesquilinear form
\begin{equation}
	\phi,\xi\in\mathcal{K}_{+1}\mapsto t_0(\phi,\xi)=\left(\phi,T_0\xi\right)_{\mathcal{K}_{+1},\mathcal{K}_{-1}}\in\mathbb{C}.
\end{equation}
As a simple consequence of the well-known KLMN theorem~\cite{teschl2009mathematical,simon2015quantum}, the correspondence between sesquilinear forms and continuous operators between $\mathcal{K}_{+1}$ and $\mathcal{K}_{-1}$ can be used to define unbounded self-adjoint operators on $\mathcal{K}$ by means of continuous operators on the Fock scale. For future convenience, we will state the result explicitly.
\begin{proposition}\label{prop:klmn_revisited}
	Let $T_0$ a nonnegative self-adjoint operator on $\mathcal{K}$, $\{\mathcal{K}_s\}_{s\in\mathbb{R}}$ the associated $T_0$-scale, and $T_1\in\mathcal{B}(\mathcal{K}_{+1},\mathcal{K}_{-1})$ symmetric. Then, for sufficiently small $\lambda$, the continuous operator
	\begin{equation}
T_\lambda\equiv T_0+\lambda T_1\in\mathcal{B}\left(\mathcal{K}_{+1},\mathcal{K}_{-1}\right)
	\end{equation}
	is uniquely associated with a self-adjoint operator on $\mathcal{K}$ with form domain equal to $\mathcal{Q}(T_0)=\mathcal{K}_{+1}$.
\end{proposition}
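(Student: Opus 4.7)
The plan is to reduce the statement to the classical KLMN theorem applied on the form domain $\mathcal{K}_{+1}$. First, I would associate $T_0$ and $T_1$ with the sesquilinear forms $t_0(\phi,\xi)=(\phi,T_0\xi)_{\mathcal{K}_{+1},\mathcal{K}_{-1}}$ and $t_1(\phi,\xi)=(\phi,T_1\xi)_{\mathcal{K}_{+1},\mathcal{K}_{-1}}$, both with form domain $\mathcal{K}_{+1}$; the second is symmetric by hypothesis, while the first is closed, symmetric and nonnegative and represents precisely the operator $T_0$. The basic identity provided by the scale construction is
\[
\|\phi\|_{\mathcal{K}_{+1}}^2=\Braket{(T_0+1)^{1/2}\phi,(T_0+1)^{1/2}\phi}_{\mathcal{K}}=t_0(\phi,\phi)+\|\phi\|_{\mathcal{K}}^2,
\]
valid for every $\phi\in\mathcal{K}_{+1}$; it will be the bridge between continuity on the scale and the classical relative form bound.

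The core estimate is then the relative form bound of $t_1$ with respect to $t_0$. Combining the Cauchy-Schwartz-like inequality~\eqref{eq:cauchyschwartzlike} with the continuity of $T_1$ as a map $\mathcal{K}_{+1}\to\mathcal{K}_{-1}$ gives $|t_1(\phi,\phi)|\leq\|\phi\|_{\mathcal{K}_{+1}}\|T_1\phi\|_{\mathcal{K}_{-1}}\leq C\|\phi\|_{\mathcal{K}_{+1}}^2$, where $C=\|T_1\|_{\mathcal{B}(\mathcal{K}_{+1},\mathcal{K}_{-1})}$. Substituting the identity above rewrites this as $|t_1(\phi,\phi)|\leq C\,t_0(\phi,\phi)+C\,\|\phi\|_{\mathcal{K}}^2$, so for every $\lambda$ with $|\lambda|C<1$ the perturbation $\lambda t_1$ is form-bounded relative to $t_0$ with relative bound strictly less than one.

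The final step is an invocation of KLMN: the sum $t_\lambda=t_0+\lambda t_1$ is closed, symmetric and semibounded on $\mathcal{Q}(T_0)=\mathcal{K}_{+1}$ for such values of $\lambda$, and therefore is uniquely represented by a self-adjoint operator on $\mathcal{K}$ whose form domain is exactly $\mathcal{K}_{+1}$. A small bookkeeping check remains, namely that the self-adjoint operator so produced has form agreeing with $(\phi,T_\lambda\xi)_{\mathcal{K}_{+1},\mathcal{K}_{-1}}$, which is immediate since $t_0$ and $t_1$ were constructed from $T_0$ and $T_1$ through that very pairing. I do not expect any substantive obstacle: the analytic content is concentrated in a single use of the duality Cauchy-Schwartz estimate, and the only nontrivial part is keeping straight the three equivalent descriptions of the same object, i.e. a self-adjoint operator on $\mathcal{K}$, a closed symmetric form on $\mathcal{K}_{+1}$, and a continuous operator between $\mathcal{K}_{+1}$ and $\mathcal{K}_{-1}$.
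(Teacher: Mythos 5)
Your proposal is correct and follows essentially the same route as the paper: the key step in both is the estimate $|t_1(\phi,\phi)|\leq\|\phi\|_{\mathcal{K}_{+1}}\|T_1\phi\|_{\mathcal{K}_{-1}}\leq\|T_1\|_{\mathcal{B}(\mathcal{K}_{+1},\mathcal{K}_{-1})}\bigl(t_0(\phi,\phi)+\|\phi\|_{\mathcal{K}}^2\bigr)$, obtained by combining the duality Cauchy--Schwartz inequality with the continuity of $T_1$ and the identity $\|\phi\|_{\mathcal{K}_{+1}}^2=t_0(\phi,\phi)+\|\phi\|_{\mathcal{K}}^2$, followed by an invocation of the KLMN theorem for $|\lambda|\,\|T_1\|_{\mathcal{B}(\mathcal{K}_{+1},\mathcal{K}_{-1})}<1$. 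No gaps.
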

\begin{proof}
	The sesquilinear form defined via\begin{equation}
		t_1(\phi,\xi)=\left(\phi,T_1\xi\right)_{\mathcal{K}_{+1},\mathcal{K}_{-1}}
	\end{equation}
	satisfies, for all $\phi\in\mathcal{K}_{+1}$, the inequality
	\begin{eqnarray}
		|t_1(\phi,\phi)|&\leq&\|\phi\|_{\mathcal{K}_{+1}}\left\|T_1\phi\right\|_{\mathcal{K}_{-1}}\nonumber\\
		&\leq&\|T_1\|_{\mathcal{B}(\mathcal{K}_{+1},\mathcal{K}_{-1})}\,\|\phi\|^2_{\mathcal{K}_{+1}}\nonumber\\
		&=&\|T_1\|_{\mathcal{B}(\mathcal{K}_{+1},\mathcal{K}_{-1})}\,\|(T_0+1)^{1/2}\phi\|^2_{\mathcal{K}}\nonumber\\
		&=&\|T_1\|_{\mathcal{B}(\mathcal{K}_{+1},\mathcal{K}_{-1})}\left(t_0(\phi,\phi)+\|\phi\|_{\mathcal{K}}^2\right)
	\end{eqnarray}
	as an immediate consequence of the Cauchy-Schwartz inequality~\eqref{eq:cauchyschwartzlike} and the boundedness of $T_1$ as an operator between $\mathcal{K}_{+1}$ and $\mathcal{K}_{-1}$. Therefore, the sesquilinear form $t_1(\cdot,\cdot)$ is relatively bounded with respect to $t_0(\cdot,\cdot)$. By the KLMN theorem, this implies that, whenever its relatively bound is less than one, and thus
	\begin{equation}
		\lambda<\frac{1}{\|T_1\|_{\mathcal{B}(\mathcal{K}_{+1},\mathcal{K}_{-1})}},
	\end{equation}
	the sesquilinear form associated with $T_0+\lambda T_1$ is uniquely associated with a self-adjoint operator with form domain equal to $\mathcal{Q}(T_0)=\mathcal{K}_{+1}$.
\end{proof}
\begin{remark}\label{remark:klmn}
Prop.~\ref{prop:klmn_revisited} can be equivalently stated as follows: for sufficiently small $\lambda$, there is a dense subspace $\mathcal{D}(T_\lambda)\subset\mathcal{K}$ such that the restriction of $T_\lambda\in\mathcal{B}(\mathcal{K}_{+1},\mathcal{K}_{-1})$ to $\mathcal{D}(T_\lambda)$ defines an unbounded self-adjoint operator on the Hilbert space $\mathcal{K}$, which (with the usual abuse of notation) we still denote by $T_\lambda$. 

We remark that Prop.~\ref{prop:klmn_revisited}, while ensuring the existence of such a domain, does not provide additional information about it; in general, $\mathcal{D}(T_\lambda)\neq\mathcal{D}(T_0)$ depends nontrivially on the coupling constant $\lambda$.
\end{remark}

\subsection{Creation and annihilation operators on the $\dOmega$-scale}\label{subsec:creationscalefock}
Coming back to our original problem, we will now introduce two important scales of Hilbert spaces:
\begin{itemize}
	\item the $\omega$-scale $\{\hilb_s\}_{s\in\mathbb{R}}$ associated with $\omega$, with
	\begin{equation}
		\|\psi\|^2_{\hilb_s}=\|\omega^{s/2}\psi\|^2_\hilb=\int\omega(k)^s|\psi(k)|^2\,\mathrm{d}\mu;
	\end{equation}
	for brevity, we shall set $\|\psi\|_{\hilb_s}\equiv\|\psi\|_s$ (and, in particular, $\|\psi\|_0\equiv\|\psi\|$ as before) hereafter;
	\item the $\dOmega$-scale $\{\focks_s\}_{s\in\mathbb{R}}$ associated with $\dOmega$, with
	\begin{equation}
		\|\Psi\|^2_{\fock_s}=\left\|\left(\dOmega+1\right)^{s/2}\Psi\right\|^2_\fock.
	\end{equation}
\end{itemize}
Note that we \textit{must} add the identity in the definition of $\|\cdot\|_{\fock_s}$ since $\dOmega$ is not strictly positive despite $\omega$ being strictly positive; indeed, $\dOmega\Omega=0$. The identity is instead unnecessary for the $\omega$-scale, since $\omega\geq m>0$; in any case, adding the identity would not affect the results hereby discussed.

The scales of Fock spaces introduced above will enable us to define GSB models with non-normalizable form factor, i.e. $f\in\hilb_{-s}\setminus\hilb$ for some $s>0$. We will show (Props.~\ref{prop:af_sing} and~\ref{prop:adagf_sing}) that, if $s\geq1$, it is possible to construct two continuous operators $\tilde{a}(f)$, $\tilde{a}^\dag(f)$, with
\begin{equation}
	\tilde{a}(f):\focks_{+s}\rightarrow\focks,\qquad 	\tilde{a}^\dag(f):\focks\rightarrow\focks_{-s},
\end{equation}
that generalize the creation and annihilation operators $\a{f}$, $\adag{f}$ in the following sense:
\begin{itemize}
	\item $\tilde{a}(f)$ acts exactly as in Eq.~\eqref{eq:af_explicit};
	\item $\adag{f}$ is its adjoint, and acts exactly as in Eq.~\eqref{eq:adagf_explicit} on $\focks_{+s}$;
	\item besides, if $f\in\hilb$, then $\tilde{a}(f)\Psi=\a{f}\Psi$ and $\tilde{a}^\dag(f)\Psi=\adag{f}\Psi$ for all $\Psi\in\focks_{+1}\subset\mathcal{D}(\a{f})$, thus their action being compatible with the ``regular'' ones introduced in Section~\ref{sec:fock}.
\end{itemize}
Let us start from the singular annihilation operator $\tilde{a}(f)$.
\begin{proposition}\label{prop:af_sing}
	Let $f\in\hilb_{-s}$ for some $s\geq1$. Then the expression ($n\geq1$)
		\begin{equation}\label{eq:af_explicit2}
		\left(\tilde{a}(f)\Psi^{(n)}\right)(k_1,\dots,k_{n-1})=\sqrt{n}\int\overline{f(k_n)}\Psi^{(n)}(k_1,\dots,k_{n-1},k_n)\,\mathrm{d}\mu(k_n),
	\end{equation}
with $\tilde{a}(f)\Omega=0$, defines a continuous map in $\mathcal{B}(\focks_{+s},\focks)$ with norm
\begin{equation}\label{eq:af_norm_s0}
	\|\tilde{a}(f)\|_{\mathcal{B}(\focks_{+s},\focks)}\leq\|f\|_{-s}.
\end{equation}
Besides, if $f\in\hilb$, for all $\Psi\in\focks_{+1}$ we have $\tilde{a}(f)\Psi=\tilde{a}(f)\Psi$.
\end{proposition}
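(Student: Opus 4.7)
The plan is to exploit a weighted Cauchy-Schwarz inequality that absorbs the $\omega^{-s/2}$ encoded in $f\in\hilb_{-s}$ at the cost of pulling an extra $\omega^{s/2}$ from $\Psi^{(n)}$---exactly the regularity built into the $\focks_{+s}$-norm. I would first define $\tilde{a}(f)$ via \eqref{eq:af_explicit2} on the dense core of finite-particle vectors whose components lie in $\bigcap_k\mathcal{D}(\omega^k)$, check that the outputs are symmetric and square-integrable (so really lie in $\focks$), prove the norm bound \eqref{eq:af_norm_s0}, and then extend by continuity to all of $\focks_{+s}$.

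The heart of the argument is the following estimate. Factorizing the integrand as $\overline{f(k_n)}\Psi^{(n)}=\bigl[\omega(k_n)^{-s/2}\overline{f(k_n)}\bigr]\bigl[\omega(k_n)^{s/2}\Psi^{(n)}\bigr]$ and applying Cauchy-Schwarz in $k_n$ gives
\begin{equation}
\left|\int\overline{f(k_n)}\Psi^{(n)}(k_1,\dots,k_n)\,\d\mu(k_n)\right|^2\leq\|f\|_{-s}^2\int\omega(k_n)^s|\Psi^{(n)}(k_1,\dots,k_n)|^2\,\d\mu(k_n).
\end{equation}
Multiplying by $n$ and integrating in $k_1,\dots,k_{n-1}$ leads to $\|\tilde{a}(f)\Psi^{(n)}\|_{\hilb^{(n-1)}}^2\leq n\|f\|_{-s}^2\int\omega(k_n)^s|\Psi^{(n)}|^2\,\d^n\mu$. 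By complete symmetry of $\Psi^{(n)}$, the prefactor $n$ can be distributed across the variables:
\begin{equation}
n\int\omega(k_n)^s|\Psi^{(n)}|^2\,\d^n\mu=\int\biggl(\sum_{j=1}^n\omega(k_j)^s\biggr)|\Psi^{(n)}|^2\,\d^n\mu.
\end{equation}
Here the hypothesis $s\geq 1$ enters decisively: for nonnegative reals $a_j$, since $a_j/\sum_i a_i\in[0,1]$ one has $a_j^s\leq(\sum_i a_i)^{s-1}a_j$, which summed over $j$ gives $\sum_j a_j^s\leq(\sum_j a_j)^s$. Applied pointwise in $(k_1,\dots,k_n)$, this bounds the integrand by $\bigl(\sum_j\omega(k_j)\bigr)^s|\Psi^{(n)}|^2$, whose integral equals $\|\dOmega^{s/2}\Psi^{(n)}\|^2_{\hilbn}$. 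Summing over $n$ and using the spectral inequality $\dOmega^s\leq(\dOmega+I)^s$ yields $\|\tilde{a}(f)\Psi\|_\focks\leq\|f\|_{-s}\|\Psi\|_{\focks_{+s}}$, which is precisely \eqref{eq:af_norm_s0}.

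For the compatibility claim, when $f\in\hilb$ the assumption $\omega\geq m>0$ gives $\|f\|_{-1}^2\leq m^{-1}\|f\|^2<\infty$, so $f\in\hilb_{-1}$ and the construction above applies with $s=1$, producing $\tilde{a}(f)\in\mathcal{B}(\focks_{+1},\focks)$. On the dense subspace of finite-particle vectors the formulas \eqref{eq:af_explicit} and \eqref{eq:af_explicit2} coincide term by term; Propositions~\ref{prop:numb} and~\ref{prop:domaf} together show that $\a{f}$ restricted to $\focks_{+1}$ is continuous into $\focks$, and density forces $\tilde{a}(f)\Psi=\a{f}\Psi$ throughout $\focks_{+1}$.

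The main (and essentially only) nontrivial technical step is the convexity-type inequality $\sum_j a_j^s\leq(\sum_j a_j)^s$ for $s\geq 1$. It pinpoints exactly why the threshold $s=1$ is natural: if one attempted $s\in(0,1)$ the inequality would reverse, producing an $n$-dependent factor that cannot be absorbed into the norm, so that form factors beyond $\hilb_{-1}$ genuinely require the renormalization strategy developed later in Section~\ref{sec:singrwa2}.
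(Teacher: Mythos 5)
Your proposal is correct and follows essentially the same route as the paper: insert the weights $\omega(k_n)^{\mp s/2}$, apply Cauchy--Schwarz in the last variable, symmetrize the factor $n$ into $\sum_j\omega(k_j)^s$, and invoke the inequality $\sum_j a_j^s\leq\bigl(\sum_j a_j\bigr)^s$ for $s\geq1$ to reach $\|f\|_{-s}\|\dOmega^{s/2}\Psi\|_\fock\leq\|f\|_{-s}\|\Psi\|_{\fock_{+s}}$, with the compatibility claim handled exactly as in the paper via Props.~\ref{prop:numb} and~\ref{prop:domaf}. The only (welcome) additions are your explicit one-line proof of the convexity-type inequality, which the paper merely asserts, and the density-plus-continuity phrasing of the extension step.
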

\begin{proof} Let $\Psi\in\focks_{+s}$. Then
	\begin{eqnarray}\label{eq:estimate}
		\|\tilde{a}(f)\Psi\|^2_\fock&=&\sum_{n\geq1}\|\tilde{a}(f)\Psi^{(n)}\|^2_{\hilb^{(n-1)}}\nonumber\\
		&=&\sum_{n\geq1}n\,\int\mathrm{d}^{n-1}\mu\left|\int\mathrm{d}\mu(k_n)\,\overline{f(k_n)}\Psi^{(n)}(k_1,\dots,k_n)\right|^2\nonumber\\
		&=&\sum_{n\geq1}n\,\int\mathrm{d}^{n-1}\mu\left|\int\mathrm{d}\mu(k_n)\,\frac{\overline{f(k_n)}}{\omega(k_n)^{s/2}}\omega(k_n)^{s/2}\Psi^{(n)}(k_1,\dots,k_n)\right|^2\nonumber\\
		&\leq&\|f\|^2_{-s}\sum_{n\geq1}n\int\mathrm{d}^{n}\mu\;\,\omega(k_n)^{s}\left|\Psi^{(n)}(k_1,\dots,k_n)\right|^2\nonumber\\
		&=&\|f\|^2_{-s}\sum_{n\geq1}\int\mathrm{d}^{n}\mu\;\left(\sum_{j=1}^n\omega(k_j)^s\right)\left|\Psi^{(n)}(k_1,\dots,k_n)\right|^2\nonumber\\
		&\leq&\|f\|^2_{-s}\sum_{n\geq1}\int\mathrm{d}^{n}\mu\;\left(\sum_{j=1}^n\omega(k_j)\right)^{s}\left|\Psi^{(n)}(k_1,\dots,k_n)\right|^2\nonumber\\
		&=&\|f\|^2_{-s}\|\dOmega^{s/2}\Psi\|^2_\fock\nonumber\\
		&\leq&\|f\|^2_{-s}\|\Psi\|^2_{\fock_{+s}},
	\end{eqnarray}
where we have used the Cauchy-Schwartz inequality, the symmetry of $\Psi^{(n)}(k_1,\dots,k_n)$ under permutations of the integration variables, and the following inequality:
\begin{equation}\label{eq:algebra}
	\left(\sum_{j=1}^n c_j^s\right)\leq \left(\sum_{j=1}^n c_j\right)^s.
\end{equation}
which holds for any collection of nonnegative numbers $c_1,\dots,c_n\geq0$ and $s\geq1$. 

The last claim is immediate. Let $f\in\hilb$; then a fortiori $f\in\hilb_{-1}$. $\a{f}$ and $\tilde{a}(f)$ have the very same expression (Eqs.~\eqref{eq:af_explicit} and~\eqref{eq:af_explicit2}), and $\focks_{+1}$ is a subset of $\mathcal{D}(\a{f})$ because of Props.~\ref{prop:numb} and~\ref{prop:domaf}.
\end{proof}
It is worth pointing out that, while proven hereafter explicitly for the sake of completeness (similar computations will be made in later sections), the estimate~\eqref{eq:estimate} is in fact classic, dating back to E. Nelson, cf.~\cite[Eq. (5)]{nelson1964interaction} (see also~\cite{ginibre2006partially}).

We can now define the singular creation operator.
\begin{proposition}\label{prop:adagf_sing}
	Given $s\geq1$ and $f\in\hilb_{-s}$, there exists a unique operator $\tilde{a}^\dag(f)\in\mathcal{B}(\focks,\focks_{-s})$ such that, for all $\Phi\in\focks_{+s}$ and $\Psi\in\focks$,
	\begin{equation}\label{eq:adjoint}
\Braket{\Psi,\tilde{a}(f)\Phi}_{\fock}=\Bigl(\tilde{a}^\dag(f)\Psi,\Phi\Bigr)_{\focks_{-s},\focks_{+s}},
	\end{equation}
and its operator norm satisfies
\begin{equation}\label{eq:adagf_norm_s0}
	\|\tilde{a}^\dag(f)\|_{\mathcal{B}(\focks,\focks_{-s})}\leq\|f\|_{-s}.
\end{equation}
In particular, for all $\Psi\in\fock_{+s}$, $\tilde{a}(f)^\dag$ acts as follows ($n\geq0$):
\begin{eqnarray}\label{eq:adagf_explicit2}
	\left(\tilde{a}^\dag(f)\Psi^{(n)}\right)(k_1,\dots,k_{n},k_{n+1})&=&\frac{1}{\sqrt{n+1}}\bigg(\sum_{j=1}^n\Psi^{(n)}(k_1,\dots,\overbrace{k_{n+1}}^{j\text{th}},\dots,k_n)f(k_j)\nonumber\\&&+\Psi^{(n)}(k_1,\dots,k_n)f(k_{n+1})\bigg).
\end{eqnarray}
Finally, if $f\in\hilb$, for all $\Psi\in\focks_{+1}$ we have $\tilde{a}^\dag(f)\Psi=\adag{f}\Psi$.
\end{proposition}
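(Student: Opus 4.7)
The plan is to leverage the abstract duality structure of the $\dOmega$-scale established in Subsection~\ref{subsec:scale}, defining $\tilde{a}^\dag(f)$ via the Gelfand triple $(\focks_{-s},\focks,\focks_{+s})$ as the natural adjoint of the continuous map $\tilde{a}(f)\in\mathcal{B}(\focks_{+s},\focks)$ supplied by Prop.~\ref{prop:af_sing}. This avoids having to construct $\tilde{a}^\dag(f)\Psi$ pointwise as a concrete function, which would be delicate when $f\notin\hilb$.

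First I would fix $\Psi\in\focks$ and examine the functional
\begin{equation}
L_\Psi:\focks_{+s}\to\mathbb{C},\qquad L_\Psi(\Phi)=\braket{\Psi,\tilde{a}(f)\Phi}_{\fock}.
\end{equation}
Cauchy--Schwarz in $\focks$ combined with the operator bound~\eqref{eq:af_norm_s0} yields
\begin{equation}
|L_\Psi(\Phi)|\leq\|\Psi\|_\fock\,\|\tilde{a}(f)\Phi\|_\fock\leq\|f\|_{-s}\,\|\Psi\|_\fock\,\|\Phi\|_{\focks_{+s}},
\end{equation}
so $L_\Psi$ is continuous and linear on $\focks_{+s}$. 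The duality pairing~\eqref{eq:dualitypairing} identifies such functionals with elements of $\focks_{-s}$, and hence produces a unique $\tilde{a}^\dag(f)\Psi\in\focks_{-s}$ realizing $L_\Psi$ through~\eqref{eq:adjoint}. Taking the supremum over $\|\Phi\|_{\focks_{+s}}\leq 1$ yields~\eqref{eq:adagf_norm_s0}; linearity of $\Psi\mapsto\tilde{a}^\dag(f)\Psi$ follows by matching the antilinearity in the first slot of $\braket{\cdot,\cdot}_\fock$ with that of the pairing.

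To obtain the explicit formula~\eqref{eq:adagf_explicit2}, I would restrict to $\Psi\in\focks_{+s}$, expand $\braket{\Psi,\tilde{a}(f)\Phi}_{\fock}$ sector by sector using~\eqref{eq:af_explicit2}, and apply Fubini to rewrite the result as
\begin{equation}
\sum_{n\geq 0}\sqrt{n+1}\int\overline{\Psi^{(n)}(k_1,\dots,k_n)\,f(k_{n+1})}\,\Phi^{(n+1)}(k_1,\dots,k_{n+1})\,\mathrm{d}^{n+1}\mu.
\end{equation}
Exploiting the permutation-invariance of $\Phi^{(n+1)}$, the $n+1$ symmetrized terms in the right-hand side of~\eqref{eq:adagf_explicit2} contribute equally upon integration against $\Phi^{(n+1)}$, so the prefactor $1/\sqrt{n+1}$ times $n+1$ reproduces exactly the $\sqrt{n+1}$ above, and the resulting pairing coincides with $L_\Psi(\Phi)$. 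Uniqueness of the representative in $\focks_{-s}$ then identifies that formula with $\tilde{a}^\dag(f)\Psi$.

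Finally, for compatibility with the regular creation operator when $f\in\hilb$, the operator $\adag{f}$ is defined at least on $\focks_{+1}$ because $\mathcal{D}(\adag{f})=\mathcal{D}(\a{f})\supset\focks_{+1}$ by Props.~\ref{prop:numb} and~\ref{prop:domaf}; there $\adag{f}$ acts via~\eqref{eq:adagf_explicit}, which agrees pointwise with~\eqref{eq:adagf_explicit2}, so $\tilde{a}^\dag(f)\Psi=\adag{f}\Psi$. The only real subtlety I foresee is conceptual rather than technical: the expression in~\eqref{eq:adagf_explicit2} need not lie in $\focks$ when $f\notin\hilb$, only in $\focks_{-s}$, which is precisely why the abstract definition through the Gelfand triple is cleaner than a direct pointwise construction.
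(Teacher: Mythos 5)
Your proposal is correct and follows essentially the same route as the paper: $\tilde{a}^\dag(f)$ is defined as the adjoint of $\tilde{a}(f)\in\mathcal{B}(\focks_{+s},\focks)$ with respect to the duality pairing of the Gelfand triple, the norm bound is inherited from Prop.~\ref{prop:af_sing}, the explicit formula is obtained by pairing against $\Phi\in\focks_{+s}$ and exploiting permutation symmetry, and the compatibility with $\adag{f}$ for $f\in\hilb$ follows from Props.~\ref{prop:numb}--\ref{prop:domaf}. The one step the paper spells out that you only assert is the estimate showing that the right-hand side of Eq.~\eqref{eq:adagf_explicit2} genuinely defines an element of $\focks_{-s}$ (the paper bounds its $\focks_{-s}$-norm by $\|f\|_{-s}\left(\|N^{1/2}\Psi\|^2_\fock+\|\Psi\|^2_\fock\right)^{1/2}$ using $\focks_{+s}\subset\mathcal{D}(N^{1/2})$), which is what licenses your Fubini rearrangement and the appeal to uniqueness of the representative; it is routine but should be written out.
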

\begin{proof}
Given $s\in\mathbb{R}$, the spaces $\focks_{\pm s}$, as discussed in Subsection~\ref{subsec:scale}, are mutually dual with respect to the pairing
\begin{equation}
	\Phi\in\focks_{+s},\;\Psi\in\focks_{-s}\mapsto\left(\Phi,\Psi\right)_{\fock_{+s},\fock_{-s}}:=\Braket{\left(\dOmega+1\right)^{s/2}\Phi,\left(\dOmega+1\right)^{-s/2}\Psi}_\fock.
\end{equation}
Therefore, the continuous map $\a{f}:\focks_{+s}\rightarrow\focks$ admits a unique adjoint operator with respect to this pairing, i.e. an unique continuous map from $\focks$ to $\focks_{-s}$, which we call $\tilde{a}^\dag(f)$, satisfying Eq.~\eqref{eq:adjoint}. By definition, its norm satisfies
\begin{equation}
	\|\tilde{a}^\dag(f)\|_{\mathcal{B}(\focks,\focks_{-s})}=\|\tilde{a}(f)\|_{\mathcal{B}(\focks_{+s},\focks)}\leq\|f\|_{-s}.
\end{equation}
Let us show that, given $\Psi\in\focks_{+s}$, Eq.~\eqref{eq:adagf_explicit2} holds. First of all, let us show that the right-hand side of Eq.~\eqref{eq:adagf_explicit2}, which we call $\tilde{\Psi}$, does indeed define an element of $\focks_{-s}$. We have
\begin{eqnarray}
\bigl\|\tilde{\Psi}\bigr\|^2_{\focks_{-s}}&=&\left\|\left(\dOmega+1\right)^{-s/2}\tilde{\Psi}\right\|^2_\fock\nonumber\\
	&\leq&\sum_{n\in\mathbb{N}}(n+1)\int\frac{|\Psi^{(n)}(k_1,\dots,k_n)|^2|f(k_{n+1})|^2}{\left(\sum_{\ell=1}^{n+1}\omega(k_\ell)+1\right)^s}\,\mathrm{d}^{n+1}\mu\nonumber\\
	&\leq&\sum_{n\in\mathbb{N}}(n+1)\int\frac{|\Psi^{(n)}(k_1,\dots,k_n)|^2|f(k_{n+1})|^2}{\left(\omega(k_{n+1})+1\right)^s}\,\mathrm{d}^{n+1}\mu\nonumber\\
	&\leq&\|f\|^2_{-s}\sum_{n\in\mathbb{N}}(n+1)\int|\Psi^{(n)}(k_1,\dots,k_n)|^2\,\mathrm{d}^{n+1}\mu\nonumber\\
	&=&\|f\|^2_{-s}\left(\|N^{1/2}\Psi\|_\fock^2+\|\Psi\|_\fock^2\right)<\infty,
\end{eqnarray}
where we have used the fact that, by Prop.~\ref{prop:numb}, $\Psi\in\focks_{+s}\subset\focks_{+1}=\mathcal{D}(\dOmega^{1/2})\subset\mathcal{D}(N^{1/2})$ for $s\geq1$. Therefore, the right-hand side of Eq.~\eqref{eq:adagf_explicit2} is well-defined, and a direct check shows that
	\begin{equation}\label{eq:adjointbis}
	\Braket{\Psi,\tilde{a}(f)\Phi}_{\fock}=\Bigl(\tilde{\Psi},\Phi\Bigr)_{\focks_{-s},\focks_{+s}},
\end{equation}
so that indeed $\tilde{a}^\dag(f)\Psi=\tilde{\Psi}$; Eq.~\eqref{eq:adagf_explicit2} is proven.

The final claim is thus immediate: if $f\in\hilb$, and then $f\in\hilb_{-1}$, again $\adag{f}$ is well-defined with domain $\mathcal{D}(\adag{f})\supset\focks_{+1}$ and, given $\Psi\in\focks_{+1}$, the quantities $\adag{f}\Psi$ and $\tilde{a}^\dag(f)\Psi$ coincide by a direct comparison of Eqs.~\eqref{eq:adagf_explicit} and~\eqref{eq:adagf_explicit2}.
\end{proof}

In summary, Props.~\ref{prop:af_sing}--\ref{prop:adagf_sing} enable us, whenever $f\in\hilb_{-s}$ for $s\geq1$, to define two continuous maps on the $\dOmega$-scale that can be identified as ``singular'' creation and annihilation operators.
Recalling that $\hilb\subset\hilb_{-s}\subset\hilb_{-s'}$ for all $s'>s>0$, the best possible estimate is the following one, depending on $s$:
\begin{itemize}
	\item if $f\in\hilb_{-s}$ for $s\in[0,1]$, then a fortiori $f\in\hilb_{-1}$ and thus
	\begin{equation}
		\tilde{a}(f):\focks_{+1}\rightarrow\focks,\qquad\tilde{a}^\dag(f):\fock\rightarrow\focks_{-1};
	\end{equation}
	\item if $f\in\hilb_{-s}$ for $s\geq1$, then
	\begin{equation}
		\tilde{a}(f):\focks_{+s}\rightarrow\focks,\qquad\tilde{a}^\dag(f):\fock\rightarrow\focks_{-s}.
	\end{equation}
\end{itemize}
Besides, for $f\in\hilb$, these ``singular'' operators agree with the ``regular'' ones on $\focks_{+1}$; because of that, with an abuse of notation, we will hereafter drop the tilde from them. No ambiguities will arise from this choice.

\begin{remark}\label{remark}
	Since $\focks\subset\focks_{-s}$, the operator $a(f)$ can also be interpreted as a continuous map between $\focks_{+s}$ and $\focks_{-s}$; analogously, since $\focks_{+s}\subset\focks$, the operator $a^\dag(f)$ also acts as a continuous map between $\focks_{+s}$ and $\focks_{-s}$. The operators $a(f)$ and $a^\dag(f)$ are mutually adjoint even as maps in $\mathcal{B}(\focks_{+s},\focks_{-s})$ as well, i.e.
	\begin{equation}\label{eq:adjoint2}
		\Bigl(\Psi,a(f)\Phi\Bigr)_{\focks_{+s},\focks_{-s}}=\left(a^\dag(f)\Psi,\Phi\right)_{\focks_{-s},\focks_{+s}},\qquad\forall\Phi,\Psi\in\focks_{+s},
	\end{equation}
as a direct consequence of Eq.~\eqref{eq:adjoint} and the fact that, since both $\Psi$ and $a(f)\Phi$ are in $\focks$, the left-hand side of Eq.~\eqref{eq:adjoint2} coincides with $\Braket{\Psi,a(f)\Phi}_\fock$. Besides, we also have
\begin{equation}\label{eq:norm}
		\|a(f)\|_{\mathcal{B}(\focks_{+s},\focks_{-s})}\leq\|f\|_{-s},\qquad\|a^\dag(f)\|_{\mathcal{B}(\focks_{+s},\focks_{-s})}\leq\|f\|_{-s}
\end{equation}
as a straightforward consequence of Eqs.~\eqref{eq:af_norm_s0} and~\eqref{eq:adagf_norm_s0}.

In fact, for the purposes of Section~\ref{sec:singgsb}, it would have been enough to define both operators as maps between $\focks_{+s}$ and $\focks_{-s}$ (with Eq.~\eqref{eq:adagf_explicit2} being taken as the definition of the singular creation operator); however, this would not been enough for the goals of Sections~\ref{sec:singrwa}--\ref{sec:singrwa2}.
\end{remark}

\subsection{Approximating singular creation and annihilation operators}\label{subsec:approx}
To conclude this section, we will show that every singular ($f\in\hilb_{-s}\setminus\hilb$) creation or annihilation operator $a(f),a^\dag(f)$ on the Fock scale can be approximated by a proper sequence of regular creation or annihilation operators; this property will be crucial to understand, in the next sections, the link between regular and singular GSB models.

\begin{proposition}\label{prop:singlimit}
Let $f\in\hilb_{-s}$, $s\geq1$. Then there exists a family $\{f^i\}_{i\in\mathbb{N}}\subset\hilb$ such that
\begin{equation}
	\lim_{i\to\infty}\left\|a(f)-a(f^i)\right\|_{\mathcal{B}(\focks_s,\focks)}=0,\qquad 	\lim_{i\to\infty}\left\|a^\dag(f)-a^\dag(f^i)\right\|_{\mathcal{B}(\focks,\focks_{-s})}=0,
\end{equation}
i.e. $a(f^i)\to a(f)$ and $a^\dag(f^i)\to a^\dag(f)$ in the norm sense.
\end{proposition}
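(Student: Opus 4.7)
The plan is to exploit (anti-)linearity of the maps $f\mapsto a(f)$ and $f\mapsto a^\dag(f)$, which reduces the claim to a pure approximation statement in the single-particle scale $\hilb_{-s}$. Namely, inspecting the explicit formulas~\eqref{eq:af_explicit2} and~\eqref{eq:adagf_explicit2}, one sees immediately that for any $f,g\in\hilb_{-s}$ one has $a(f)-a(g)=a(f-g)$ as elements of $\mathcal{B}(\focks_{+s},\focks)$, and $a^\dag(f)-a^\dag(g)=a^\dag(f-g)$ as elements of $\mathcal{B}(\focks,\focks_{-s})$. Combining this with the operator-norm estimates in Props.~\ref{prop:af_sing} and~\ref{prop:adagf_sing}, we obtain
\begin{equation}
\left\|a(f)-a(f^i)\right\|_{\mathcal{B}(\focks_{+s},\focks)}\leq\|f-f^i\|_{-s},\qquad\left\|a^\dag(f)-a^\dag(f^i)\right\|_{\mathcal{B}(\focks,\focks_{-s})}\leq\|f-f^i\|_{-s},
\end{equation}
so the whole task reduces to producing a sequence $\{f^i\}\subset\hilb$ with $\|f-f^i\|_{-s}\to 0$.

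Next, I would build such a sequence by a concrete frequency cutoff rather than invoking abstract density of $\hilb$ in $\hilb_{-s}$. Set
\begin{equation}
f^i(k)=f(k)\,\mathbf{1}_{\{\omega(k)\leq i\}}(k).
\end{equation}
A direct computation using $\omega\geq m>0$ gives $\|f^i\|^2\leq i^s\,\|f\|^2_{-s}<\infty$, so $f^i\in\hilb$ for each $i$. On the other hand,
\begin{equation}
\|f-f^i\|^2_{-s}=\int_{\{\omega(k)>i\}}\frac{|f(k)|^2}{\omega(k)^s}\,\mathrm{d}\mu(k),
\end{equation}
and this tends to zero as $i\to\infty$ by the dominated convergence theorem, since the integrand is dominated by the integrable function $|f(k)|^2/\omega(k)^s$ and vanishes pointwise on $\{\omega<\infty\}$.

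I do not foresee any serious obstacle: the only mildly delicate point is to notice that one is really working with \emph{operator} norms between two different Hilbert spaces, and to confirm that the abstract bounds of Props.~\ref{prop:af_sing}--\ref{prop:adagf_sing} do upgrade the convergence $f^i\to f$ in $\hilb_{-s}$ into the desired norm convergence of the associated creation/annihilation operators. One should also remark, in passing, that the construction is in fact robust: any $\hilb_{-s}$-approximating sequence in $\hilb$ (e.g.\ obtained from cutoffs different from the one chosen above) yields the same conclusion, so the approximating sequence is very far from being unique.
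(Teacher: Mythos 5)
Your proof is correct and follows essentially the same route as the paper: the heart of the argument in both cases is the reduction $a(f)-a(f^i)=a(f-f^i)$, $a^\dag(f)-a^\dag(f^i)=a^\dag(f-f^i)$ combined with the operator-norm bounds of Props.~\ref{prop:af_sing}--\ref{prop:adagf_sing}, so that everything boils down to finding $\{f^i\}\subset\hilb$ with $\|f-f^i\|_{-s}\to0$. The only difference is that the paper obtains such a sequence by invoking the dense embedding $\hilb\hookrightarrow\hilb_{-s}$ from the general theory of Hilbert scales, whereas you construct it explicitly via the frequency cutoff $f^i=f\,\mathbf{1}_{\{\omega\leq i\}}$ (which is a correct, and pleasantly concrete, alternative).
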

\begin{proof}
	By the properties of Hilbert scales, $\hilb$ is densely embedded into $\hilb_{-s}$, implying that there exists a sequence $\{f^i\}_{n\in\mathbb{N}}\subset\hilb$ such that
	\begin{equation}
		\lim_{i\to\infty}\|f^i-f\|_{-s}=0.
	\end{equation}
	By construction, we have $a(f)-a(f^i)=a(f-f^i)$, $\adag{f}-\adag{f^i}=\adag{f-f^i}$ and thus, by Eqs.~\eqref{eq:af_norm_s0} and~\eqref{eq:adagf_norm_s0} ,
	\begin{equation}
	\|a(f)-a(f^i)\|_{\mathcal{B}(\focks_{+s},\focks)}\leq\|f-f^i\|_{-s},\qquad 	\left\|\adag{f}-a^\dag(f^i)\right\|_{\mathcal{B}(\focks,\focks_{-s})}\leq\|f-f^i\|_{-s},
	\end{equation}
henceforth the claim.
\end{proof}
\begin{remark}\label{remark2}
Recalling (see Remark~\ref{remark}) that $\a{f}$ and $\adag{f}$ can also be interpreted as continuous maps between $\focks_{+s}$ and $\focks_{-s}$, with the same operator norm (cf. Eq.~\eqref{eq:norm}), Prop.~\ref{prop:singlimit} immediately implies that, for every $f\in\hilb_{-s}$, $s\geq1$, there exists a family $\{f^i\}_{i\in\mathbb{N}}\subset\hilb$ such that
\begin{equation}
	\lim_{i\to\infty}\left\|a(f)-a(f^i)\right\|_{\mathcal{B}(\focks_{+s},\focks_{-s})}=0,\qquad 	\lim_{i\to\infty}\left\|a^\dag(f)-a^\dag(f^i)\right\|_{\mathcal{B}(\focks_{+s},\focks_{-s})}=0.
\end{equation}
\end{remark}

\section{Generalized spin-boson models with $f_1,\dots,f_r\in\hilb_{-1}$}\label{sec:singgsb}
The machinery of singular creation and annihilation operators developed in Section~\ref{sec:singcreation} will now be applied to define singular GSB models. As a first, ``zeroth-order'' application of those results, which shall then be improved in Sections~\ref{sec:singrwa}--\ref{sec:singrwa2} for a particular subclass of such models, we shall prove that GSB models with form factors $f_1,\dots,f_r\in\hilb_{-1}$ are indeed well-defined, self-adjoint operators for sufficiently values of the coupling constant $\lambda$, regardless the choice of the other parameters; furthermore, they can be approximated in the norm resolvent sense by sequences of ``regular'' GSB models with normalizable form factors. These results are collected in Prop.~\ref{prop:singgsb}.

Given the Hilbert space $\hfrak=\mathfrak{h}\otimes\focks$, with $\mathfrak{h}$ being the Hilbert space of a quantum system interacting with the boson field, let us again consider the operator $H_0$ as in Eq.~\eqref{eq:def_h0}:
\begin{equation}\label{eq:h02}
	H_0=A\otimes I+I\otimes\dOmega,
\end{equation}
with $A\in\mathcal{B}(\mathfrak{h})$ being the Hamiltonian associated to the free energy of the quantum system on $\mathfrak{h}$; as usual, we set $A\geq0$. 

Since $H_0$ is a nonnegative self-adjoint operator on $\hfrak$, following the discussion in Subsection~\ref{subsec:scale} we can construct the $H_0$-scale of Hilbert spaces $\{\hfrak_s\}_{s\in\mathbb{R}}$, the norm on $\hfrak_s$ being given by
\begin{equation}
	\|u\otimes\Psi\|_{\hfrak_s}=\left\|\left(H_0+1\right)^{s/2}(u\otimes\Psi)\right\|_\hfrak.
\end{equation}
On the other hand, we may consider as well the family of Hilbert spaces $\{\mathfrak{h}\otimes\focks_{s}\}_{s\in\mathbb{R}}$, with $\focks_s$ being the $\dOmega$-scale defined in the previous section; the corresponding norm reads
\begin{equation}
	\|u\otimes\Psi\|_{\mathfrak{h}\otimes\fock_{s}}=\|u\|_{\mathfrak{h}}\|\Psi\|_{\fock_s}=\|u\|_{\mathfrak{h}}\left\|\left(\dOmega+1\right)^{s/2}\Psi\right\|_\fock.
\end{equation}
Let us start with a simple preliminary lemma.
\begin{lemma}\label{lemma}
The spaces $\hfrak_{\pm1}$ and $\mathfrak{h}\otimes\focks_{\pm1}$ coincide.
\end{lemma}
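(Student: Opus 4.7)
The plan is to establish norm equivalence on a common dense core, conclude the coincidence of the $+1$ spaces, and then transfer to the $-1$ case by duality. The key observation is that $A\in\mathcal{B}(\mathfrak{h})$ is bounded, so the perturbation $A\otimes I$ of $I\otimes\dOmega$ affects the scale norms only by bounded terms.

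First I would identify a common core. Since $A\otimes I$ is bounded, for every $n\in\mathbb{N}$ one has $\mathcal{D}(H_0^n)=\mathcal{D}((I\otimes\dOmega)^n)=\mathfrak{h}\otimes\mathcal{D}(\dOmega^n)$, so
\begin{equation}
\mathcal{D}_\infty:=\bigcap_{n\in\mathbb{N}}\mathcal{D}(H_0^n)=\mathfrak{h}\otimes\bigcap_{n\in\mathbb{N}}\mathcal{D}(\dOmega^n),
\end{equation}
which is simultaneously dense in $\hfrak_{+1}$ and in $\mathfrak{h}\otimes\focks_{+1}$ (by definition both spaces are completions of $\mathcal{D}_\infty$ with respect to the respective norms). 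On $\mathcal{D}_\infty$, an elementary computation using $\|\Xi\|_{\hfrak_{+1}}^2=\langle\Xi,(H_0+1)\Xi\rangle_\hfrak$ and $\|\Xi\|_{\mathfrak{h}\otimes\focks_{+1}}^2=\langle\Xi,(I\otimes(\dOmega+1))\Xi\rangle_\hfrak$ yields
\begin{equation}
\|\Xi\|_{\hfrak_{+1}}^2-\|\Xi\|_{\mathfrak{h}\otimes\focks_{+1}}^2=\langle\Xi,(A\otimes I)\Xi\rangle_\hfrak.
\end{equation}

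Since $0\leq A\leq\|A\|_{\mathcal{B}(\mathfrak{h})}I$ and $\|\Xi\|_\hfrak^2\leq\|\Xi\|_{\mathfrak{h}\otimes\focks_{+1}}^2$ (because $\dOmega\geq 0$), this gives the two-sided bound
\begin{equation}
\|\Xi\|_{\mathfrak{h}\otimes\focks_{+1}}^2\leq\|\Xi\|_{\hfrak_{+1}}^2\leq\left(1+\|A\|_{\mathcal{B}(\mathfrak{h})}\right)\|\Xi\|_{\mathfrak{h}\otimes\focks_{+1}}^2,
\end{equation}
so the two norms are equivalent on $\mathcal{D}_\infty$ and, by completion, $\hfrak_{+1}=\mathfrak{h}\otimes\focks_{+1}$ as Hilbert spaces with equivalent norms.

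For the $-1$ case, both $\hfrak_{-1}$ and $\mathfrak{h}\otimes\focks_{-1}$ are realized as the dual of $\hfrak_{+1}$, respectively $\mathfrak{h}\otimes\focks_{+1}$, with respect to the pairing extending the scalar product on $\hfrak$ (cf.~Eq.~\eqref{eq:dualitypairing}). Since equivalent norms on a Hilbert space induce equivalent norms on its dual, the identification established for the $+1$ index immediately passes to the $-1$ index, yielding $\hfrak_{-1}=\mathfrak{h}\otimes\focks_{-1}$. There is essentially no obstacle here: the only point requiring a moment's care is verifying that the perturbation $A\otimes I$ preserves all the operator-domain intersections defining the scale, which is the content of the boundedness of $A$.
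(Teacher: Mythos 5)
Your proof is correct and follows essentially the same route as the paper: both establish the two-sided norm bound $\|\Xi\|_{\mathfrak{h}\otimes\focks_{+1}}^2\leq\|\Xi\|_{\hfrak_{+1}}^2\leq(1+\|A\|_{\mathcal{B}(\mathfrak{h})})\|\Xi\|_{\mathfrak{h}\otimes\focks_{+1}}^2$ by isolating the contribution $\langle\Xi,(A\otimes I)\Xi\rangle_\hfrak$ and using $0\leq A\leq\|A\|_{\mathcal{B}(\mathfrak{h})}I$, then pass to the $-1$ spaces by duality. The only (harmless) difference is that you work with general vectors in a common core while the paper computes on elementary tensors $u\otimes\Psi$.
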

\begin{proof}
Algebraically, both spaces $\hfrak_{+1}$ and $\mathfrak{h}\otimes\focks_{+1}$ coincide with $\mathfrak{h}\otimes\mathcal{Q}(\dOmega)$, with $\mathcal{Q}(\dOmega)$ being the form domain of $\dOmega$. To prove their equality as Hilbert spaces, the two norms $\|\cdot\|_{\hfrak_{+1}}$ and $\|\cdot\|_{\mathfrak{h}\otimes\fock_{+1}}$ must be equivalent, i.e. there must exist two constants $c_1,c_2>0$ such that
\begin{equation}\label{eq:equivnorms}
	c_1\,\|u\otimes\Psi\|_{\mathfrak{h}\otimes\fock_{+1}}\leq \|u\otimes\Psi\|_{\hfrak_{+1}}\leq c_2\,\|u\otimes\Psi\|_{\mathfrak{h}\otimes\fock_{+1}}.
\end{equation}
Now, explicitly
\begin{eqnarray}
\|u\otimes\Psi\|^2_{\hfrak_{+1}}&=&\braket{u,Au}_{\mathfrak{h}}\,\|\Psi\|^2_{\fock}+\|u\|^2_{\mathfrak{h}}\,\|\dOmega^{1/2}\Psi\|^2_{\fock}+\|u\|^2_{\mathfrak{h}}\,\|\Psi\|^2_{\fock}\nonumber\\
&=&\braket{u,Au}_{\mathfrak{h}}\,\|\Psi\|^2_{\fock}+\|u\|^2_{\mathfrak{h}}\,\|\Psi\|^2_{\fock_{+1}}\nonumber\\
&=&\braket{u,Au}_{\mathfrak{h}}\,\|\Psi\|^2_{\fock}+\|u\otimes\Psi\|^2_{\mathfrak{h}\otimes\fock_{+1}}.
\end{eqnarray}
Since we are assuming $A\geq0$, clearly $\|u\otimes\Psi\|_{\hfrak_{+1}}\geq\|u\otimes\Psi\|_{\mathfrak{h}\otimes\fock_{+1}}$; besides, since $A$ is a bounded operator on $\mathfrak{h}$,
\begin{eqnarray}
\|u\otimes\Psi\|^2_{\hfrak_{+1}}&\leq&\|A\|_{\mathcal{B}(\mathfrak{h})}\|u\|^2_{\mathfrak{h}}\|\Psi\|^2_{\fock}+\|u\otimes\Psi\|^2_{\mathfrak{h}\otimes\fock_{+1}}\nonumber\\
	&\leq&\left(1+\|A\|_{\mathcal{B}(\mathfrak{h})}\right)\|u\otimes\Psi\|^2_{\mathfrak{h}\otimes\fock_{+1}}.
\end{eqnarray}
Therefore, Eq.~\eqref{eq:equivnorms} holds with $c_1=1$ and $c_2=1+\|A\|_{\mathcal{B}(\mathfrak{h})}$. This implies that the two Hilbert spaces are equal, and so are their duals $\hfrak_{-1}$ and $\mathfrak{h}\otimes\focks_{-1}$.
\end{proof}

\begin{proposition}[Singular GSB models]\label{prop:singgsb}
	Let $H_0$ as in Eq.~\eqref{eq:h02}. The following facts holds:
	\begin{itemize}
		\item[(i)] given $f_1,\dots,f_r\in\hilb_{-1}$, $B_1,\dots,B_r\in\mathcal{B}(\mathfrak{h})$, and a coupling constant $\lambda\in\mathbb{R}$, the expression
			\begin{equation}\label{eq:def_gsb_bis}
		H_{f_1,\dots,f_r}=H_0+\lambda\sum_{j=1}^r\left(B_j\otimes a^\dag(f_j)+B_j^*\otimes a(f_j)\right)
		\end{equation}
		defines a continuous map between the Hilbert spaces $\hfrak_{+1}$ and $\hfrak_{-1}$;
		\item[(ii)] for $\lambda$ small enough, there exists $\mathcal{D}(H_{f_1,\dots,f_r})\subset\hfrak$ such that the restriction of $H_{f_1,\dots,f_r}$ to $\mathcal{D}(H_{f_1,\dots,f_r})$ defines a self-adjoint operator on $\hfrak$ with form domain 
		\begin{equation}
\mathcal{Q}(H_{f_1,\dots,f_r})=\mathcal{Q}(H_0)=\mathfrak{h}\otimes\mathcal{Q}(\dOmega);
		\end{equation}
		if $f_1,\dots,f_r\in\hilb$, said operator coincides with a regular GSB model;
		\item [(iii)] let $f_1,\dots,f_r\in\hilb_{-1}\setminus\hilb$, and $H_{f_1,\dots,f_r}$ as defined above. Then there is a family of sequences $\{f^i_1\}_{i\in\mathbb{N}},\dots,\{f^i_r\}_{n\in\mathbb{N}}$ such that
		\begin{equation}
			H_{f^i_1,\dots,f^i_r}\overset{i\to\infty}{\to}H_{f_1,\dots,f_r}\quad\text{in the norm resolvent sense.}
		\end{equation}
	\end{itemize}	
\end{proposition}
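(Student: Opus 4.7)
The plan is to establish (i)--(iii) in sequence, with Lemma~\ref{lemma} and the scale machinery from Section~\ref{sec:singcreation} doing essentially all of the work.

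For (i), Lemma~\ref{lemma} identifies $\hfrak_{\pm1}$ with $\mathfrak{h}\otimes\focks_{\pm1}$ as Hilbert spaces (with equivalent norms), and Remark~\ref{remark} tells us that $a(f_j),a^\dag(f_j)\in\mathcal{B}(\focks_{+1},\focks_{-1})$ with operator norm bounded by $\|f_j\|_{-1}$. Tensoring with the bounded operators $B_j,B_j^*\in\mathcal{B}(\mathfrak{h})$ yields continuous maps in $\mathcal{B}(\hfrak_{+1},\hfrak_{-1})$, and $H_0$ itself, interpreted on its own scale, is already a continuous map $\hfrak_{+1}\to\hfrak_{-1}$. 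Summing, one obtains (i) together with a bound of the form $\|H_{f_1,\dots,f_r}-H_0\|_{\mathcal{B}(\hfrak_{+1},\hfrak_{-1})}\leq 2|\lambda|\sum_j\|B_j\|_{\mathcal{B}(\mathfrak{h})}\|f_j\|_{-1}$.

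For (ii), the main thing to verify is that the interaction term is symmetric in $\mathcal{B}(\hfrak_{+1},\hfrak_{-1})$: this is immediate from the mutual adjointness of $a(f_j)$ and $a^\dag(f_j)$ in the scale (Remark~\ref{remark}, Eq.~\eqref{eq:adjoint2}) combined with the tensor structure, which makes $B_j\otimes a^\dag(f_j)$ the adjoint of $B_j^*\otimes a(f_j)$ in the duality pairing of $\hfrak_{+1}$ with $\hfrak_{-1}$. Prop.~\ref{prop:klmn_revisited} then applies and, for $|\lambda|$ below a threshold dictated by the estimate in (i), delivers a self-adjoint operator on $\hfrak$ with form domain $\mathcal{Q}(H_0)=\mathfrak{h}\otimes\mathcal{Q}(\dOmega)$. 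When $f_1,\dots,f_r\in\hilb$, Props.~\ref{prop:af_sing}--\ref{prop:adagf_sing} guarantee that the singular and regular creation/annihilation operators agree on $\focks_{+1}$; both the regular GSB Hamiltonian and the one produced by Prop.~\ref{prop:klmn_revisited} are thus self-adjoint operators associated with the same sesquilinear form on $\mathcal{Q}(H_0)$, and the uniqueness clause of KLMN forces their identification.

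For (iii), I would use Prop.~\ref{prop:singlimit} with Remark~\ref{remark2} to pick, for each $j$, a sequence $\{f_j^i\}_{i\in\mathbb{N}}\subset\hilb$ with $a(f_j^i)\to a(f_j)$ and $a^\dag(f_j^i)\to a^\dag(f_j)$ in $\mathcal{B}(\focks_{+1},\focks_{-1})$. The estimate from (i) then gives $\|H_{f_1,\dots,f_r}-H_{f_1^i,\dots,f_r^i}\|_{\mathcal{B}(\hfrak_{+1},\hfrak_{-1})}\to 0$. Norm resolvent convergence would follow from the second resolvent identity interpreted on the scale: for $|\lambda|$ uniformly small, the KLMN construction ensures that all the relevant resolvents extend to bounded maps $\hfrak_{-1}\to\hfrak_{+1}$, uniformly in $i$, so sandwiching the operator difference between two such resolvents yields an element of $\mathcal{B}(\hfrak)$ whose norm is controlled by the product of the uniformly bounded resolvent norms and the vanishing norm of the difference.

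The main obstacle, as I see it, is the resolvent-identity step in (iii): one must spell out that the KLMN resolvents really do extend continuously from $\hfrak_{-1}$ into $\hfrak_{+1}$ \emph{uniformly in} $i$, and that the extended form of the second resolvent identity is legitimate in that scale. This is standard Hilbert-scale folklore, but it is the only place in the argument where some care beyond bookkeeping is needed; everything else reduces to a direct application of the results collected in the preceding sections.
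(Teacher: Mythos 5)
Your argument follows the paper's proof essentially step for step: (i) via Lemma~\ref{lemma} and Remark~\ref{remark}, (ii) via Prop.~\ref{prop:klmn_revisited} together with the compatibility of singular and regular creation/annihilation operators on $\focks_{+1}$, and (iii) via Prop.~\ref{prop:singlimit}. The only divergence is that the ``main obstacle'' you flag at the end --- passing from convergence in $\mathcal{B}(\hfrak_{+1},\hfrak_{-1})$ to norm resolvent convergence --- is exactly the content of a standard result the paper simply cites (Reed--Simon, Theorem~VIII.25), so your hand-rolled second-resolvent-identity argument is correct but unnecessary.
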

With the usual abuse of notation, the symbol $H_{f_1,\dots,f_r}$ will be used both for the continuous map between $\hfrak_{+1}$ and $\hfrak_{-1}$ defined by Eq.~\eqref{eq:def_gsb_bis}, as well as the unbounded operator on $\hfrak$ associated with it.

\begin{proof}
$(i)$ By the properties of Hilbert scales, $H_0$ can be interpreted as a continuous operator between $\hfrak_{+1}$ and $\hfrak_{-1}$. Besides, both operators $\a{f}$ and $\adag{f}$, interpreted in the sense of Props.~\ref{prop:af_sing}--\ref{prop:adagf_sing}, map continuously $\focks_{+1}$ in $\focks_{-1}$ (see Remark~\ref{remark}); since $\mathfrak{h}\otimes\focks_{\pm1}$ and $\hfrak_{\pm1}$ are isomorphic by Lemma~\ref{lemma}, the claim follows.

$(ii)$ The existence, for sufficiently small $\lambda$, of a self-adjoint operator on $\hfrak$ satisfying the desired properties, follows directly from Prop.~\ref{prop:klmn_revisited} (also see Remark~\ref{remark:klmn}), with the roles of $\mathcal{K}$, $T_0$ and $T_1$ being played respectively by $\hfrak$, $H_0$, and the map
\begin{equation}
	V_{f_1,\dots,f_r}=\sum_{j=1}^r\left(B_j^*\otimes a(f_j)+B_j\otimes a^\dag(f_j)\right).
\end{equation}
In the case in which all form factors are normalizable ($f_1,\dots,f_r\in\hilb$), the aforementioned operator coincides with the regular GSB model obtained by interpreting Eq.~\eqref{eq:def_gsb_bis} in the sense of operators on $\focks$, since, by Props.~\ref{prop:af_sing}--\ref{prop:adagf_sing}, the regular and singular creation and annihilation operators coincide on $\focks_{+1}=\mathcal{D}(\dOmega^{1/2})\supset\mathcal{D}(\dOmega)$.

$(iii)$ Because of Prop.~\ref{prop:singlimit} (also see Remark~\ref{remark2}), there exist sequences $\{f^i_1\}_{i\in\mathbb{N}},\dots,\{f^i_r\}_{n\in\mathbb{N}}$ such that, as $i\to\infty$, $a(f^i_j)\to a(f_j),a^\dag(f^i_j)\to a^\dag(f_j)$ in the sense of continuous maps between $\focks_{+1}$ and $\focks_{-1}$. This readily implies that $H_{f^i_1,\dots,f^i_r}\to H_{f_1,\dots,f_r}$ in the sense of continuous maps between $\hfrak_{+1}$ and $\hfrak_{-1}$. By~\cite[Theorem~VIII.25]{reed1972methods}, this implies convergence in the norm resolvent sense.
\end{proof}

We have thus defined a family of self-adjoint operators on $\focks$, depending on a family of functions $f_1,\dots,f_r\in\hilb_{-1}\supset\hilb$, which does include ``regular'' GSB models ($f_1,\dots,f_r\in\hilb$) as a special case, and can be approximated by them in the norm resolvent topology; as such, they are the ``correct'' generalization of GSB models, hence justifying our nomenclature. Importantly, the formalism of singular creation and annihilation operators gives a precise mathematical meaning to the formal expressions analogous to Eq.~\eqref{eq:def_gsb_bis} often encountered in the physical literature.

Some remarks are in order. First of all, norm resolvent convergence is a powerful notion of convergence for unbounded self-adjoint operators: for one, it ensures that the unitary evolution group generated by a singular GSB model can be approximated, in the strong sense, by the evolution groups generated by a proper sequence of regular models~\cite{reed1972methods,teschl2009mathematical,de2008intermediate,derezinski2013unbounded}; furthermore, as a consequence of norm resolvent convergence, the spectral properties of singular GSB models are largely inherited by those of regular GSB models. We will leave a detailed study of these questions to future works.

We point out that, as discussed in Remark~\ref{remark:klmn} for the general case, Prop.~\ref{prop:singgsb} does not give us information about the operator domain of a singular GSB model, which, differently from the regular case, will depend nontrivially on the form factors $f_1,\dots,f_r$ as well as the coupling constant $\lambda$; nevertheless, the form domain is still the same as the one in the regular case. 

Finally, we remark that this is a perturbative result: it only holds for sufficiently small values of $\lambda$, and thus, in principle, may improved via different techniques. In this spirit, nonperturbative results for a specific class of GSB models will be analyzed in Sections~\ref{sec:singrwa}--\ref{sec:singrwa2}.

\section{The rotating-wave spin-boson model with $f\in\hilb_{-1}$}\label{sec:singrwa}

We will hereafter focus on the rotating-wave (RW) spin-boson model, cf.~Eq.~\eqref{eq:def_sbrwa}, for which the formalism of Hilbert scales turns out to be particularly useful. In order to keep the discussion simple, we will mostly deal with the case of a single two-level system.

Following a different strategy, we will show that, given $f\in\hilb_{-1}$, it is possible to define a self-adjoint operator on $\mathbb{C}^2\otimes\focks\cong\focks\oplus\focks$ which can be obtained as the norm resolvent limit of a sequence of RW spin-boson models with form factor $f^i\in\hilb$; as such, it represents the correct extension of the RW spin-boson model to a form factor $f\in\hilb_{-1}$, that is, a ``singular'' RW spin-boson model. This result improves the general one of Section~\ref{sec:singgsb} in two directions:
\begin{itemize}
	\item it is a nonperturbative result: it holds for every value of the coupling constant $\lambda$, and not just for sufficiently small values;
	\item by construction, it allows for an explicit evaluation of the resolvent (thus allowing for a direct study of the spectral properties of the model) as well as the operator domain.
\end{itemize}
A further extension of these results to ``more singular'' form factors, up to $f\in\hilb_{-2}$, will be discussed in Section~\ref{sec:singrwa2}. This nontrivial refinement will be obtained via an interesting \textit{renormalization} procedure.

We will start our analysis by investigating, in Subsection~\ref{subsec:regular}, the mathematical properties of the model with form factor $f\in\hilb$ and its decomposition on sectors with a fixed number of excitations. Its extension to a non-normalizable form factor $f\in\hilb_{-1}$ (Theorem~\ref{thm:singrwa}) will be presented and discussed in Subsection~\ref{subsec:result}, along with some remarks about its possible generalizations to the many-atom case.

\subsection{The rotating-wave spin-boson model}\label{subsec:regular}   
Let us start by analyzing in greater detail the structure of the rotating-wave spin-boson model in the regular case. As already discussed in Subsection~\ref{subsec:gsb}, given $f\in\hilb$, the model is defined on $\mathbb{C}^2\otimes\focks$ via
\begin{equation}\label{eq:def_sbrwa_2}
H_{f}=H_0+\lambda\left(\sigma_+\otimes a(f)+\sigma_-\otimes a^\dag(f)\right).
\end{equation}
Without loss of generality, we will set the ground state energy $\omega_{\mathrm{g}}$ to zero hereafter, so that
\begin{equation}
	H_0=\begin{pmatrix}
		\omega_{\mathrm{e}}&0\\
		0&0
	\end{pmatrix}\otimes I+I\otimes\dOmega.
\end{equation}
Since $\hfrak=\mathbb{C}^2\otimes\focks\simeq\focks\oplus\focks$, we can write the most general element of the total Hilbert space as a column vector:
\begin{equation}
\begin{pmatrix}
\Psie\\\Psig
\end{pmatrix},\qquad\Psie,\Psig\in\focks,
\end{equation}
with $\Psie,\Psig$ being the states of the boson field when the atom is respectively in its excited and ground state; in particular, the states $(\Omega,0)^\intercal$ and $(0,\Omega)^\intercal$ are the states in which the atom is respectively in its excited and ground state, and the boson field is in its vacuum state $\Omega$. We shall work in this representation hereafter. The model can thus be written in a formal matrix fashion:
\begin{equation}\label{eq:matrix}
H_{f,\omega_\e}=\begin{pmatrix}
\omega_{\mathrm{e}}+\dOmega&\lambda\,\a{f}\\
\lambda\,\adag{f}&\dOmega
\end{pmatrix},
\end{equation}
with domain $\mathcal{D}(H_{f,\omega_\e})\simeq\mathcal{D}(\dOmega)\oplus\mathcal{D}(\dOmega)$; for future convenience, we will hereafter indicate explicitly the dependence of the model on $\omega_\e$.

A peculiar feature of this model, which makes it particularly easy to study, is the following one: the model preserves the total number of excitations of the system, thus being decomposed into a direct sum. Let us elaborate on that.
\begin{proposition}\label{prop:numexc}
	Let $\hfrak^{(0)}=\{0\}\oplus\hilb^{(0)}$ and, for all $n\geq1$, let $\hfrak^{(n)}=\hilb^{(n-1)}\oplus\hilb^{(n)}$. Then $\hfrak^{(n)}$ is a reducing subspace for $H_{f,\omega_\e}$, and
	\begin{equation}\label{eq:decom}
	H_{f,\omega_\e}=\bigoplus_{n\in\mathbb{N}}H_{f,\omega_\e}^{(n)},
	\end{equation}
	with $H_{f,\omega_\e}^{(n)}$ being the restriction of $H_{f,\omega_\e}$ to $\hfrak^{(n)}$.
\end{proposition}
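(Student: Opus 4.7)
The plan is to establish the decomposition in three conceptual steps: identify $\{\hfrak^{(n)}\}_{n\in\mathbb{N}}$ as an orthogonal Hilbert-space decomposition of $\hfrak$, verify that each $\hfrak^{(n)}$ is invariant under the action of $H_{f,\omega_\e}$ on domain vectors, and upgrade invariance to the full reducing-subspace property so that the direct-sum decomposition of the operator follows.

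The first step is immediate: since $\focks=\bigoplus_{n\in\mathbb{N}}\hilb^{(n)}$ by definition, the identification $\hfrak\simeq\focks\oplus\focks$ together with a mere re-indexing (the $n$th sector regrouping a ground-state vector with one more boson than its excited-state partner) yields $\hfrak=\bigoplus_{n\in\mathbb{N}}\hfrak^{(n)}$ as an orthogonal sum. Note that this grouping corresponds precisely to fixing the total number of ``excitations,'' counting the atomic excitation plus the bosonic ones.

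For the second step, take $\Psi=(\Psie,\Psig)^{\intercal}\in\mathcal{D}(H_{f,\omega_\e})\cap\hfrak^{(n)}$, so that $\Psie\in\hilb^{(n-1)}$ and $\Psig\in\hilb^{(n)}$ (with the convention $\hilb^{(-1)}=\{0\}$ when $n=0$). The matrix form~\eqref{eq:matrix} gives
\begin{equation}
H_{f,\omega_\e}\Psi=\begin{pmatrix}(\omega_{\mathrm{e}}+\dOmega)\Psie+\lambda\,\a{f}\Psig\\ \lambda\,\adag{f}\Psie+\dOmega\Psig\end{pmatrix}.
\end{equation}
Since $\dOmega$ preserves each $\hilb^{(m)}$, and the ladder property~\eqref{eq:ladder} ensures $\a{f}\hilb^{(n)}\subset\hilb^{(n-1)}$ and $\adag{f}\hilb^{(n-1)}\subset\hilb^{(n)}$, the resulting vector lies in $\hilb^{(n-1)}\oplus\hilb^{(n)}=\hfrak^{(n)}$. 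Hence $H_{f,\omega_\e}$ maps $\mathcal{D}(H_{f,\omega_\e})\cap\hfrak^{(n)}$ into $\hfrak^{(n)}$.

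For the third step, let $P_n$ denote the orthogonal projection onto $\hfrak^{(n)}$. A generic $\Psi=(\Psie,\Psig)^{\intercal}\in\mathcal{D}(H_{f,\omega_\e})=\mathcal{D}(\dOmega)\oplus\mathcal{D}(\dOmega)$ decomposes as $\Psie=\sum_m\Psie^{(m)}$, $\Psig=\sum_m\Psig^{(m)}$ with $\Psie^{(m)},\Psig^{(m)}\in\hilb^{(m)}$. Because $\dOmega=\bigoplus_m\omega^{(m)}$ is a direct sum of self-adjoint operators, membership in $\mathcal{D}(\dOmega)$ is equivalent to $\sum_m\|\omega^{(m)}\Psie^{(m)}\|^2<\infty$ (and likewise for $\Psig$), so each sectorial projection $P_n\Psi=(\Psie^{(n-1)},\Psig^{(n)})^{\intercal}$ lies in $\mathcal{D}(H_{f,\omega_\e})$ and the partial sums converge to $\Psi$ in the graph norm. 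Combined with step two, this is exactly the condition $P_n\mathcal{D}(H_{f,\omega_\e})\subset\mathcal{D}(H_{f,\omega_\e})$ together with $H_{f,\omega_\e}P_n\Psi=P_nH_{f,\omega_\e}\Psi$, i.e.\ $\hfrak^{(n)}$ reduces $H_{f,\omega_\e}$. The direct-sum decomposition~\eqref{eq:decom}, with $H_{f,\omega_\e}^{(n)}$ self-adjoint on $\mathcal{D}(H_{f,\omega_\e})\cap\hfrak^{(n)}$, then follows from the standard theory of reducing decompositions of self-adjoint operators (already used in the definition of $\mathrm{d}\Gamma(\cdot)$). The argument is essentially bookkeeping driven by~\eqref{eq:ladder}; the only subtlety worth flagging explicitly is the invariance of the domain under the projections $P_n$, which is granted by the direct-sum structure of $\mathcal{D}(\dOmega)$.
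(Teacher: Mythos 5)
Your proof is correct and follows essentially the same route as the paper's: the orthogonal decomposition $\hfrak=\bigoplus_{n\in\mathbb{N}}\hfrak^{(n)}$ combined with the ladder properties~\eqref{eq:ladder} gives sector invariance on the domain. Your third step, verifying explicitly that the projections $P_n$ preserve $\mathcal{D}(H_{f,\omega_\e})=\mathcal{D}(\dOmega)\oplus\mathcal{D}(\dOmega)$ so that invariance upgrades to the reducing-subspace property, spells out a technical point the paper leaves implicit, but the argument is the same.
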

\begin{proof}
	The claim is obvious for $n=0$, so let $n\geq1$. The most generic element of $\hfrak^{(n)}$ can be written as $(\Psie^{(n-1)},\Psig^{(n)})^\intercal$, with $\Psie^{(n-1)}\in\hilb^{(n-1)}$ and $\Psig^{(n)}\in\hilbn$. By construction, if $\Psie^{(n-1)}\in\mathcal{D}(\omega^{(n-1)})$, $\Psig^{(n)}\in\mathcal{D}(\omega^{(n)})$, we have\begin{equation}
	H_{f,\omega_\e}\begin{pmatrix}
	\Psie^{(n-1)}\\\Psig^{(n)}
	\end{pmatrix}=\begin{pmatrix}
	(\omega_{\mathrm{e}}+\dOmega)\Psie^{(n-1)}+\lambda\,a(f)\Psig^{(n)}\\
	\dOmega\Psig^{(n)}+\lambda\,a^\dag(f)\Psie^{(n-1)}
	\end{pmatrix},
	\end{equation}
	and the claim follows by the known properties of $\dOmega,\a{f}$, and $\adag{f}$ (see Eq.~\eqref{eq:ladder}). Since $\hfrak=\oplus_{n\in\mathbb{N}}\hfrak^{(n)}$, $H_{f,\omega_\e}$ is thus decomposed as in Eq.~\eqref{eq:decom}.
\end{proof}
Necessarily, each operator $H_{f,\omega_\e}^{(n)}$ is self-adjoint on $\hfrak^{(n)}$. Physically, $\hfrak^{(n)}$ can be interpreted as the subspace of all states in $\hfrak$ that have $n$ excitations, i.e. either the atom is its ground state and there are $n$ bosons in the field, or the atom is in its excited state and there are $n-1$ bosons in the field. Indeed, the interaction term in Eq.~\eqref{eq:def_sbrwa_2} is made in such a way to implement one of the following transitions:
\begin{itemize}
	\item the atom switches from the excited to the ground state, and a boson with wavefunction $f$ is created in the process;
	\item the atom switches from the ground to the excited state, and a boson with wavefunction $f$ is annihilated in the process.
\end{itemize}
Mathematically, $\hfrak^{(n)}$ is the $n$th eigenspace of the operator
\begin{equation}
	N_{\mathrm{exc}}=\begin{pmatrix}
	N+1&0\\
	0&N
\end{pmatrix}.
\end{equation}
In particular, on the single-excitation sector $\hfrak^{(1)}=\mathbb{C}\oplus\hilb$, the model acts as
\begin{equation}\label{eq:fl}
H_{f,\omega_\e}^{(1)}=\begin{pmatrix}
\omega_{\mathrm{e}}&\lambda\braket{f,\cdot}\\
\lambda\,f&\omega
\end{pmatrix},
\end{equation}
and corresponds to a Friedrichs (or Friedrichs-Lee) model~\cite{gadella2011friedrichs,lakaev1989some,facchi2021spectral}; remarkably, a singular version of $H_{f,\omega_\e}^{(1)}$, accommodating a form factor up to $f\in\hilb_{-2}$, has been indeed constructed~\cite{derezinski2002renormalization,facchi2021spectral}.

\begin{remark}\label{remark:natom}
	The $r$-atom generalization of this model can be investigated similarly; as an example, let us briefly discuss the case $r=2$. Now $\hfrak=\mathbb{C}^{2}\otimes\mathbb{C}^2\otimes\focks\simeq\oplus_{j=1}^4\focks$, and the most general element of the total Hilbert space can be written as
	\begin{equation}\label{eq:4}\left(
		\begin{array}{c}
			\Psi_{\mathrm{ee}}\\\hline
			\Psi_{\mathrm{eg}}\\
			\Psi_{\mathrm{ge}}\\\hline
			\Psi_{\mathrm{gg}}
		\end{array}\right),\qquad\Psi_{\mathrm{ee}},\Psi_{\mathrm{eg}},\Psi_{\mathrm{ge}},\Psi_{\mathrm{gg}}\in\focks,
	\end{equation}
where $\Psi_{xx'}$, $x,x'\in\{\mathrm{e,g}\}$ corresponds to the states of the boson field when the first and the second atom are respectively in the $x$ and $x'$ state. Given $f_1,f_2\in\hilb$, again the model can be written in a matrix fashion similar to Eq.~\eqref{eq:matrix}, namely
\begin{equation}\label{eq:matrix2}
	H_{f_1,f_2}=\left(\begin{array}{c|cc|c}
		H_{\mathrm{ee}}&\lambda\,\a{f_2}&\lambda\,\a{f_1}&0\\\hline
		\lambda\,\adag{f_2}&H_{\mathrm{eg}}&0&\lambda\,\a{f_1}\\
		\lambda\,\adag{f_1}&0&H_{\mathrm{ge}}&\lambda\,\a{f_2}\\\hline
		0&\lambda\,\adag{f_1}&\lambda\,\adag{f_2}&H_{\mathrm{gg}}
	\end{array}\right),
\end{equation}
where, for brevity, $H_{xx'}=\omega_{x,1}+\omega_{x',2}+\dOmega$ for $x,x'\in\{\mathrm{e,g}\}$, with $\omega_{x,j}$ being the energy of the $j$th atom in its $x$ state; the domain of $H_{f_1,f_2}$ is given by $\mathcal{D}(H_{f_1,f_2})=\oplus_{j=1}^4\mathcal{D}(\dOmega)$. In both Eqs.~\eqref{eq:4}--\eqref{eq:matrix2}, we have stressed the distinction between ``sectors'' with an equal number of atoms in the excited state. Finally, a decomposition analogous to the one discussed in Prop.~\ref{prop:numexc} may be found as well.
\end{remark}

\subsection{Extension of the model to form factors $f\in\hilb_{-1}$}\label{subsec:result}
Let $f\in\hilb_{-1}$, and let us consider again the following expression,
\begin{equation}\label{eq:formal}
H_{f,\omega_\e}=\begin{pmatrix}
\omega_{\mathrm{e}}+\dOmega&\lambda\,\a{f}\\
\lambda\,\adag{f}&\dOmega
\end{pmatrix}.
\end{equation}
When $f\in\hilb_{-1}\setminus\hilb$, such an expression cannot obviously define an operator on $\focks\oplus\focks$, since $\adag{f}$ has values outside the Fock space $\focks$; still, as already seen in the general case of GSB models, it does define a continuous operator between the Hilbert spaces $\focks_{+1}\oplus\focks_{+1}$ and $\focks_{-1}\oplus\focks_{-1}$. Since, by Lemma~\ref{lemma},
\begin{equation}
	\hfrak_{\pm1}\simeq\mathbb{C}^2\otimes\focks_{\pm1}\simeq\focks_{\pm1}\oplus\focks_{\pm1},
\end{equation}
with $\{\hfrak_s\}_{s\in\mathbb{R}}$ being the scale of Fock spaces associated with $H_0$, then for every choice of $f\in\hilb_{-1}$ the expression in Eq.~\eqref{eq:formal} defines a continuous operator between $\hfrak_{+1}$ and $\hfrak_{-1}$.

We wonder whether we can interpret $H_{f,\omega_\e}$ as a self-adjoint operator on $\focks\oplus\focks$, i.e. whether we can find a self-adjointness domain in $\focks\oplus\focks$ for it. The existence of such a domain (for small coupling) is ensured by Prop.~\ref{prop:singgsb}; here, however, we will be able to find explicitly such a domain without requiring the coupling constant to be small.

Let us start from a preliminary lemma.
\begin{lemma}\label{lemma2}
Let $f\in\hilb_{-1}$, $\omega_\e,\lambda\in\mathbb{R}$, and define the operators\footnote{Strictly speaking, we should also exclude $z=0$ from the definition of $\mathcal{S}_f(z)$ in Eq.~\eqref{eq:sf}, since $\dOmega$ is not invertible because of $\dOmega\Omega=0$, with $\Omega$ being the vacuum state. However, the presence of the creation operator at the right in the definition of $\mathcal{S}_f(z)$ allows us to extend the definition of $\mathcal{S}_f(z)$ to $z=0$.}
\begin{eqnarray}\label{eq:gf}
	\mathcal{S}_f(z)&=&\a{f}\frac{1}{\dOmega-z}\adag{f},\qquad z\in\mathbb{C}\setminus[m,\infty);\\\label{eq:sf} \mathcal{G}_{f,\omega_\e}(z)&=&\omega_{\mathrm{e}}-z+\dOmega-\lambda^2\mathcal{S}_f(z),\qquad z\in\mathbb{C}\setminus\mathbb{R},
\end{eqnarray}
with $\a{f}$, $\adag{f}$ to be interpreted in the sense of Props.~\ref{prop:af_sing}--\ref{prop:adagf_sing}. Then
\begin{itemize}
	\item $\mathcal{S}_f(z)$ is a bounded operator on $\focks$, with $\mathcal{S}_f(\bar{z})=\mathcal{S}_f(z)^*$;
	\item $\mathcal{G}_{f,\omega_\e}(z)$, with domain $\mathcal{D}(\mathcal{G}_{f,\omega_\e}(z))=\mathcal{D}(\dOmega)$, is a closed operator on $\focks$ satisfying $\mathcal{G}_{f,\omega_\e}(\bar{z})=\mathcal{G}_{f,\omega_\e}(z)^*$ and admitting a bounded inverse $\mathcal{G}_{f,\omega_\e}^{-1}(z)$ with operator norm
	\begin{equation}\label{eq:boundednorm1}
		\left\|\mathcal{G}_{f,\omega_\e}^{-1}(z)\right\|_{\mathcal{B}(\focks)}\leq\frac{1}{|\Im z|}.
	\end{equation}
\end{itemize}
\end{lemma}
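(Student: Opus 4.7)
My strategy is to treat the two bullets in order, leveraging throughout the scale formalism of Section~\ref{sec:singcreation}. For the first bullet, Propositions~\ref{prop:af_sing}--\ref{prop:adagf_sing} give $\adag{f}\in\mathcal{B}(\focks,\focks_{-1})$ and $\a{f}\in\mathcal{B}(\focks_{+1},\focks)$ with operator norms bounded by $\|f\|_{-1}$; the middle factor $(\dOmega-z)^{-1}$ extends, by functional calculus on the self-adjoint operator $\dOmega$, to a bounded map $\focks_{-1}\to\focks_{+1}$ for every $z$ in the resolvent set of $\dOmega$ (the case $z=0$ being covered by the footnote, since $\adag{f}\Psi$ has no vacuum component). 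Composing the three maps yields $\mathcal{S}_f(z)\in\mathcal{B}(\focks)$, while the identity $\mathcal{S}_f(\bar z)=\mathcal{S}_f(z)^*$ follows from the self-adjoint resolvent relation $(\dOmega-z)^{-1\,*}=(\dOmega-\bar z)^{-1}$ together with the mutual adjointness of $\a{f}$ and $\adag{f}$ with respect to the pairing between $\focks_{\mp1}$ and $\focks_{\pm1}$. The closedness and adjoint properties of $\mathcal{G}_{f,\omega_\e}(z)$ are then immediate: $\dOmega$ is self-adjoint on $\mathcal{D}(\dOmega)$, and both $(\omega_\e-z)I$ and $\lambda^2\mathcal{S}_f(z)$ are everywhere-defined bounded operators, so $\mathcal{G}_{f,\omega_\e}(z)$ is a bounded perturbation of a closed operator on $\mathcal{D}(\dOmega)$, while $\mathcal{G}_{f,\omega_\e}(\bar z)=\mathcal{G}_{f,\omega_\e}(z)^*$ follows term by term.

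The crux of the lemma is the invertibility with the estimate $\|\mathcal{G}_{f,\omega_\e}^{-1}(z)\|_{\mathcal{B}(\focks)}\leq 1/|\Im z|$. My plan is to compute $\Im\braket{\Psi,\mathcal{G}_{f,\omega_\e}(z)\Psi}_\focks$ for $\Psi\in\mathcal{D}(\dOmega)$. The $\omega_\e+\dOmega$ contribution is real, the $-z$ contribution gives $-\Im z\,\|\Psi\|^2_\focks$, and the remaining term I would rewrite, using the adjoint relation of Proposition~\ref{prop:adagf_sing}, as
\[
\braket{\Psi,\mathcal{S}_f(z)\Psi}_\focks=\bigl(\adag{f}\Psi,\xi\bigr)_{\focks_{-1},\focks_{+1}},\qquad \xi=(\dOmega-z)^{-1}\adag{f}\Psi\in\focks_{+1}.
\]
Since $\adag{f}\Psi=(\dOmega-z)\xi$ in $\focks_{-1}$, this pairing equals $\|\dOmega^{1/2}\xi\|^2_\focks-\bar z\,\|\xi\|^2_\focks$ by the quadratic form associated with $\dOmega$ on its form domain $\focks_{+1}$, whose imaginary part is $\Im z\cdot\|\xi\|^2_\focks$. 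Collecting,
\[
\Im\braket{\Psi,\mathcal{G}_{f,\omega_\e}(z)\Psi}_\focks=-\Im z\,\bigl(\|\Psi\|^2_\focks+\lambda^2\|\xi\|^2_\focks\bigr),
\]
whose absolute value is at least $|\Im z|\,\|\Psi\|^2_\focks$. Cauchy--Schwarz then gives $\|\mathcal{G}_{f,\omega_\e}(z)\Psi\|_\focks\geq|\Im z|\,\|\Psi\|_\focks$, hence injectivity, closed range, and the sought norm bound on $\mathcal{G}_{f,\omega_\e}^{-1}(z)$ on its range. Surjectivity is obtained by applying the same estimate to $\bar z$: $\ker\mathcal{G}_{f,\omega_\e}(z)^*=\ker\mathcal{G}_{f,\omega_\e}(\bar z)=\{0\}$ implies $\overline{\mathrm{Ran}(\mathcal{G}_{f,\omega_\e}(z))}=\focks$, and combined with the closed-range property this yields bijectivity from $\mathcal{D}(\dOmega)$ onto $\focks$.

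The main obstacle, I anticipate, is precisely the rewriting of $\braket{\Psi,\mathcal{S}_f(z)\Psi}$ carried out above. One must be scrupulous that $\adag{f}\Psi$ only lies in the larger space $\focks_{-1}$ when $f\notin\hilb$, so the manipulation cannot be performed within a single Hilbert space and relies genuinely on the $\dOmega$-scale, its duality pairing, and the compatibility of the scale structure with the quadratic form of $\dOmega$; once these formal steps are justified, everything else reduces to classical functional-calculus arguments.
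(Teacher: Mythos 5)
Your proposal is correct and follows essentially the same route as the paper: boundedness of $\mathcal{S}_f(z)$ by composing the three continuous maps on the $\dOmega$-scale, the adjoint relations term by term, and then the dissipativity estimate $\Im\braket{\Psi,\mathcal{G}_{f,\omega_\e}(z)\Psi}_\focks=-\Im z\,(\|\Psi\|^2_\focks+\lambda^2\|\xi\|^2_\focks)$ combined with Cauchy--Schwarz applied to both $z$ and $\bar z$ to obtain bounded invertibility with norm at most $1/|\Im z|$. The only cosmetic differences are that the paper extracts the imaginary part of $\mathcal{S}_f(z)$ via the first resolvent identity rather than via the quadratic form of $\dOmega$ on the scale, and that it cites an abstract invertibility criterion where you prove the closed-range/dense-range argument inline; both yield the same conclusion.
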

We will refer to $\mathcal{G}_{f,\omega_\e}^{-1}(z)$ as the \textit{propagator} of the model; its fundamental role will be clear momentarily.
\begin{proof}
	By Prop.~\ref{prop:adagf_sing}, $\adag{f}$ maps continuously $\focks$ in $\focks_{-1}$; by the standard properties of Hilbert scales and the fact that any nonreal $z$ belongs to the resolvent of $\dOmega$, $(\dOmega-z)^{-1}$ maps continuously $\focks_{-1}$ in $\focks_{+1}$; finally, by Prop.~\ref{prop:af_sing}, $\a{f}$ maps continuously $\focks_{+1}$ in $\focks$, hence $\mathcal{S}_f(z)$ is a bounded operator on $\focks$, and the property $\mathcal{S}_f(z)^*=\mathcal{S}_f(\bar{z})$ is immediate. Consequently, $\mathcal{G}_{f,\omega_\e}(z)$ is a well-defined closed operator with domain $\mathcal{D}(\dOmega)$, since it is simply obtained by summing a bounded operator to $\dOmega$, and satisfies $\mathcal{G}_{f,\omega_\e}(z)^*=\mathcal{G}_{f,\omega_\e}(\bar{z})$ as well.
	
	We must show that $\mathcal{G}_{f,\omega_\e}(z)$ admits a bounded inverse. This happens if and only if there is some $c>0$ such that, for all $\Psi\in\mathcal{D}(\dOmega)$,
		\begin{equation}\label{eq:dere}
				\|\mathcal{G}_{f,\omega_\e}(z)\Psi\|_\fock\geq c\|\Psi\|_\fock\quad\text{and}\quad\|\mathcal{G}_{f,\omega_\e}(z)^*\Psi\|_\fock\geq c\|\Psi\|_\fock,
			\end{equation}
		see e.g.~\cite[Theorem~3.3.2]{derezinski2013unbounded}. Now, for all $0\neq\Psi\in\mathcal{D}(\dOmega)$,
		\begin{eqnarray}
				\Im	\Braket{\Psi,\mathcal{S}_f(z)\Psi}_\fock&=&\Im z\Braket{\Psi,a(f)\frac{1}{\dOmega-z}\frac{1}{\dOmega-\bar{z}}a^\dag(f)\Psi}_\fock\nonumber\\
				&=&\Im z\left\|\frac{1}{\dOmega-\bar{z}}a^\dag(f)\Psi\right\|_\fock^2,
			\end{eqnarray}
		therefore
		\begin{eqnarray}
	\Im	\Braket{\Psi,\mathcal{G}_{f,\omega_\e}(z)\Psi}_\focks=-\Im z\left(\|\Psi\|_{\fock}^2+\lambda^2\left\|\frac{1}{\dOmega-\bar{z}}a^\dag(f)\Psi\right\|_\fock^2\right)
		\end{eqnarray}
	implying
		\begin{eqnarray}
				|\Braket{\Psi,\mathcal{G}_{f,\omega_\e}(z)\Psi}_\fock|\geq|\Im\Braket{\Psi,\mathcal{G}_{f,\omega_\e}(z)\Psi}_\fock|\geq|\Im z|\|\Psi\|^2_\fock>0,
			\end{eqnarray}
		also implying, by the Cauchy-Schwartz inequality,
		\begin{equation}
				\|\mathcal{G}_{f,\omega_\e}(z)\Psi\|_\fock\geq|\Im z|\|\Psi\|_\fock
			\end{equation}
		and thus, since $\mathcal{G}_{f,\omega_\e}(z)^*=\mathcal{G}_{f,\omega_\e}(\bar{z})$,
		\begin{equation}
				\|\mathcal{G}_{f,\omega_\e}(z)^*\Psi\|_\fock\geq|\Im z|\|\Psi\|_\fock,
			\end{equation}
		i.e. Eq.~\eqref{eq:dere} holds with $c=|\Im z|$, finally implying that $\mathcal{G}_{f,\omega_\e}(z)$ admits a bounded inverse in $\focks$ with operator norm satisfying Eq.~\eqref{eq:boundednorm1}.
	\end{proof}
\begin{remark}\label{remark:self}
We can also compute explicitly the action of $\mathcal{S}_f(z)$ on $\focks_{+1}$ by employing Eqs.~\eqref{eq:af_explicit2} and~\eqref{eq:adagf_explicit2}: given $\Psi\in\focks_{+1}$, we have
\begin{equation}
	\mathcal{S}_f(z)\Psi=\left(\bigoplus_{n\in\mathbb{N}}\mathcal{S}_f^{(n)}(z)\right)\Psi,
\end{equation}
where
	\begin{eqnarray}\label{eq:sfn}
			\left(\mathcal{S}_f^{(n)}(z)\Psi^{(n)}\right)(k_1,\dots,k_n)\!&=&\!\sum_{j=1}^n\left(\int\mathrm{d}\mu(\kappa)\,\frac{\overline{f(\kappa)}\Psi^{(n)}(k_1,\dots,\overbrace{\kappa}^{j\text{th}},\dots,k_n)}{\omega(\kappa)+\sum_{\ell=1}^n\omega(k_\ell)-z}\right)\!f(k_j)\nonumber\\&&+\!\left(\int\mathrm{d}\mu(\kappa)\frac{|f(\kappa)|^2}{\omega(\kappa)+\sum_{j=1}^n\omega(k_j)-z}\right)\!\Psi^{(n)}(k_1,\dots,k_n).
	\end{eqnarray}
In particular, for $n=0$, we simply have
\begin{equation}\label{eq:self}
	\mathcal{S}_f^{(0)}(z)\equiv\Sigma_f(z)=\int\frac{|f(k)|^2}{\omega(k)-z}\,\mathrm{d}\mu(k),
\end{equation}
which coincides with the (non-renormalized) self-energy of the single-excitation sector $H_{f,\omega_\e}^{(1)}$ of the model~\cite{facchi2021spectral,lonigro2021selfenergy}. Eq.~\eqref{eq:sfn} will play a crucial role in Section~\ref{sec:singrwa2}.
\end{remark}
We are now ready to state our main result of this section.
\begin{theorem}\label{thm:singrwa}
Let $f\in\hilb_{-1}$, and let $H_{f,\omega_\e}$ be the operator on $\focks$ with domain
\begin{equation}\label{eq:singdom}
	\mathcal{D}(H_{f,\omega_\e})=\left\{
	\begin{pmatrix}
	\Phie\\\Phig-\lambda\frac{1}{\dOmega+1}\adag{f}\Phie
	\end{pmatrix}:\;\Phie,\Phig\in\mathcal{D}\!\left(\dOmega\right)
	\right\},
\end{equation}
acting as\renewcommand\arraystretch{1.5}
\begin{equation}\label{eq:action}
	H_{f,\omega_\e}\begin{pmatrix}
	\Phie\\\Phig-\lambda\frac{1}{\dOmega+1}\adag{f}\Phie
	\end{pmatrix}=\begin{pmatrix}
	\left(\omega_{\mathrm{e}}+\dOmega-\lambda^2\a{f}\!\frac{1}{\dOmega+1}\adag{f}\right)\Phie+\lambda\,\a{f}\Phig\\
	\dOmega\Phig+\lambda\frac{1}{\dOmega+1}\adag{f}\Phie
	\end{pmatrix}.
\end{equation}
Then the following facts hold for every value of $\omega_\e,\lambda\in\mathbb{R}$:
\begin{enumerate}
	\item [(i)] for $f\in\hilb$, $H_{f,\omega_\e}$ coincides with the (regular) rotating-wave spin-boson model;
	\item [(ii)] for $f\in\hilb_{-1}$, $H_{f,\omega_\e}$ is a self-adjoint operator on $\focks$ whose resolvent reads, for all $z\in\mathbb{C}\setminus\mathbb{R}$,
	\begin{equation}\label{eq:ressing}
	\frac{1}{H_{f,\omega_\e}-z}\begin{pmatrix}
	\Psie\\\Psig
	\end{pmatrix}=\begin{pmatrix}
	\mathcal{G}_{f,\omega_\e}^{-1}(z)\Bigl(\Psie-\lambda\,\a{f}\frac{1}{\dOmega-z}\Psig\Bigr)\\
	\frac{1}{\dOmega-z}\Psig-\lambda\frac{1}{\dOmega-z}\adag{f}\mathcal{G}^{-1}_{f,\omega_\e}(z)\Bigl(\Psie-\lambda\,\a{f}\frac{1}{\dOmega-z}\Psig\Bigr)
	\end{pmatrix},
	\end{equation}
	with $\mathcal{G}_{f,\omega_\e}(z)$ as defined in Eq.~\eqref{eq:gf};
	\item [(iii)] given $f\in\hilb_{-1}\setminus\hilb$, there is a sequence $\{f^i\}_{i\in\mathbb{N}}\subset\hilb$ of normalizable form factors such that $H_{f^i,\omega_\e}\to H_{f,\omega_\e}$ in the norm resolvent sense;
	\item [(iv)] conversely, given any $\{f^i\}_{i\in\mathbb{N}}\subset\hilb$, $f\in\hilb_{-1}\setminus\hilb$ such that $\|f^i-f\|_{-1}\to0$, then the sequence of regular rotating-wave spin-boson models $\{H_{f^i,\omega_\e}\}_{i\in\mathbb{N}}$ converges to the operator $H_{f,\omega_\e}$ in the norm resolvent sense.
\end{enumerate}
\end{theorem}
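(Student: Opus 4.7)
My plan is to establish (i) by direct substitution on the domain, prove (ii) by defining a candidate resolvent $R(z)$ via the right-hand side of Eq.~\eqref{eq:ressing} and checking that it inverts $H_{f,\omega_\e}-z$, and then deduce (iii)--(iv) from norm continuity of $R(z)$ in $f$. For (i), when $f\in\hilb$ Props.~\ref{prop:numb}--\ref{prop:domaf} give $\adag{f}\Phie\in\focks$ for $\Phie\in\mathcal{D}(\dOmega)$, so $(\dOmega+1)^{-1}\adag{f}\Phie\in\mathcal{D}(\dOmega)$ and the map $(\Phie,\Phig)\mapsto(\Phie,\Phig-\lambda(\dOmega+1)^{-1}\adag{f}\Phie)$ is a bijection of $\mathcal{D}(\dOmega)\oplus\mathcal{D}(\dOmega)$ onto itself; substituting $\Psig=\Phig-\lambda(\dOmega+1)^{-1}\adag{f}\Phie$ into the regular action $(\omega_\e+\dOmega)\Phie+\lambda\a{f}\Psig$, $\dOmega\Psig+\lambda\adag{f}\Phie$ reproduces Eq.~\eqref{eq:action} term by term.

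For (ii), boundedness of $R(z)$ on $\focks\oplus\focks$ is a scale-composition argument: $(\dOmega-z)^{-1}$ maps $\focks$ into $\focks_{+2}$ and $\focks_{-1}$ into $\focks_{+1}$, $\a{f}$ and $\adag{f}$ link the scale continuously by Props.~\ref{prop:af_sing}--\ref{prop:adagf_sing}, and $\mathcal{G}^{-1}_{f,\omega_\e}(z)\in\mathcal{B}(\focks)$ by Lemma~\ref{lemma2}. To show that $R(z)\Psi\in\mathcal{D}(H_{f,\omega_\e})$, I would set $\Phie=\mathcal{G}^{-1}_{f,\omega_\e}(z)(\Psie-\lambda\a{f}(\dOmega-z)^{-1}\Psig)\in\mathcal{D}(\dOmega)$ and rewrite the second component of $R(z)\Psi$ as $\Phig-\lambda(\dOmega+1)^{-1}\adag{f}\Phie$ with $\Phig=(\dOmega-z)^{-1}\Psig-\lambda(1+z)(\dOmega+1)^{-1}(\dOmega-z)^{-1}\adag{f}\Phie$, whose second summand lies in $\focks_{+3}\subset\mathcal{D}(\dOmega)$ since $(\dOmega+1)^{-1}(\dOmega-z)^{-1}$ sends $\focks_{-1}$ into $\focks_{+3}$. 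The identity $(H_{f,\omega_\e}-z)R(z)=I$ then follows from Eq.~\eqref{eq:action} together with the operator relation
\begin{equation}
\frac{1}{\dOmega+1}+\frac{1+z}{(\dOmega+1)(\dOmega-z)}=\frac{1}{\dOmega-z},
\end{equation}
which converts the first component to $\mathcal{G}_{f,\omega_\e}(z)\Phie+\lambda\a{f}(\dOmega-z)^{-1}\Psig=\Psie$ and the second to $(\dOmega-z)\Phig+\lambda(1+z)(\dOmega+1)^{-1}\adag{f}\Phie=\Psig$; the reverse composition $R(z)(H_{f,\omega_\e}-z)=I$ on $\mathcal{D}(H_{f,\omega_\e})$ is analogous. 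The hermiticity relation $R(\bar z)=R(z)^*$ is a direct check using $\mathcal{S}_f(\bar z)=\mathcal{S}_f(z)^*$ and $\mathcal{G}_{f,\omega_\e}(\bar z)=\mathcal{G}_{f,\omega_\e}(z)^*$ from Lemma~\ref{lemma2} and the pairing~\eqref{eq:adjoint}; together with bijectivity of $R(z)$ and $R(\bar z)$, it forces $\mathcal{D}(H_{f,\omega_\e})$ to be dense and $H_{f,\omega_\e}$ to be self-adjoint with resolvent $R(z)$.

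For (iii) and (iv), I would take $\{f^i\}\subset\hilb$ with $\|f^i-f\|_{-1}\to 0$ (produced by Prop.~\ref{prop:singlimit} for (iii), or given by hypothesis for (iv)). By Remark~\ref{remark2}, the differences $\a{f^i}-\a{f}$ and $\adag{f^i}-\adag{f}$ vanish in the relevant operator norms between $\focks_{+1}$, $\focks$ and $\focks_{-1}$, so writing
\begin{equation}
\mathcal{S}_{f^i}(z)-\mathcal{S}_f(z)=\a{f^i-f}(\dOmega-z)^{-1}\adag{f^i}+\a{f}(\dOmega-z)^{-1}\adag{f^i-f}
\end{equation}
yields $\|\mathcal{S}_{f^i}(z)-\mathcal{S}_f(z)\|_{\mathcal{B}(\focks)}\to 0$; the second resolvent identity combined with the uniform bound~\eqref{eq:boundednorm1} then gives $\mathcal{G}^{-1}_{f^i,\omega_\e}(z)\to\mathcal{G}^{-1}_{f,\omega_\e}(z)$ in $\mathcal{B}(\focks)$, and substitution into Eq.~\eqref{eq:ressing} produces $R_{f^i}(z)\to R_f(z)$ in operator norm, i.e.\ norm resolvent convergence.

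The main technical obstacle is (ii): when $f\notin\hilb$ the expression~\eqref{eq:formal} is only a continuous map between $\hfrak_{+1}$ and $\hfrak_{-1}$, so a self-adjointness domain inside $\focks\oplus\focks$ must be engineered rather than read off. The delicate point is the singular vector $(\dOmega+1)^{-1}\adag{f}\Phie$, which lies in $\focks_{+1}$ but is not in $\mathcal{D}(\dOmega)$; the argument relies precisely on the cancellations embedded in Eq.~\eqref{eq:action} and on the displayed operator identity to guarantee that the image of $H_{f,\omega_\e}$ lands back in $\focks\oplus\focks$.
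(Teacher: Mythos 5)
Your proposal is correct and follows essentially the same route as the paper: part (i) by direct substitution, part (ii) by inverting $H_{f,\omega_\e}-z$ through the propagator $\mathcal{G}^{-1}_{f,\omega_\e}(z)$ of Lemma~\ref{lemma2} together with the identity $(\dOmega-z)^{-1}-(\dOmega+1)^{-1}=(z+1)(\dOmega-z)^{-1}(\dOmega+1)^{-1}$, and parts (iii)--(iv) via norm convergence of $\mathcal{S}_{f^i}(z)$ and the uniform bound~\eqref{eq:boundednorm1}. The only cosmetic difference is that you verify the candidate resolvent rather than solving the resolvent equation, and you make explicit the density and symmetry steps that the paper compresses into ``by construction''.
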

\begin{proof}
$(i)$ If $f\in\hilb$, then, by Prop.~\ref{prop:adagf_sing}, we know that $\adag{f}\Phie\in\focks$, and then $(\dOmega+1)^{-1}$ maps it back into $\mathcal{D}(\dOmega)$. This means that, as long as $f\in\hilb$, Eq.~\eqref{eq:singdom} is nothing but an alternative, and equivalent, representation of the domain $\mathcal{D}(H_{f,\omega_\e})=\mathcal{D}(\dOmega)\oplus\mathcal{D}(\dOmega)$ of the regular model, and a direct computation shows that, indeed, the quantity
\begin{equation}
	\begin{pmatrix}
	\omega_{\mathrm{e}}+\dOmega&\lambda\,\a{f}\\
	\lambda\,\adag{f}&\dOmega
	\end{pmatrix}\begin{pmatrix}
	\Phie\\\Phig-\lambda\frac{1}{\dOmega+1}\adag{f}\Phie
	\end{pmatrix}
\end{equation}
equals the right-hand side of Eq.~\eqref{eq:action}.

$(ii)$ Let $f\in\hilb_{-1}$. By construction, $\mathcal{D}(H_{f,\omega_\e})$ is dense in $\focks\oplus\focks$. To compute its resolvent, we must solve the equation\renewcommand\arraystretch{1}
\begin{equation}
	(H_{f,\omega_\e}-z)\begin{pmatrix}
	\Phie\\\Phig-\lambda\frac{1}{\dOmega+1}\adag{f}\Phie
	\end{pmatrix}=\begin{pmatrix}
	\Psie\\\Psig
	\end{pmatrix}
\end{equation}
for $z\in\mathbb{C}\setminus\mathbb{R}$, that is,\renewcommand\arraystretch{1.5}
\begin{equation}
\begin{pmatrix}	\left(\omega_{\mathrm{e}}+\dOmega-z-\lambda^2\a{f}\!\frac{1}{\dOmega+1}\adag{f}\right)\Phie+\lambda\,\a{f}\Phig\\
	\left(\dOmega-z\right)\Phig+(z+1)\lambda\frac{1}{\dOmega+1}\adag{f}\Phie
\end{pmatrix}=\begin{pmatrix}
\Psie\\\Psig
\end{pmatrix}.
\end{equation}\renewcommand\arraystretch{1}
The second equation yields
\begin{eqnarray}\label{eq:phig}
	\Phig&=&\frac{1}{\dOmega-z}\Psig-(z+1)\lambda\,\frac{1}{\dOmega-z}\frac{1}{\dOmega+1}\adag{f}\Phie\nonumber\\
	&=&\frac{1}{\dOmega-z}\Psig-\lambda\left(\frac{1}{\dOmega-z}-\frac{1}{\dOmega+1}\right)\adag{f}\Phie
\end{eqnarray}
Substituting into the first one, we get
\begin{equation}
\Bigl(\omega_{\mathrm{e}}+\dOmega-z-\lambda^2\mathcal{S}_f(z)\Bigr)\Phie+\lambda\,\a{f}\frac{1}{\dOmega-z}\Psig=\Psie
\end{equation}
with $\mathcal{S}_f(z)$ as in Eq.~\eqref{eq:sf}, that is,
\begin{equation}
	\mathcal{G}_{f,\omega_\e}(z)\Phie=\Psie-\lambda\,\a{f}\frac{1}{\dOmega-z}\Psig.
\end{equation}
By Lemma~\ref{lemma2}, $\mathcal{G}_{f,\omega_\e}(z)$ admits a bounded inverse and therefore
\begin{equation}\label{eq:phie}
	\Phie=\mathcal{G}^{-1}_{f,\omega_\e}(z)\left(\Psie-\lambda\,\a{f}\frac{1}{\dOmega-z}\Psig\right),
\end{equation}
which is the first component of the right-hand side in Eq.~\eqref{eq:ressing}. Substituting Eq.~\eqref{eq:phie} into~\eqref{eq:phig} finally yields
\begin{eqnarray}
\Phig-\lambda\frac{1}{\dOmega+1}\adag{f}\Phie&=&\frac{1}{\dOmega-z}\Psig\\&&-\lambda\frac{1}{\dOmega-z}\adag{f}\mathcal{G}^{-1}_{f,\omega_\e}(z)\left(\Psie-\lambda\,\a{f}\frac{1}{\dOmega-z}\Psig\right),\nonumber
\end{eqnarray}
which is the second component of the right-hand side in Eq.~\eqref{eq:ressing}. Eq.~\eqref{eq:ressing} is proven for all nonreal $z$. By construction, $H_{f,\omega_\e}$ is therefore a self-adjoint operator on $\focks\oplus\focks$.

$(iii)$ Let $f\in\hilb_{-1}$. Since $\hilb$ is densely embedded into $\hilb_{-1}$, there exists a sequence $\{f^i\}_{i\in\mathbb{N}}$ such that $\|f^i-f\|_{-1}\to0$ as $i\to\infty$, and thus, by Prop.~\ref{prop:singlimit}, $a(f^i)\to\a{f}$ and $a(f^i)\to\adag{f}$ in norm; this readily implies that $\mathcal{S}_{f^i}(z)\to\mathcal{S}_{f}(z)$ in the norm sense. But then we have
	\begin{eqnarray}
		\mathcal{G}^{-1}_{f,\omega_\e}(z)-\mathcal{G}^{-1}_{f^i,\omega_\e}(z)&=&	\mathcal{G}^{-1}_{f^i,\omega_\e}(z)\left[\mathcal{G}_{f^i,\omega_\e}(z)-\mathcal{G}_{f,\omega_\e}(z)\right]\mathcal{G}^{-1}_{f,\omega_\e}(z)\nonumber\\
		&=&\lambda^2\mathcal{G}^{-1}_{f^i,\omega_\e}(z)\left[\mathcal{S}_{f^i}(z)-\mathcal{S}_{f}(z)\right]\mathcal{G}^{-1}_{f,\omega_\e}(z),
	\end{eqnarray}
	whence, again using Lemma~\ref{lemma2}
	\begin{eqnarray}
		\left\|	\mathcal{G}^{-1}_{f,\omega_\e}(z)-\mathcal{G}^{-1}_{f^i,\omega_\e}(z)\right\|_{\mathcal{B}(\focks)}
		&\leq&\lambda^2	\left\|\mathcal{G}^{-1}_{f^i,\omega_\e}(z)\right\|_{\mathcal{B}(\focks)}\left\|\left[\mathcal{S}_{f^i}(z)-\mathcal{S}_{f}(z)\right]\mathcal{G}^{-1}_{f,\omega_\e}(z)\right\|_{\mathcal{B}(\focks)}\nonumber\\
		&\leq&\frac{\lambda^2}{|\Im z|}\left\|\left[\mathcal{S}_{f^i}(z)-\mathcal{S}_{f}(z)\right]\mathcal{G}^{-1}_{f,\omega_\e}(z)\right\|_{\mathcal{B}(\focks)}\to0,
	\end{eqnarray}
which implies norm resolvent convergence. $(iv)$ is proven analogously.
\end{proof}
Notice that Eq.~\eqref{eq:ressing}, practically speaking, means that the properties of the resolvent are entirely encoded in those of the propagator $\mathcal{G}_{f,\omega_\e}^{-1}(z)$.
\begin{remark}\label{rem:vacuum}
We remark that Eq.~\eqref{eq:singdom} provides just one of the possible representations of the domain $\mathcal{D}(H_{f,\omega_\e})$ of the singular model: we may equivalently write
\begin{equation}\label{eq:singdom2}
	\mathcal{D}(H_{f,\omega_\e})=\left\{
	\begin{pmatrix}
		\Phie\\\Phig-\lambda\frac{1}{\dOmega-z_0}\adag{f}\Phie
	\end{pmatrix}:\;\Phie,\Phig\in\mathcal{D}\!\left(\dOmega\right).
	\right\}
\end{equation}
for any fixed $z_0\in\mathbb{C}$ which belongs to the resolvent of $\dOmega$: Eq.~\eqref{eq:singdom} simply corresponds to the choice $z_0=-1$. Choosing a different $z_0$, Eq.~\eqref{eq:action} must be changed as well accordingly. 

What is really important, and independent of the particular choice of $z_0$, is the following observation: while for $f\in\hilb$ the two component of a state $\Psi\in\focks\oplus\focks$ can be chosen independently, in the singular case they must be ``coupled'': the ground component must have a coupling-dependent singular part which depends on the excited component. The role of such an additional term is to cancel out the ``divergent'' term $\adag{f}\Phie$, which does not belong to $\focks$ whenever $f\notin\hilb$. In particular, the vector\renewcommand\arraystretch{1}
\begin{equation}\label{eq:vacuum}
	\Psi_0=\begin{pmatrix}
		\Omega\\0
	\end{pmatrix},
\end{equation}\renewcommand\arraystretch{1.5}corresponding to the state in which the atom is excited and the boson field is in the vacuum, is not in $\mathcal{D}(H_{f,\omega_\e})$ whenever $f\in\hilb_{-1}\setminus\hilb$. Physically, this means that the total energy distribution of such a state has an infinite variance.
\end{remark}
Before discussing the refinement of these results to the ``more singular'' case $f\in\hilb_{-2}$, let us present a simple corollary of Theorem~\ref{thm:singrwa} concerning the restriction of the singular model to the $n$-excitation subspaces $\hfrak^{(n)}$, its proof being immediate.
\begin{corollary}\label{coroll1}
	Let $f\in\hilb_{-1}$. For all $n\in\mathbb{N}$, the restriction $H_{f,\omega_\e}^{(n)}$ of $H_{f,\omega_\e}$ to the $n$-excitation subspace $\hfrak^{(n)}$ is a self-adjoint operator on $\hfrak^{(n)}$, with domain
	\begin{equation}\label{eq:singdom_n}
	\mathcal{D}\left(H_{f,\omega_\e}^{(n)}\right)=\left\{
	\begin{pmatrix}
	\Phie^{(n-1)}\\\Phig^{(n)}-\lambda\frac{1}{\omega^{(n)}+1}\adag{f}\Phie^{(n-1)}
	\end{pmatrix}:\;\Phie^{(n-1)}\in\mathcal{D}(\omega^{(n-1)}),\;\Phig^{(n)}\in\mathcal{D}(\omega^{(n)}),	
	\right\},
	\end{equation}
	acting as
	\begin{equation}
	H_{f,\omega_\e}^{(n)}\begin{pmatrix}
	\Phie^{(n-1)}\\\Phig^{(n)}-\lambda\frac{1}{\omega^{(n)}+1}\adag{f}\Phie^{(n-1)}
	\end{pmatrix}=\begin{pmatrix}
	(\omega_{\mathrm{e}}+\omega^{(n-1)}-\lambda^2\mathcal{S}_f^{(n-1)}(-1))\Phie^{(n-1)}+\lambda\,\a{f}\Phig^{(n)}\\
	\omega^{(n)}\Phig^{(n)}+\lambda\frac{1}{\omega^{(n)}+1}\adag{f}\Phie^{(n-1)}
	\end{pmatrix}.
	\end{equation}
	Besides, for all $f\in\hilb_{-1}$, there exists a sequence $\{f^i\}_{i\in\mathbb{N}}\subset\hilb$ such that $H_{f^i,\omega_\e}^{(n)}\to H_{f,\omega_\e}^{(n)}$ in the norm resolvent sense.
\end{corollary}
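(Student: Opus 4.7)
The plan is to derive the corollary directly from Theorem~\ref{thm:singrwa} together with the sector decomposition already established in Prop.~\ref{prop:numexc}, the point being that the singular machinery respects the excitation grading. The ladder properties $\a{f}\hilb^{(n+1)}\subset\hilb^{(n)}$ and $\adag{f}\hilb^{(n)}\subset\hilb^{(n+1)}$ of Eq.~\eqref{eq:ladder} remain valid when $a(f), a^\dag(f)$ are reinterpreted via Props.~\ref{prop:af_sing}--\ref{prop:adagf_sing}, since Eqs.~\eqref{eq:af_explicit2} and~\eqref{eq:adagf_explicit2} are defined sector-wise; likewise $\dOmega$ and $(\dOmega+1)^{-1}$ act diagonally on the grading, reducing on $\hilb^{(n)}$ to multiplication by $\omega^{(n)}$ and $(\omega^{(n)}+1)^{-1}$ respectively. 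Consequently, $\mathcal{S}_f(z)$ decomposes into the sector-wise operators $\mathcal{S}_f^{(n)}(z)$ of Eq.~\eqref{eq:sfn}, each mapping $\hilb^{(n)}$ into itself.

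The first step is to verify that the domain in Eq.~\eqref{eq:singdom} splits as the orthogonal sum of the subspaces in Eq.~\eqref{eq:singdom_n}. Explicitly, any element $(\Phie, \Phig-\lambda(\dOmega+1)^{-1}\adag{f}\Phie)^\intercal \in \mathcal{D}(H_{f,\omega_\e})$ with $\Phie=\sum_n \Phie^{(n-1)}$ and $\Phig=\sum_n\Phig^{(n)}$ can be rewritten, component by component in $n$, as $(\Phie^{(n-1)}, \Phig^{(n)} - \lambda(\omega^{(n)}+1)^{-1}\adag{f}\Phie^{(n-1)})^\intercal$, which is exactly the form appearing in Eq.~\eqref{eq:singdom_n}. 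A direct reading of Eq.~\eqref{eq:action} then shows that $H_{f,\omega_\e}$ sends this component into $\hfrak^{(n)}$ with the action prescribed by the corollary; hence $\hfrak^{(n)}$ is a reducing subspace. Since $H_{f,\omega_\e}$ is self-adjoint by Theorem~\ref{thm:singrwa}(ii), its restriction $H_{f,\omega_\e}^{(n)}$ to the reducing subspace $\hfrak^{(n)}$ is automatically self-adjoint on $\hfrak^{(n)}$ by standard operator theory.

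Finally, the norm resolvent convergence follows by restricting the convergence already proven globally. Theorem~\ref{thm:singrwa}(iii) produces a sequence $\{f^i\}\subset\hilb$ with $H_{f^i,\omega_\e}\to H_{f,\omega_\e}$ in the norm resolvent sense. Since $\hfrak^{(n)}$ is a common reducing subspace for all the $H_{f^i,\omega_\e}$ (Prop.~\ref{prop:numexc}) and for $H_{f,\omega_\e}$ (by the previous paragraph), the resolvents $(H_{f^i,\omega_\e}-z)^{-1}$ and $(H_{f,\omega_\e}-z)^{-1}$ are block-diagonal with respect to $\hfrak=\bigoplus_n\hfrak^{(n)}$; norm convergence of the full resolvents forces norm convergence of each block, yielding $H_{f^i,\omega_\e}^{(n)}\to H_{f,\omega_\e}^{(n)}$ in norm resolvent sense on $\hfrak^{(n)}$. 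The only genuinely non-mechanical bookkeeping is the decomposition of the domain in the first step, and this is immediate once one notes that the ``counterterm'' $-\lambda(\dOmega+1)^{-1}\adag{f}\Phie^{(n-1)}$ lives in $\hilb^{(n)}$ and thus never couples different sectors.
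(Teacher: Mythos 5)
Your proposal is correct and follows exactly the route the paper intends: the paper states the corollary with "its proof being immediate," and your argument (sector-preservation of the singular $a(f)$, $a^\dag(f)$ and of $(\dOmega+1)^{-1}$, hence reduction of the domain \eqref{eq:singdom} and of the action \eqref{eq:action} to each $\hfrak^{(n)}$, self-adjointness of restrictions to reducing subspaces, and blockwise norm convergence of block-diagonal resolvents) is precisely the filled-in version of that immediate proof.
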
\renewcommand\arraystretch{1}
In particular, in the single-excitation sector $\hfrak^{(1)}=\mathbb{C}\oplus\hilb$,
	\begin{equation}\label{eq:singdom_1}
\mathcal{D}\left(H_{f,\omega_\e}^{(1)}\right)=\left\{
\begin{pmatrix}
a\\\xi-a\lambda\frac{1}{\omega+1}f
\end{pmatrix}:\;a\in\mathbb{C},\;\xi\in\mathcal{D}(\omega)	
\right\},
\end{equation}
and
\begin{equation}
H_{f,\omega_\e}^{(1)}\begin{pmatrix}
a\\\xi-a\lambda\frac{1}{\omega+1}f
\end{pmatrix}=\begin{pmatrix}
\left(\omega_{\mathrm{e}}-\lambda^2\mathcal{S}_f^{(0)}(-1)\right)a+\lambda\braket{f,\xi}_{-1,1}\\
\omega\xi+\lambda\frac{a}{\omega+1}f
\end{pmatrix},
\end{equation}
where (see Remark~\ref{remark:self}) the quantity $\mathcal{S}_f^{(0)}(-1)$ reduces to a number:
\begin{equation}\label{eq:shift}
\mathcal{S}_f^{(0)}(-1)=\int\frac{|f(k)|^2}{\omega(k)+1}\,\mathrm{d}\mu,
\end{equation}
yielding a shift to the excitation energy of the atom. This result is thus compatible, up to a different representation of the domain, to the one in Ref.~\cite{facchi2021spectral} for the Friedrichs-Lee model.

Finally, while not discussed here, an extension of Theorem~\ref{thm:singrwa} to the $r$-atom model, briefly introduced in Remark~\ref{remark:natom}, is, in principle, possible: while more involved, the structure of the resolvent equation is similar, and we expect the resolvent to depend by a concatenated family of propagators, whose structure is similar to the one of the single-atom propagator $\mathcal{G}_{f,\omega_\e}^{-1}(z)$. This interesting structure will be thoroughly investigated elsewhere.

\section{The rotating-wave spin-boson model with $f\in\hilb_{-2}$}\label{sec:singrwa2}

The construction of the rotating-wave spin-boson model fails to apply directly when $f\in\hilb_{-s}\setminus\hilb_{-1}$ for some $s>1$. This can be best understood by looking at the function $\mathcal{S}_f(z)$ in Eq.~\eqref{eq:gf}. As long as $f\in\hilb_{-1}$, that operator is clearly well-defined because of the properties of the singular creation and annihilation operators; this is no longer true if $f\in\hilb_{-s}\setminus\hilb_{-1}$ for any $s>1$: since $\adag{f}$ has values in $\focks_{-s}$, the operator $\mathcal{S}_f(z)$ in Eq.~\eqref{eq:sf} is unavoidably ill-defined.

The main contribution of this section will be to show that, under some additional assumptions, for $f\in\hilb_{-2}\setminus\hilb_{-1}$ (and thus $f\in\hilb_{-s}\setminus\hilb_{-1}$ for all $s\in(1,2]$), it is still possible to define a singular generalization of the rotating-wave spin-boson model. The ``price to pay'', as anticipated, is that the excitation energy $\omega_{\e}$ that appears in the formal definition of the model ``must diverge''.

We will start this last section with a heuristic discussion in Subsection~\ref{subsec:heuristics} where the basic idea at the ground of our procedure will be explained; henceforth, after providing some technical results in Subsection~\ref{subsec:renormalized}, the main results of the section (Theorem~\ref{thm:singrwa2} and Prop.~\ref{prop:fiber}) will be presented in Subsection~\ref{subsec:renormalized2}.

\subsection{Heuristics}\label{subsec:heuristics}
Given $f\in\hilb_{-s}$ for $s>1$, the quantity
\begin{equation}
	\mathcal{S}_f(z)=\a{f}\frac{1}{\dOmega-z}\adag{f},\qquad z\in\mathbb{C}\setminus[m,\infty)
\end{equation}
is, as previously remarked, ill-defined as a bounded operator on $\focks$, even if one interprets $\adag{f}$, $\a{f}$ as singular creation and annihilation operators respectively in $\mathcal{B}(\focks,\focks_{-s})$ and $\mathcal{B}(\focks_{+s},\focks)$: indeed, $(\dOmega-z)^{-1}\adag{f}$ maps continuously $\focks$ on $\focks_{-s+2}$, which, as long as $s>1$, is larger than $\focks_{+s}$.

Still, the \textit{difference} between the values of $\mathcal{S}_f(z)$ in two points of the complex plane does make sense for a larger class of form factors: given $z,z_0\in\mathbb{C}\setminus\mathbb{R}$, a simple application of the first resolvent identity yields
\begin{eqnarray}\label{eq:difference}
	\mathcal{S}_f(z)-\mathcal{S}_f(z_0)&=&\a{f}\left[\frac{1}{\dOmega-z}-\frac{1}{\dOmega-z_0}\right]\adag{f}\nonumber\\
	&=&(z-z_0)\,\a{f}\frac{1}{\dOmega-z}\frac{1}{\dOmega-z_0}\adag{f},
\end{eqnarray}
and, while the right-hand side of Eq.~\eqref{eq:difference} only makes sense when $s=1$, its left-hand side \textit{does} make sense for $f\in\hilb_{-s}$ whenever $s\leq2$. In particular, taking $z_0=\bar{z}$ above, this is true for the skew-adjoint part of $\mathcal{S}_f(z)$:
\begin{equation}
\Im\mathcal{S}_f(z)=\Im z\;\a{f}\frac{1}{\dOmega-z}\frac{1}{\dOmega-\bar{z}}\adag{f}.
\end{equation}
This simple observation suggests that it may be possible to extend the construction of the previous section for this larger class of form factors by properly redefining $\mathcal{S}_f(z)$ in such a way to make the left-hand side of Eq.~\eqref{eq:difference} well-defined without affecting the right-hand side.

For this purpose, let us have a look at the explicit expression of $\mathcal{S}_f(z)$, for $f\in\hilb_{-1}$, on $\Psi\in\focks_{+1}$, which, as discussed in Remark~\ref{remark:self}, is given by Eq.~\eqref{eq:sfn}. We are free to rewrite the latter by adding and subtracting a constant:
\begin{equation}\label{eq:addsub}
	S^{(n)}_f(z)\Psi^{(n)}=\tilde{\mathcal{S}}_f^{(n)}(z)\Psi^{(n)}-\left(\int\frac{|f(\kappa)|^2}{\omega(\kappa)}\,\mathrm{d}\mu(\kappa)\right)\Psi^{(n)},
\end{equation}
with the term between parentheses corresponding exactly to $\|f\|_{-1}^2$. Clearly, given $z,z_0\in\mathbb{C}\setminus\mathbb{R}$, the difference between the values of both operators at $z$ and $z_0$ is unchanged:
\begin{equation}
	\mathcal{S}_f(z)-\mathcal{S}_f(z_0)=\tilde{\mathcal{S}}_f(z)-\tilde{\mathcal{S}}_f(z_0).
\end{equation}
If $f\in\hilb_{-s}\setminus\hilb_{-1}$ for $s\in(1,2]$, Eq.~\eqref{eq:addsub} is ill-defined. However, we will show that, under some assumptions, in such a case $\tilde{S}_f(z)$ is still a well-defined operator on a domain containing $\mathcal{D}(\dOmega)$. Heuristically, this means that Eq.~\eqref{eq:sfn} will still define a ``good'' operator provided that we subtract an ``infinite constant'' from it. 

Consequently, looking at the definition of the operator $\tilde{\mathcal{G}}_{f,\tilde{\omega}_\e}(z)$, if the same constant if added to the excitation energy $\omega_{\e}$ of the atom, ``the two infinities will cancel out'' thus yielding a legitimate propagator.

\subsection{The renormalized propagator}\label{subsec:renormalized}
We shall now translate this idea into the mathematical realm, with the aid of the following additional assumption which will be required to prove Theorem~\ref{thm:singrwa2} (but, importantly, not for Prop.~\ref{prop:fiber}).
\begin{hypothesis}\label{hyp}
	Given $s\in[1,2]$, there exist $r\in[s-1,1]$ and $C_f>0$ such that, for all $n\in\mathbb{N}$,
	\begin{equation}\label{eq:growth}
		\int\frac{|f(k)|^2}{\left[\omega(k)+(n-1)m\right]^s}\,\mathrm{d}\mu(k)\leq\frac{C_f}{n^{s-r}}.
	\end{equation}
\end{hypothesis}
Such an assumption obviously implies $f\in\hilb_{-s}$ (just take $n=1$ in Eq.~\eqref{eq:growth}) but, moreover, imposes that such an integral must ``scale sufficiently quickly'', i.e. at least as fast as $\mathcal{O}(n^{-(s-1)})$. It will be useful to provide an explicit, and physically meaningful, example.

\begin{example}
	 Consider a one-dimensional boson field with Klein-Gordon dispersion relation, i.e. $\omega:\mathbb{R}\rightarrow\mathbb{R}$ given by $\omega(k)=\sqrt{k^2+m^2}$ for some $m>0$, and $\mu$ be the Lebesgue measure on $\mathbb{R}$. We shall set $m=1$ hereafter without loss of generality. A paradigmatic example of singular form factor (which cannot be treated with the formalism of the previous section, hence requiring renormalization), already discussed in the Introduction, is given by a \textit{flat} coupling $f(k)=\text{const.}$: in such a case,
	\begin{equation}
		\int_{-\infty}^{\infty}\frac{|f(k)|^2}{\omega(k)^s}\;\mathrm{d}k\propto\int_{-\infty}^{\infty}\frac{1}{(k^2+1)^{s/2}}\;\mathrm{d}k=\begin{cases}
			+\infty,&s\in[0,1];\\
			\frac{\sqrt{\pi}\,\Gamma \left(\frac{s-1}{2}\right)}{\Gamma \left(\frac{s}{2}\right)},&s>1,
		\end{cases}
	\end{equation}
	so that $f\in\hilb_{-s}$ for all $s>1$. In particular, taking $s=2$, we have
	\begin{eqnarray}
		\int_{-\infty}^{\infty}\frac{|f(k)|^2}{[\omega(k)+(n-1)m]^2}\;\mathrm{d}k&\propto&\int_{-\infty}^{\infty}\frac{1}{\left[\sqrt{k^2+1}+n-1\right]^2}\nonumber\\
		&\leq&\int_{-\infty}^{\infty}\frac{1}{k^2+1+(n-1)^2}\;\mathrm{d}k\nonumber\\
		&=&\frac{\pi}{\sqrt{(n-1)^2+1}}=\mathcal{O}(n^{-1}),
	\end{eqnarray}
	so that the aforementioned assumption is satisfied (for example) with $s=2$ and $r=1$.
\end{example}

\begin{lemma}\label{lemma:relbound}
	Let $z\in\mathbb{C}\setminus[m,\infty)$, suppose that $f$ satisfies Hypothesis~\ref{hyp} for some $s\in[1,2]$ and $r\in[s-1,1]$. Define the operator $\tilde{\mathcal{S}}_f(z)$ with domain $\mathcal{D}\!\left(\tilde{S}_f(z)\right)=\mathcal{D}(\dOmega^r)$ and acting as
	\small
	\begin{eqnarray}\label{eq:sfntilde}
		\left(\tilde{\mathcal{S}}_f^{(n)}(z)\Psi^{(n)}\right)(k_1,\dots,k_n)&\!=\!&\sum_{j=1}^n\left(\int\frac{\overline{f(\kappa)}\Psi^{(n)}(k_1,\dots,\overbrace{\kappa}^{j\text{th}},\dots,k_n)}{\omega(\kappa)+\sum_{\ell=1}^n\omega(k_\ell)-z}\,\mathrm{d}\mu(\kappa)\right)f(k_j)\nonumber\\&&+\!\left[\int|f(\kappa)|^2\left(\frac{1}{\omega(\kappa)+\sum_{j=1}^n\omega(k_j)-z}-\frac{1}{\omega(k)}\right)\mathrm{d}\mu(\kappa)\right]\!\Psi^{(n)}(k_1,\dots,k_n).\nonumber\\
	\end{eqnarray}\normalsize	
	Then $\tilde{\mathcal{S}}_f(z)$ is a well-defined operator satisfying the following properties:
	\begin{itemize}
		\item[(i)] if $r=1$, then $\tilde{\mathcal{S}}_f(z)$ is relatively bounded with respect to $\dOmega$;
		\item[(ii)] if $r\in[s-1,1)$, then $\tilde{\mathcal{S}}_f(z)$ is \textit{infinitesimally} relatively bounded with respect to $\dOmega$.
	\end{itemize}
	Besides, for all $\Psi\in\mathcal{D}(\dOmega^r)$ and $z,z_0\in\mathbb{C}\setminus[m,\infty)$, the following equality holds:
	\begin{eqnarray}\label{eq:diff}
		\left[\tilde{S}_f(z)-\tilde{S}_f(z_0)\right]\Psi&=&\a{f}\left[\frac{1}{\dOmega-z}-\frac{1}{\dOmega-z_0}\right]\adag{f}\Psi\nonumber\\
		&=&(z-z_0)\a{f}\frac{1}{\dOmega-z}\frac{1}{\dOmega-z_0}\adag{f}\Psi
	\end{eqnarray}
	and, if $f\in\hilb_{-1}$,
	\begin{equation}\label{eq:diff2}
		\mathcal{S}_f(z)\Psi=\tilde{\mathcal{S}}_f(z)\Psi-\|f\|^2_{-1}\Psi.
	\end{equation}
\end{lemma}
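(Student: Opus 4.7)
My plan is to decompose $\tilde{\mathcal{S}}_f^{(n)}(z) = A^{(n)}(z) + B^{(n)}(z)$, where $A^{(n)}$ is the off-diagonal sum over $j$ in~\eqref{eq:sfntilde} and $B^{(n)}$ is the diagonal multiplication by the bracketed $\kappa$-integral. The identities \eqref{eq:diff} and \eqref{eq:diff2} should then be verified by direct comparison with~\eqref{eq:sfn}. Equation~\eqref{eq:diff2} is immediate: when $f\in\hilb_{-1}$, the two diagonal coefficients differ by the constant $\int|f(\kappa)|^2/\omega(\kappa)\,\d\mu(\kappa)=\|f\|_{-1}^2$, while the off-diagonal parts coincide. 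For~\eqref{eq:diff}, the subtracted $-1/\omega(\kappa)$ in the bracket of~\eqref{eq:sfntilde} is independent of $z$, so it cancels in $\tilde{\mathcal{S}}_f(z)-\tilde{\mathcal{S}}_f(z_0)$; what remains matches, term by term, the formal composition $\a{f}\bigl[(\dOmega-z)^{-1}-(\dOmega-z_0)^{-1}\bigr]\adag{f}\Psi$, and the first resolvent identity yields the final form. That this composition is genuinely a bounded operator on $\focks$ for every $f\in\hilb_{-s}$, $s\in[1,2]$, follows from Props.~\ref{prop:af_sing}--\ref{prop:adagf_sing} by tracking Hilbert-scale indices: $\adag{f}$ maps into $\focks_{-s}$, two resolvents raise the index by $4$, and $\a{f}$ requires $\focks_{+s}$ as its domain, so the chain closes precisely when $-s+4\geq s$, i.e.\ $s\leq 2$.

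For the diagonal term $B^{(n)}$, combining the two fractions in the bracket yields
\begin{equation*}
d_n(\vec k)=\int|f(\kappa)|^2\,\frac{z-W_n}{\omega(\kappa)\bigl(\omega(\kappa)+W_n-z\bigr)}\,\d\mu(\kappa),\qquad W_n:=\sum_{\ell=1}^n\omega(k_\ell).
\end{equation*}
Weighted AM-GM, applied as $\omega(\kappa)+W_n+|z|\geq C_\alpha\,\omega(\kappa)^\alpha(W_n+|z|)^{1-\alpha}$ with $\alpha=s-1\in[0,1]$, gives the pointwise bound $|d_n(\vec k)|\leq C_s\,\|f\|_{-s}^2\,(W_n+|z|)^{s-1}$, from which
\begin{equation*}
\|B(z)\Psi\|_\focks\leq C_s\,\|f\|_{-s}^2\,\|(\dOmega+|z|)^{s-1}\Psi\|_\focks.
\end{equation*}
Thus $B(z)$ is relatively bounded by $\dOmega^{s-1}$, and \emph{a fortiori} by $\dOmega^r$ for every $r\geq s-1$; this step uses only $f\in\hilb_{-s}$, not Hypothesis~\ref{hyp}.

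For the off-diagonal term $A^{(n)}$, Hypothesis~\ref{hyp} is essential. Writing $A^{(n)}\Psi^{(n)}(\vec k)=\int\overline{f(\kappa)}/(\omega(\kappa)+W_n-z)\,G_\kappa(\vec k)\,\d\mu(\kappa)$ with $G_\kappa(\vec k):=\sum_jf(k_j)\Psi^{(n)}(\vec k_{j\to\kappa})$, apply Cauchy-Schwarz in $\kappa$ with the weighted splitting $\overline{f(\kappa)}/(\omega(\kappa)+W_n-z)^{s/2}$ versus $G_\kappa/(\omega(\kappa)+W_n-z)^{1-s/2}$. The $f$-factor is bounded uniformly in $\vec k$ by $C_f/n^{s-r}$, via Hypothesis~\ref{hyp} and the estimate $W_n\geq(n-1)m$. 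The $\Psi$-factor is controlled after integration in $\vec k$ and $\kappa$ by exploiting the key symmetry identity $G_\kappa=\sqrt n\,\adag{f}L_\kappa\Psi^{(n)}$ (with $L_\kappa$ the partial evaluation at one coordinate), together with the continuity $\adag{f}:\focks\to\focks_{-s}$ from Prop.~\ref{prop:adagf_sing}, yielding a contribution bounded by a constant times $\|(\dOmega+1)^r\Psi^{(n)}\|_\hilbn^2$; crucially, the $n^{-(s-r)}$ gain from Hypothesis~\ref{hyp} absorbs the symmetrization factor $n$. Summation over $n$ produces $\|A(z)\Psi\|_\focks\leq C'\|(\dOmega+1)^r\Psi\|_\focks$, and combined with the diagonal estimate yields the claimed $\dOmega^r$-relative bound. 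When $r=1$ this is the (non-infinitesimal) statement $(i)$; when $r\in[s-1,1)$, the elementary interpolation $\|\dOmega^r\Psi\|\leq\varepsilon\|\dOmega\Psi\|+C_\varepsilon\|\Psi\|$, valid for any $\varepsilon>0$ since the exponent is strictly less than one, upgrades the bound to the infinitesimal statement $(ii)$. The most delicate step is the off-diagonal one: the naive Cauchy-Schwarz that would isolate $|f(k_j)|^2$ from $|\Psi|^2$ would demand $f\in\hilb$, so one must keep $f$ paired with $\Psi$ inside the weighted norms and rely on Hypothesis~\ref{hyp} to supply the decisive $n$-dependent decay.
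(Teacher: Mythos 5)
Your overall strategy coincides with the paper's: the same splitting of $\tilde{\mathcal{S}}_f^{(n)}(z)$ into the off-diagonal sum and the diagonal multiplication operator, the same $\dOmega^{s-1}$-relative bound for the diagonal part using only $f\in\hilb_{-s}$, the verification of \eqref{eq:diff}--\eqref{eq:diff2} by direct comparison of the explicit kernels (with the Hilbert-scale index count $-s+4\geq s$ justifying boundedness of the resolvent difference), and the passage from a $\dOmega^r$-bound to statements (i)--(ii) via the infinitesimal $\dOmega$-boundedness of $\dOmega^r$ for $r<1$. Those parts are sound.

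The gap is in the off-diagonal estimate, which is the heart of the lemma. First, the Cauchy--Schwarz splitting you write down, $\overline{f(\kappa)}/(\omega(\kappa)+W_n-z)^{s/2}$ against $G_\kappa/(\omega(\kappa)+W_n-z)^{1-s/2}$ with $W_n=\sum_{\ell=1}^n\omega(k_\ell)$, contains no $r$-dependent weight, so it cannot by itself produce the claimed factor $\|(\dOmega+1)^r\Psi^{(n)}\|^2_{\hilbn}$. Second, and more seriously, controlling the $\Psi$-factor through $G_\kappa=\sqrt{n}\,\adag{f}L_\kappa\Psi^{(n)}$ and the continuity $\adag{f}:\focks\to\focks_{-s}$ does not close: after your splitting the second factor carries only the weight $(\omega(\kappa)+W_n-z)^{-(2-s)}$, which for $s=2$ is no weight at all, and $\adag{f}$ is \emph{unbounded} from $\hilb^{(n-1)}$ to $\hilb^{(n)}$ whenever $f\notin\hilb$. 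The factors $|f(k_j)|^2$ hidden inside $G_\kappa$ must themselves be integrated against a weight $[\omega(k_j)+(n-1)m]^{-s}$, i.e.\ Hypothesis~\ref{hyp} has to be applied a \emph{second} time. That is exactly what the paper's proof does: it inserts the weight $[\omega(\kappa)+\sum_{\ell\neq j}\omega(k_\ell)]^{\pm r}$ into the Cauchy--Schwarz step, uses $[\omega(\kappa)+\omega(k_j)+(n-1)m]^2\geq[\omega(k_j)+(n-1)m]^s[\omega(\kappa)+(n-1)m]^{2-s}$, and then applies Hypothesis~\ref{hyp} to both the $\kappa$- and the $k_j$-integrals (the condition $r\geq s-1$ enters precisely to guarantee $2-s+2r\geq s$ for the $\kappa$-integral). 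The combined gain is $n^{-2(s-r)}$, which cancels exactly the factor $n^{2(s-r)}$ produced by the symmetrized sum and the lower bounds by $(n-1)m$; your accounting, in which a single gain of $n^{-(s-r)}$ absorbs ``the symmetrization factor $n$'', is false whenever $s-r<1$, e.g.\ for $s=2$, $r=s-1+\epsilon$. The mechanism you identified is the right one, but it must be deployed twice and with the $r$-weight built into the splitting.
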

\begin{proof}
	It will suffice to verify the claim for $z=0$. For definiteness, let us write $\tilde{S}_f^{(n)}(0)=\mathcal{R}^{(n)}_f+\mathcal{T}^{(n)}_f$, with
	\small\begin{eqnarray}\label{eq:rfn}
		\left(\mathcal{R}_f^{(n)}\Psi^{(n)}\right)(k_1,\dots,k_n)&=&\sum_{j=1}^n\left(\int\frac{\overline{f(\kappa)}\Psi^{(n)}(k_1,\dots,\overbrace{\kappa}^{j\text{th}},\dots,k_n)}{\omega(\kappa)+\sum_{\ell=1}^n\omega(k_\ell)}\,\mathrm{d}\mu(\kappa)\right)f(k_j);\\\label{eq:tfn}
		\left(\mathcal{T}^{(n)}_f\Psi^{(n)}\right)(k_1,\dots,k_n)&=&\left[\int|f(\kappa)|^2\left(\frac{1}{\omega(\kappa)+\sum_{j=1}^n\omega(k_j)}-\frac{1}{\omega(\kappa)}\right)\mathrm{d}\mu(\kappa)\right]\!\Psi^{(n)}(k_1,\dots,k_n)\nonumber\\
		&=&\left[\int\frac{|f(\kappa)|^2}{\omega(\kappa)\left[\omega(\kappa)+\sum_{j=1}^n\omega(k_j)\right]}\mathrm{d}\mu(\kappa)\right]\!\left[-\sum_{j=1}^n\omega(k_j)\right]\!\Psi^{(n)}(k_1,\dots,k_n).\nonumber\\
	\end{eqnarray}\normalsize
	We shall verify that
	\begin{itemize}
		\item if Hypothesis~\ref{hyp} holds, $\mathcal{R}_f$, with domain $\mathcal{D}(\dOmega^r)$, is a relatively bounded operator with respect to $\dOmega^r$;
		\item $\mathcal{T}_f$, with domain $\mathcal{D}(\dOmega^{s-1})$, is a relatively bounded operator with respect to $\dOmega^{s-1}$, and thus \textit{a fortiori} (since $r\geq s-1$) to $\dOmega^r$;
	\end{itemize}
this will prove $(i)$ and $(ii)$, since, for $r<1$, $\dOmega^r$ is infinitesimally $\dOmega$-bounded.
	
	Let us start from the second one. Given $\Psi\in\mathcal{D}\!\left(\dOmega^{s-1}\right)$, we have\footnotesize
	\begin{eqnarray}
		&&\left|\left(\mathcal{T}_f^{(n)}\Psi^{(n)}\right)(k_1,\dots,k_n)\right|^2\nonumber\\
		&=&\left|\int\frac{|f(\kappa)|^2}{\omega(\kappa)\left[\omega(\kappa)+\sum_{j=1}^n\omega(k_j)\right]}\mathrm{d}\mu(\kappa)\right|^2\left[\sum_{\ell=1}^n\omega(k_\ell)\right]^2\left|\Psi^{(n)}(k_1,\dots,k_n)\right|^2\nonumber\\
		&=&\left|\int\frac{|f(\kappa)|^2}{\omega(\kappa)\left[\omega(\kappa)+\sum_{j=1}^n\omega(k_j)\right]^{s-1}\left[\omega(\kappa)+\sum_{j=1}^n\omega(k_j)\right]^{2-s}}\mathrm{d}\mu(\kappa)\right|^2\left[\sum_{\ell=1}^n\omega(k_\ell)\right]^2\left|\Psi^{(n)}(k_1,\dots,k_n)\right|^2\nonumber\\
		&\leq&\left|\int\frac{|f(\kappa)|^2}{\omega(\kappa)\left[\omega(\kappa)+\sum_{j=1}^n\omega(k_j)\right]^{s-1}\left[\sum_{j=1}^n\omega(k_j)\right]^{2-s}}\mathrm{d}\mu(\kappa)\right|^2\left[\sum_{\ell=1}^n\omega(k_\ell)\right]^2\left|\Psi^{(n)}(k_1,\dots,k_n)\right|^2\nonumber\\
		&=&\left|\int\frac{|f(\kappa)|^2}{\omega(\kappa)\left[\omega(\kappa)+\sum_{j=1}^n\omega(k_j)\right]^{s-1}}\mathrm{d}\mu(\kappa)\right|^2\left[\sum_{\ell=1}^n\omega(k_\ell)\right]^{2s-2}\left|\Psi^{(n)}(k_1,\dots,k_n)\right|^2\nonumber\\
		&\leq&\|f\|^2_{-s}\left[\sum_{\ell=1}^n\omega(k_\ell)\right]^{2s-2}\left|\Psi^{(n)}(k_1,\dots,k_n)\right|^2,
	\end{eqnarray}\normalsize
	whence, integrating on $k_1,\dots,k_n$ and summing on $n$,
	\begin{equation}
		\left\|\mathcal{T}_f\Psi\right\|_\focks\leq\|f\|_{-s}\left\|\dOmega^{s-1}\Psi\right\|_\focks.
	\end{equation}
	As for the first term, we have\footnotesize
	\begin{eqnarray}
		\left|\left(\mathcal{R}_f^{(n)}\Psi^{(n)}\right)(k_1,\dots,k_n)\right|^2&=&\left|\sum_{j=1}^n\left(\int\frac{\overline{f(\kappa)}\Psi^{(n)}(k_1,\dots,\overbrace{\kappa}^{j\text{th}},\dots,k_n)}{\omega(\kappa)+\sum_{\ell=1}^n\omega(k_\ell)}\,\mathrm{d}\mu(\kappa)\right)f(k_j)\right|^2\nonumber\\&\leq&n\sum_{j=1}^n|f(k_j)|^2\left|\int\frac{\overline{f(\kappa)}\Psi^{(n)}(k_1,\dots,\overbrace{\kappa}^{j\text{th}},\dots,k_n)}{\omega(\kappa)+\sum_{\ell=1}^n\omega(k_\ell)}\,\mathrm{d}\mu(\kappa)\right|^2\nonumber\\
		&=&n\sum_{j=1}^n|f(k_j)|^2\left|\int\frac{\overline{f(\kappa)}\Psi^{(n)}(k_1,\dots,\overbrace{\kappa}^{j\text{th}},\dots,k_n)}{\omega(\kappa)+\omega(k_j)+\sum_{\ell\neq j}^n\omega(k_\ell)}\frac{\left[\omega(\kappa)+\sum_{\ell\neq j}^n\omega(k_\ell)\right]^r}{\left[\omega(\kappa)+\sum_{\ell\neq j}^n\omega(k_\ell)\right]^r}\,\mathrm{d}\mu(\kappa)\right|^2\nonumber\\
		&\leq&n\sum_{j=1}^n\int\mathrm{d}\mu(\kappa)\:\frac{|f(k_j)|^2|f(\kappa)|^2}{\left[\omega(\kappa)+\omega(k_j)+\sum_{\ell\neq j}^n\omega(k_\ell)\right]^2\left[\omega(\kappa)+\sum_{\ell\neq j}^n\omega(k_\ell)\right]^{2r}}\times\nonumber\\
		&&\times\int\mathrm{d}\mu(\kappa')\;\left[\omega(\kappa')+\sum_{\ell\neq j}^n\omega(k_\ell)\right]^{2r}\left|\Psi^{(n)}(k_1,\dots,\overbrace{\kappa'}^{j\text{th}},\dots,k_n)\right|^2\nonumber\\
		&\leq&n\sum_{j=1}^n\int\mathrm{d}\mu(\kappa)\:\frac{|f(k_j)|^2|f(\kappa)|^2}{\left[\omega(\kappa)+\omega(k_j)+(n-1) m)\right]^2\left[\omega(\kappa)+(n-1)m\right]^{2r}}\times\nonumber\\
		&&\times\int\mathrm{d}\mu(\kappa')\;\left[\omega(\kappa')+\sum_{\ell\neq j}^n\omega(k_\ell)\right]^{2r}\left|\Psi^{(n)}(k_1,\dots,\overbrace{\kappa'}^{j\text{th}},\dots,k_n)\right|^2\nonumber\\
		&\leq&n\sum_{j=1}^n\int\mathrm{d}\mu(\kappa)\:\frac{|f(k_j)|^2|f(\kappa)|^2}{\left[\omega(k_j)+(n-1) m\right]^s\left[\omega(\kappa)+(n-1) m\right]^{2-s+2r}}\times\nonumber\\
		&&\times\int\mathrm{d}\mu(\kappa')\;\left[\omega(\kappa')+\sum_{\ell\neq j}^n\omega(k_\ell)\right]^{2r}\left|\Psi^{(n)}(k_1,\dots,\overbrace{\kappa'}^{j\text{th}},\dots,k_n)\right|^2.
	\end{eqnarray}
	\normalsize
	Since $r\geq s-1$, we have $2-s+2r\geq s$, whence\footnotesize
	\begin{eqnarray}
		\left|\left(\mathcal{R}_f^{(n)}\Psi^{(n)}\right)(k_1,\dots,k_n)\right|^2&\leq&\frac{1}{m^{2-2s+2r}}n^{2s-2r-1}\sum_{j=1}^n\int\mathrm{d}\mu(\kappa)\:\frac{|f(k_j)|^2|f(\kappa)|^2}{\left[\omega(k_j)+(n-1) m\right]^s\left[\omega(\kappa)+(n-1) m\right]^{s}}\times\nonumber\\
		&&\times\int\mathrm{d}\mu(\kappa')\;\left[\omega(\kappa')+\sum_{\ell\neq j}^n\omega(k_\ell)\right]^{2r}\left|\Psi^{(n)}(k_1,\dots,\overbrace{\kappa'}^{j\text{th}},\dots,k_n)\right|^2.
	\end{eqnarray}\normalsize
	Integrating on $k_1,\dots,k_n$, taking into account that all $n$ terms in the sum yield equal contributions, properly renaming the integration variables, and finally using Eq.~\eqref{eq:growth}, one gets\footnotesize
	\begin{eqnarray}\label{eq:play}
		\int\left|\left(\mathcal{R}_f^{(n)}\Psi^{(n)}\right)(k_1,\dots,k_n)\right|^2\:\mathrm{d}^n\mu&\leq&\frac{1}{m^{2-2s+2r}}n^{2(s-r)}\left[\int\mathrm{d}\mu(\kappa)\:\frac{|f(\kappa)|^2}{\left[\omega(\kappa)+(n-1) m\right]^{s}}\right]^2\left\|\left[\omega^{(n)}\right]^{r}\Psi^{(n)}\right\|^2_{\hilbn}\nonumber\\
		&\leq&\frac{C_f^2}{m^{2-2s+2r}}\,\left\|\left[\omega^{(n)}\right]^{r}\Psi^{(n)}\right\|^2_{\hilbn},
	\end{eqnarray}\normalsize
	whence
	\begin{equation}
		\left\|\mathcal{R}_f\Psi\right\|_\focks\leq\frac{C_f}{m^{1-s+r}}\left\|\dOmega^r\Psi\right\|_\focks,
	\end{equation}
	thus completing the proof of (i) and (ii). The final claims (Eqs.~\eqref{eq:diff}-\eqref{eq:diff2}) are an immediate consequence of the very definition of $\tilde{\mathcal{S}}_f(z)$.
\end{proof}

\begin{remark}\label{remark:regardless}
	Hypothesis~\ref{hyp} comes into play specifically in the last step of the inequality~\eqref{eq:play}, where it is used to get rid of the term $n^{s-r}$ and therefore making the action $\tilde{\mathcal{S}}_f(z)=\bigoplus_n\tilde{\mathcal{S}}^{(n)}_f(z)$ well-defined on the Fock space $\focks$. However, the same computations in the proof of Lemma~\ref{lemma:relbound} show that, \textit{regardless} of whether Hypothesis~\ref{hyp} holds, the following claim also hold: for all $n\in\mathbb{N}$, $\tilde{\mathcal{S}}_f(z)$ is relatively bounded with respect to $(\omega^{(n)})^{s-1}$. This is simply obtained by repeating the same computations as above with $r=s-1$. In particular:
	\begin{itemize}
		\item if $s=2$, then $\tilde{\mathcal{S}}^{(n)}_f(z)$ is relatively bounded with respect to $\omega^{(n)}$;
		\item if $s<2$, then $\tilde{\mathcal{S}}^{(n)}_f(z)$ is \textit{infinitesimally} relatively bounded with respect to $\omega^{(n)}$.
	\end{itemize}
This simple observation will be crucial to prove Prop.~\ref{prop:fiber}.
\end{remark}
	
	\begin{remark}\label{remark:self2}
	Reprising the discussion in Remark~\ref{remark:self}, notice that, for $n=0$, we simply have
	\begin{equation}\label{eq:self2}
		\tilde{\mathcal{S}}_f^{(0)}(z)\equiv\tilde{\Sigma}_f(z)=\int|f(k)|^2\left(\frac{1}{\omega(k)-z}-\frac{1}{\omega(k)}\right)\,\mathrm{d}\mu(k),
	\end{equation}
	which is analogous to the \textit{renormalized} self-energy of the Friedrichs-Lee model with form factor $f\in\hilb_{-2}$ studied in~\cite{facchi2021spectral}.
\end{remark}

By taking into account Lemma~\ref{lemma:relbound}, we can now state the counterpart of Lemma~\ref{lemma2} for the renormalized propagator.
\begin{lemma}\label{lemma:inv}
	Let $s\in[1,2]$, $f\in\hilb_{-s}$, and $\tilde{\omega}_\e\in\mathbb{R}$. Given $z\in\mathbb{C}\setminus\mathbb{R}$ and $\lambda\in\mathbb{R}$, define the operator $\tilde{\mathcal{G}}_{f,\tilde{\omega}_\e}(z)$ via
	\begin{equation}\label{eq:gf2}
		\tilde{\mathcal{G}}_{f,\tilde{\omega}_\e}(z)=\tilde{\omega}_\e-z+\dOmega-\lambda^2\tilde{\mathcal{S}}_f(z),
	\end{equation}
	with domain $\mathcal{D}\!\left(\tilde{\mathcal{G}}_{f,\tilde{\omega}_\e}(z)\right)=\mathcal{D}(\dOmega)$. The following facts hold:
	\begin{itemize}
		\item[(i)] if $f$ satisfies Hypothesis~\ref{hyp} with $r\in[s-1,1)$, then for all $\lambda$ the operator $\tilde{\mathcal{G}}_f(z)$ is closed, satisfies $\tilde{\mathcal{G}}_{f,\tilde{\omega}_\e}(\bar{z})=\tilde{\mathcal{G}}_{f,\tilde{\omega}_\e}(z)^*$, and admits a bounded inverse $\tilde{\mathcal{G}}_{f,\tilde{\omega}_\e}^{-1}(z)$.
		\item[(ii)] if $f$ satisfies Hypothesis~\ref{hyp} with $r=1$, then for sufficiently small $\lambda$ the operator $\tilde{\mathcal{G}}_{f,\tilde{\omega}_\e}(z)$ is closed, satisfies $\tilde{\mathcal{G}}_{f,\tilde{\omega}_\e}(\bar{z})=\tilde{\mathcal{G}}_{f,\tilde{\omega}_\e}(z)^*$, and admits a bounded inverse $\tilde{\mathcal{G}}_{f,\tilde{\omega}_\e}^{-1}(z)$;
	\end{itemize}
	in both cases, the operator norm of $\tilde{\mathcal{G}}^{-1}_{f,\tilde{\omega}_\e}(z)$ satisfies
	\begin{equation}\label{eq:boundednorm}
		\left\|\tilde{\mathcal{G}}^{-1}_{f,\tilde{\omega}_\e}(z)\right\|_{\mathcal{B}(\focks)}\leq\frac{1}{|\Im z|}.
	\end{equation}
	Finally, if $s=1$, then
	\begin{equation}
		\tilde{\mathcal{G}}_{f,\tilde{\omega}_\e}(z)=\mathcal{G}_{f,\omega_\e}(z),\quad\text{where}\quad\omega_\e=\tilde{\omega}_\e-\|f\|^2_{-1}.
	\end{equation}
\end{lemma}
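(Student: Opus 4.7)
The plan is to mirror the strategy used in the proof of Lemma~\ref{lemma2}, replacing ``boundedness of $\mathcal{S}_f(z)$ on $\focks$'' by ``$\dOmega$-boundedness of $\tilde{\mathcal{S}}_f(z)$'' and the ordinary first resolvent identity by its renormalized counterpart Eq.~\eqref{eq:diff}. The two key inputs are Lemma~\ref{lemma:relbound} and Eq.~\eqref{eq:diff}; no analytic machinery beyond what was already deployed in Section~\ref{sec:singrwa} should be needed.

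For closedness, I would invoke the standard fact that if $T$ is closed on $\mathcal{D}(T)$ and $V$ is linear with $\mathcal{D}(V)\supset\mathcal{D}(T)$ and $T$-relative bound strictly less than one, then $T+V$ is closed on $\mathcal{D}(T)$. Lemma~\ref{lemma:relbound} supplies exactly this for $T=\dOmega$ and $V=\lambda^{2}\tilde{\mathcal{S}}_{f}(z)$, with relative bound that is infinitesimal when $r<1$ (hence case (i), every $\lambda$ is admissible) and of order $\lambda^{2}C_{f}/m^{1-s+r}$ when $r=1$ (hence case (ii), small $\lambda$ works). The adjoint relation $\tilde{\mathcal{G}}_{f,\tilde\omega_\e}(\bar z)=\tilde{\mathcal{G}}_{f,\tilde\omega_\e}(z)^{*}$ reduces to $\tilde{\mathcal{S}}_{f}(\bar z)=\tilde{\mathcal{S}}_{f}(z)^{*}$, which follows by direct inspection of the explicit fiberwise expression~\eqref{eq:sfntilde}: writing $\braket{\Phi,\tilde{\mathcal{S}}_{f}(z)\Psi}$ on $\Phi,\Psi\in\mathcal{D}(\dOmega)$, conjugating the integrand, and relabelling the variable of integration produces the corresponding expression for $\braket{\tilde{\mathcal{S}}_{f}(\bar z)\Phi,\Psi}$.

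For the bounded inverse, I would then adapt verbatim the imaginary-part argument used for $\mathcal{G}_{f,\omega_\e}(z)$. Combining Eq.~\eqref{eq:diff} at $z_{0}=\bar z$ with the adjoint relation just proven gives
\begin{equation}
\tilde{\mathcal{S}}_{f}(z)-\tilde{\mathcal{S}}_{f}(z)^{*}=(z-\bar z)\,B^{*}B,\qquad B:=\frac{1}{\dOmega-\bar z}\adag{f},
\end{equation}
where $B\in\mathcal{B}(\focks)$ by a scale-tracking argument: $\adag{f}:\focks\to\focks_{-s}$ and $(\dOmega-\bar z)^{-1}:\focks_{-s}\to\focks_{2-s}\subset\focks$ since $s\leq 2$. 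Taking imaginary parts of $\braket{\Psi,\tilde{\mathcal{G}}_{f,\tilde\omega_\e}(z)\Psi}$ for $\Psi\in\mathcal{D}(\dOmega)$ yields $|\Im\braket{\Psi,\tilde{\mathcal{G}}_{f,\tilde\omega_\e}(z)\Psi}|\geq|\Im z|\,\|\Psi\|_{\fock}^{2}$; combined with the analogous bound for the adjoint, this gives both $\|\tilde{\mathcal{G}}_{f,\tilde\omega_\e}(z)\Psi\|_{\fock}\geq|\Im z|\,\|\Psi\|_{\fock}$ and its adjoint counterpart, implying (via the same criterion invoked in the proof of Lemma~\ref{lemma2}) the existence of a bounded inverse satisfying~\eqref{eq:boundednorm}. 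The final assertion for $s=1$ is purely algebraic: substituting Eq.~\eqref{eq:diff2} into Eq.~\eqref{eq:gf2} reproduces $\mathcal{G}_{f,\omega_\e}(z)$ with the shifted excitation energy.

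The main technical subtlety, in my view, is the scale bookkeeping surrounding $B$: identifying $\tilde{\mathcal{S}}_{f}(z)-\tilde{\mathcal{S}}_{f}(z)^{*}$ with $(z-\bar z)B^{*}B$ on the dense core $\mathcal{D}(\dOmega)$ hinges on Eq.~\eqref{eq:diff}, and the boundedness of $B$ on $\focks$ hinges on the constraint $s\leq 2$. Outside this range the intermediate vector $(\dOmega-\bar z)^{-1}\adag{f}\Psi$ would fail to land in $\focks$, and the present argument would break down, consistently with the expectation that a further renormalization layer would be needed to cover $s>2$.
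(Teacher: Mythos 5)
Your proposal is correct and follows essentially the same route as the paper: relative boundedness of $\tilde{\mathcal{S}}_f(z)$ from Lemma~\ref{lemma:relbound} yields closedness (for all $\lambda$ if $r<1$, for small $\lambda$ if $r=1$), the renormalized resolvent identity~\eqref{eq:diff} at $z_0=\bar z$ produces the same numerical-range lower bound $|\Im\braket{\Psi,\tilde{\mathcal{G}}_{f,\tilde\omega_\e}(z)\Psi}|\geq|\Im z|\,\|\Psi\|^2_\fock$ as in Lemma~\ref{lemma2}, and Eq.~\eqref{eq:diff2} disposes of the $s=1$ case. The one step to tighten is the adjoint relation: the pairing computation only gives the inclusion $\tilde{\mathcal{S}}_f(\bar z)\subseteq\tilde{\mathcal{S}}_f(z)^*$, and upgrading this to $\tilde{\mathcal{G}}_{f,\tilde\omega_\e}(z)^*=\tilde{\mathcal{G}}_{f,\tilde\omega_\e}(\bar z)$ requires the theorem on adjoints of relatively bounded perturbations with bound smaller than one (the paper cites Hess--Kato), which is precisely where the restriction on $\lambda$ for $r=1$ enters that part of the claim as well.
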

\begin{proof}
	First of all, notice that the operators $\tilde{\mathcal{S}}_f(z)$ and $\tilde{\mathcal{S}}_f(\bar{z})$ have the same domain $\mathcal{D}\!\left(\tilde{S}_f(z)\right)=\mathcal{D}\!\left(\tilde{S}_f(\bar{z})\right)=\mathcal{D}(\dOmega^r)$, and satisfy
	\begin{equation}
		\Braket{\Psi,\tilde{\mathcal{S}}_f(z)\Phi}_\focks=\Braket{\tilde{\mathcal{S}}_f(\bar{z})\Psi,\Phi}_\focks
	\end{equation}
	for all $\Psi,\Phi\in\mathcal{D}\!\left(\dOmega^r\right)$, whence $\mathcal{S}_f(\bar{z})\subseteq\mathcal{S}_f(z)^*$; in particular, the actions of the two latter operators coincide on $\mathcal{D}(\dOmega)$. This also readily implies that, for all $\Psi\in\mathcal{D}\!\left(\dOmega^r\right)$,
	\begin{equation}\label{eq:impart}
		\Im\Braket{\Psi,\tilde{\mathcal{S}}_f(z)\Psi}_\focks\geq0.
	\end{equation}	
	Now, the following equality:
	\begin{eqnarray}
		\tilde{\mathcal{G}}_{f,\tilde{\omega}_\e}(z)^*&=&\left[\tilde{\omega}_\e-z+\dOmega-\lambda^2\tilde{\mathcal{S}}_f(z)\right]^*=	\tilde{\omega}_\e-z+\dOmega-\lambda^2\tilde{\mathcal{S}}_f(z)^*\nonumber\\&=&\tilde{\omega}_\e-z+\dOmega-\lambda^2\tilde{\mathcal{S}}_f(\bar{z})=\tilde{\mathcal{G}}_{f,\tilde{\omega}_\e}(\bar{z}),
	\end{eqnarray}
	holds, also with $\tilde{\mathcal{G}}_{f,\tilde{\omega}_\e}(z)$ being a closed operator, whenever both $\lambda^2\tilde{S}_f(z)$ and $\lambda^2\tilde{S}_f(z)^*$ are relatively bounded with respect to $\dOmega$ with relative bound smaller than one; see e.g.~\cite[Corollary 1]{hess1970perturbation}. By Lemma~\ref{lemma:relbound}, this indeed holds either for every value of $\lambda$ if $r\in[s-1,1)$ (since, in such a case, the relative bound is zero), or for sufficiently small values of $\lambda$ if $r=1$ (since, in such a case, the relative bound is finite).
	
	By the equality $\tilde{\mathcal{G}}_{f,\tilde{\omega}_\e}(z)^*=\tilde{\mathcal{G}}_{f,\tilde{\omega}_\e}(\bar{z})$ and the inequality~\eqref{eq:impart}, the existence of a bounded inverse $\tilde{\mathcal{G}}_{f,\tilde{\omega}_\e}(z)^{-1}$ with operator norm satisfying Eq.~\eqref{eq:boundednorm} follows from the same argument as in the proof of Lemma~\ref{lemma:inv}. Finally, the last claim is an immediate consequence of the last claim of Lemma~\ref{lemma:relbound}.
\end{proof}

\subsection{Extension of the model to form factors $f\in\hilb_{-2}$}\label{subsec:renormalized2}
We can finally state the counterpart of Theorem~\ref{thm:singrwa} for form factors $f\in\hilb_{-s}\setminus\hilb_{-1}$ for $s\in[1,2]$. 
\begin{theorem}\label{thm:singrwa2}
	Given $s\in[1,2]$, let $f\in\hilb_{-s}$ satisfying Hypothesis~\ref{hyp} for some $r\in[s-1,1]$. Let $\tilde{H}_{f,\tilde{\omega}_\e}$ be the operator on $\focks$ with domain
	\begin{equation}\label{eq:domrwa2}
		\mathcal{D}\!\left(\tilde{H}_{f,\tilde{\omega}_\e}\right)=\left\{\begin{pmatrix}
			\Phie\\\Phi_g-\lambda\frac{1}{\dOmega+1}\adag{f}\Phie
		\end{pmatrix}:\;\Phie,\Phig\in\mathcal{D}\!\left(\dOmega\right)\right\}
	\end{equation}
	acting as\renewcommand\arraystretch{1.5}
	\begin{equation}\label{eq:action2}
		\tilde{H}_{f,\tilde{\omega}_\e}\begin{pmatrix}
			\Phie\\\Phig-\lambda\frac{1}{\dOmega+1}\adag{f}\Phie
		\end{pmatrix}=\begin{pmatrix}
			\left(\tilde{\omega}_{\mathrm{e}}+\dOmega-\lambda^2\tilde{S}_f(-1)\right)\Phie+\lambda\a{f}\Phig\\
			\dOmega\Phig+\lambda\frac{1}{\dOmega+1}\adag{f}\Phie
		\end{pmatrix}.
	\end{equation}
	Then the following facts hold either for all values of $\lambda$ (if $r<1$) or for sufficiently small $\lambda$ (if $r=1$):
	\begin{enumerate}
		\item [(i)] for $f\in\hilb_{-1}$, the following equality holds: 
		\begin{equation}\label{eq:renormalizat}
			\tilde{H}_{f,\tilde{\omega}_\e}=H_{f,\omega_\e},\qquad\text{where}\qquad\omega_\e=\tilde{\omega}_\e-\|f\|^2_{-1},
		\end{equation}
		with $H_{f,\omega_\e}$ being the rotating-wave spin-boson model with form factor $f\in\hilb_{-1}$ and excitation energy $\omega_\e$, cf. Theorem~\ref{thm:singrwa};
		\item [(ii)] for $f\in\hilb_{-s}$, $\tilde{H}_{f,\tilde{\omega}_\e}$ is a self-adjoint operator on $\focks$, whose resolvent reads, for all $z\in\mathbb{C}\setminus\mathbb{R}$,
		\begin{equation}\label{eq:ressing2}
			\frac{1}{\tilde{H}_{f,\tilde{\omega}_\e}-z}\begin{pmatrix}
				\Psie\\\Psig
			\end{pmatrix}=\begin{pmatrix}
				\tilde{\mathcal{G}}_{f,\tilde{\omega}_\e}^{-1}(z)\Bigl(\Psie-\lambda\a{f}\frac{1}{\dOmega-z}\Psig\Bigr)\\
				\frac{1}{\dOmega-z}\Psig-\lambda\frac{1}{\dOmega-z}\adag{f}\tilde{\mathcal{G}}^{-1}_f(z)\Bigl(\Psie-\lambda\a{f}\frac{1}{\dOmega-z}\Psig\Bigr)
			\end{pmatrix},
		\end{equation}
		with $\tilde{\mathcal{G}}_{f,\tilde{\omega}_\e}(z)$ as defined in Eq.~\eqref{eq:gf2};
		\item [(iii)] given $f\in\hilb_{-s}$, there is a sequence $\{f^i\}_{i\in\mathbb{N}}\subset\hilb$ of normalizable form factors and a sequence $\{\omega_\e^i\}_{i\in\mathbb{N}}\subset\mathbb{R}$ such that $H_{f^i,\omega_\e^i}\to \tilde{H}_{f,\tilde{\omega}_\e}$ in the strong resolvent sense;
		\item [(iv)] conversely, given any $\{f^i\}_{i\in\mathbb{N}}\subset\hilb$, $f\in\hilb_{-s}\setminus\hilb_{-1}$ such that
		\begin{equation}\label{eq:renormalization}
			\|f^i-f\|_{-s}\to0\qquad\text{and}\qquad \lim_{i\to\infty}\left(\omega_\e^i+\|f^i\|_{-1}^2\right)=:\tilde{\omega}_\e\in\mathbb{R},
		\end{equation}
		then the sequence of regular rotating-wave spin-boson models $\left\{H_{f^i,\omega_\e^i}\right\}_{i\in\mathbb{N}}$ converges to the operator $\tilde{H}_{f,\tilde{\omega}_\e}$ in the strong resolvent sense.
	\end{enumerate}
\end{theorem}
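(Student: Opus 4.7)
The plan is to address the four claims in order, with most of the effort concentrated in (iii) and (iv). Claim (i) follows at once from the last assertion of Lemma~\ref{lemma:relbound}: for $f\in\hilb_{-1}$ and $\Phi\in\mathcal{D}(\dOmega)$ one has $\tilde{\mathcal{S}}_f(-1)\Phi = \mathcal{S}_f(-1)\Phi + \|f\|_{-1}^2\Phi$, so substituting into Eq.~\eqref{eq:action2} and comparing with the action~\eqref{eq:action} of $H_{f,\omega_\e}$ on the common domain~\eqref{eq:singdom} identifies $\tilde{H}_{f,\tilde{\omega}_\e}$ with $H_{f,\omega_\e}$ once the excitation energy is renormalized as $\omega_\e = \tilde{\omega}_\e - \|f\|_{-1}^2$. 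Claim (ii) is proved by rerunning the proof of Theorem~\ref{thm:singrwa}(ii) almost verbatim, with $\mathcal{G}_{f,\omega_\e}$ replaced by $\tilde{\mathcal{G}}_{f,\tilde{\omega}_\e}$: solve the resolvent equation componentwise, eliminate the ground component as in Eq.~\eqref{eq:phig}, reduce the excited-component equation to $\tilde{\mathcal{G}}_{f,\tilde{\omega}_\e}(z)\Phi_\e = \Psi_\e - \lambda\,\a{f}(\dOmega-z)^{-1}\Psi_\g$, and invert via Lemma~\ref{lemma:inv}; the resulting formula~\eqref{eq:ressing2} defines a bounded resolvent on $\focks\oplus\focks$ satisfying $R(\bar z)^* = R(z)$, whence $\tilde{H}_{f,\tilde{\omega}_\e}$ is self-adjoint.

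For (iii) and (iv) the setup is parallel: by Prop.~\ref{prop:singlimit}, in (iii) pick $\{f^i\}\subset\hilb$ with $\|f^i-f\|_{-s}\to 0$ and set $\omega_\e^i = \tilde{\omega}_\e - \|f^i\|_{-1}^2$; in (iv) the sequence is given and $\tilde{\omega}_\e^i := \omega_\e^i + \|f^i\|_{-1}^2\to\tilde{\omega}_\e$ by hypothesis. In either case claim (i) yields $H_{f^i,\omega_\e^i} = \tilde{H}_{f^i,\tilde{\omega}_\e^i}$, so it suffices to prove strong resolvent convergence $\tilde{H}_{f^i,\tilde{\omega}_\e^i}\to\tilde{H}_{f,\tilde{\omega}_\e}$. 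Via the explicit formula~\eqref{eq:ressing2}, the uniform bound~\eqref{eq:boundednorm}, and the identity
\[
\tilde{\mathcal{G}}_{f^i,\tilde{\omega}_\e^i}^{-1}(z) - \tilde{\mathcal{G}}_{f,\tilde{\omega}_\e}^{-1}(z) = \tilde{\mathcal{G}}_{f^i,\tilde{\omega}_\e^i}^{-1}(z)\bigl[\tilde{\mathcal{G}}_{f,\tilde{\omega}_\e}(z) - \tilde{\mathcal{G}}_{f^i,\tilde{\omega}_\e^i}(z)\bigr]\tilde{\mathcal{G}}_{f,\tilde{\omega}_\e}^{-1}(z),
\]
this ultimately reduces to showing that $[\tilde{\mathcal{S}}_f(z) - \tilde{\mathcal{S}}_{f^i}(z)]\Phi\to 0$ in $\focks$ for every fixed $\Phi\in\mathcal{D}(\dOmega)$.

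The main obstacle is this last strong convergence. Unlike $\mathcal{S}_f$ in Theorem~\ref{thm:singrwa}, the renormalized $\tilde{\mathcal{S}}_f(z)$ is only (infinitesimally) $\dOmega$-relatively bounded rather than bounded on $\focks$, so the difference cannot converge in operator norm—which is precisely why only strong, and not norm, resolvent convergence can be expected here. The strategy is to split via Eq.~\eqref{eq:diff} at some $z_0$ in the resolvent set of $\dOmega$,
\[
\tilde{\mathcal{S}}_f(z) - \tilde{\mathcal{S}}_{f^i}(z) = \bigl[\tilde{\mathcal{S}}_f(z_0) - \tilde{\mathcal{S}}_{f^i}(z_0)\bigr] + (z-z_0)\Bigl[\a{f}\tfrac{1}{\dOmega-z}\tfrac{1}{\dOmega-z_0}\adag{f} - \a{f^i}\tfrac{1}{\dOmega-z}\tfrac{1}{\dOmega-z_0}\adag{f^i}\Bigr],
\]
and treat the two pieces separately. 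The bilinear bracket is a genuinely bounded operator on $\focks$: for $s\leq 2$, the chain $\adag{\cdot}\colon\focks\to\focks_{-s}$, $(\dOmega-z_0)^{-1}(\dOmega-z)^{-1}\colon\focks_{-s}\to\focks_{4-s}$, $\a{\cdot}\colon\focks_{+s}\to\focks$ is licit (since $4-s\geq s$), and by expanding bilinearly and invoking Props.~\ref{prop:af_sing}--\ref{prop:adagf_sing} the operator norm of its difference is controlled by a constant times $\|f-f^i\|_{-s}\to 0$. The residual term $[\tilde{\mathcal{S}}_f(z_0) - \tilde{\mathcal{S}}_{f^i}(z_0)]\Phi$ has to be handled sectorwise on the explicit kernel~\eqref{eq:sfntilde}: writing $\tilde{\mathcal{S}}^{(n)}_f(z_0) - \tilde{\mathcal{S}}^{(n)}_{f^i}(z_0)$ as two mixed bilinear pieces linear in $f-f^i$ and adapting the sectorial estimate behind Lemma~\ref{lemma:relbound}---with Hypothesis~\ref{hyp} satisfied uniformly in $i$ (since $f^i\to f$ in $\hilb_{-s}$ transfers the growth condition for large $i$)---gives a bound proportional to $\|f-f^i\|_{-s}\|(\omega^{(n)})^{s-1}\Phi^{(n)}\|$, and dominated convergence in $n$, with the integrable majorant $\sum_n\|\dOmega^{s-1}\Phi^{(n)}\|^2<\infty$ ensured by $\Phi\in\mathcal{D}(\dOmega)\subset\mathcal{D}(\dOmega^{s-1})$ (noting $s\leq 2$), closes the argument. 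The passage from strong convergence of the resolvents to strong resolvent convergence of the Hamiltonians then follows from~\cite[Theorem~VIII.25]{reed1972methods}.
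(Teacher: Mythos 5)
Your proposal follows essentially the same route as the paper: (i) via the identity $\mathcal{S}_f(z)\Psi=\tilde{\mathcal{S}}_f(z)\Psi-\|f\|_{-1}^2\Psi$ from Lemma~\ref{lemma:relbound}, (ii) by rerunning the resolvent computation of Theorem~\ref{thm:singrwa}(ii) with $\tilde{\mathcal{G}}_{f,\tilde{\omega}_\e}$ in place of $\mathcal{G}_{f,\omega_\e}$ and invoking Lemma~\ref{lemma:inv}, and (iii)--(iv) by the second resolvent identity together with the uniform bound $\|\tilde{\mathcal{G}}^{-1}_{f^i,\tilde{\omega}_\e}(z)\|\leq|\Im z|^{-1}$, reducing everything to the strong convergence $[\tilde{\mathcal{S}}_{f^i}(z)-\tilde{\mathcal{S}}_f(z)]\Phi\to0$ on $\mathcal{D}(\dOmega)$. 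The only genuine difference is that you spell out this last strong-convergence step (which the paper asserts without proof) via the splitting of Eq.~\eqref{eq:diff}; your argument there is sound except for the parenthetical claim that Hypothesis~\ref{hyp} transfers uniformly to the $f^i$ merely because $\|f^i-f\|_{-s}\to0$ --- convergence in $\hilb_{-s}$ does not by itself give the $n^{-(s-r)}$ decay with a uniform constant, so for (iii) you should choose the approximants explicitly (e.g.\ cutoffs $f^i=f\chi_{\{\omega\leq i\}}$, for which $|f^i|\leq|f|$ pointwise and the constant $C_f$ is inherited), while for (iv), where the sequence is arbitrary, the needed uniform domination requires a further argument (a caveat that equally applies to the paper's own ``proven analogously'').
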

\begin{proof}
	$(i)$ Given $f\in\hilb_{-1}$, the claim follows directly from the equality $\tilde{\mathcal{S}}_{f}(z)=\mathcal{S}_f(z)$ provided by Lemma~\ref{lemma:relbound}.
	
	$(ii)$ Consider the equation\renewcommand\arraystretch{1}
	\begin{equation}
		\left(\tilde{H}_{f,\tilde{\omega}_\e}-z\right)\begin{pmatrix}
			\Phie\\\Phig-\lambda\frac{1}{\dOmega+1}\adag{f}\Phie
		\end{pmatrix}=\begin{pmatrix}
			\Psie\\\Psig
		\end{pmatrix}
	\end{equation}
	for $z\in\mathbb{C}\setminus\mathbb{R}$, that is,\renewcommand\arraystretch{1.5}
	\begin{equation}
		\begin{pmatrix}	\left(\tilde{\omega}_{\mathrm{e}}+\dOmega-z-\lambda^2\tilde{S}_f(-1)\right)\Phie+\lambda\a{f}\Phig\\
			\left(\dOmega-z\right)\Phig+(z+1)\lambda\frac{1}{\dOmega+1}\adag{f}\Phie
		\end{pmatrix}=\begin{pmatrix}
			\Psie\\\Psig
		\end{pmatrix}.
	\end{equation}\renewcommand\arraystretch{1}
	The second equation yields
	\begin{eqnarray}\label{eq:phig2}
		\Phig=\frac{1}{\dOmega-z}\Psig-\lambda(z+1)\frac{1}{\dOmega-z}\frac{1}{\dOmega+1}\adag{f}\Phie;
	\end{eqnarray}
	Substituting into the first one and taking into account Eq.~\eqref{eq:diff} in Lemma~\ref{lemma:relbound}, we get the following equation
	\begin{equation}
		\tilde{\mathcal{G}}_{f,\tilde{\omega}_\e}(z)\Phie=\Psie-\lambda\,\a{f}\frac{1}{\dOmega-z}\Psig,
	\end{equation}
	which, by Lemma~\ref{lemma:inv}, can be uniquely solved via $\tilde{\mathcal{G}}^{-1}_{f,\tilde{\omega}_\e}(z)$. Analogous calculations as the ones in the proof of Theorem~\ref{thm:singrwa}(ii) finally yield the desired result.
	
	$(iii)$ Let $f\in\hilb_{-s}$. Since $\hilb$ is densely embedded into $\hilb_{-s}$, there exists a sequence $\{f^i\}_{i\in\mathbb{N}}\subset\hilb$ such that $\|f^i-f\|_{-s}\to0$ as $i\to\infty$. Setting $\omega_\e^i=\tilde{\omega}_\e-\|f^i\|^2_{-1}$, clearly
	\begin{equation}
		\mathcal{G}_{f^i,\omega_\e^i}(z)=\tilde{\mathcal{G}}_{f^i,\tilde{\omega}_\e}(z)
	\end{equation}
	and it converges strongly to $\tilde{\mathcal{G}}_{f^i,\tilde{\omega}_\e^i}(z)$ as $i\to\infty$. Now, we have
	\begin{equation}
		\tilde{\mathcal{G}}^{-1}_{f,\tilde{\omega}_\e}(z)-\tilde{\mathcal{G}}^{-1}_{f^i,\tilde{\omega}_\e}(z)=	\tilde{\mathcal{G}}^{-1}_{f^i,\tilde{\omega}_\e}(z)\left[\tilde{\mathcal{G}}_{f^i,\tilde{\omega}_\e}(z)-\tilde{\mathcal{G}}_{f,\tilde{\omega}_\e}(z)\right]\tilde{\mathcal{G}}^{-1}_{f,\tilde{\omega}_\e}(z).
	\end{equation}
	Since $\tilde{\mathcal{G}}^{-1}_{f,\tilde{\omega}_\e}(z)$ has a bounded inverse, the most general vector $\Psi\in\focks$ can be written as $\Psi=\tilde{\mathcal{G}}^{-1}_{f,\tilde{\omega}_\e}(z)\Phi$ for some $\Phi\in\mathcal{D}(\dOmega)$. Consequently,
	\begin{eqnarray}
		\left\|\left[\tilde{\mathcal{G}}^{-1}_{f^i,\tilde{\omega}_\e}(z)-\tilde{\mathcal{G}}^{-1}_{f,\tilde{\omega}_\e}(z)\right]\Psi\right\|_\focks&=&	\left\|\tilde{\mathcal{G}}^{-1}_{f^i,\tilde{\omega}_\e}(z)\left[\tilde{\mathcal{G}}_{f^i,\tilde{\omega}_\e}(z)-\tilde{\mathcal{G}}_{f,\tilde{\omega}_\e}(z)\right]\tilde{\mathcal{G}}^{-1}_{f,\tilde{\omega}_\e}(z)\Psi\right\|_\focks\nonumber\\
		&=&\left\|\tilde{\mathcal{G}}^{-1}_{f^i,\tilde{\omega}_\e}(z)\left[\tilde{\mathcal{G}}_{f^i,\tilde{\omega}_\e}(z)-\tilde{\mathcal{G}}_{f,\tilde{\omega}_\e}(z)\right]\Phi\right\|_\focks\nonumber\\&\leq&\left\|\tilde{\mathcal{G}}^{-1}_{f^i,\tilde{\omega}_\e}(z)\right\|_{\mathcal{B}(\focks)}\left\|\left[\tilde{\mathcal{G}}_{f^i,\tilde{\omega}_\e}(z)-\tilde{\mathcal{G}}_{f,\tilde{\omega}_\e}(z)\right]\Phi\right\|_\focks\nonumber\\
		&\leq&\frac{1}{|\Im z|}\left\|\left[\tilde{\mathcal{G}}_{f^i,\tilde{\omega}_\e}(z)-\tilde{\mathcal{G}}_{f,\tilde{\omega}_\e}(z)\right]\Phi\right\|_\focks\to0,
	\end{eqnarray}
	which implies the claim.  $(iv)$ is proven analogously. 
\end{proof}
Theorem~\ref{thm:singrwa2} represents the desired generalization of Theorem~\ref{thm:singrwa} to the case of singular form factors $f\in\hilb_{-s}$, $s>1$. The interpretation of $\tilde{\omega}_\e$ as a \textit{renormalized} excitation energy of the spin, as opposed to the bare excitation energy $\omega_\e$ of the original model, should now be clear: whenever $f\notin\hilb_{-1}$, any approximating sequence of regularized form factors $\{f^i\}_{i\in\mathbb{N}}$ yields a diverging quantity $\|f^i\|^2_{-1}$, so that (see Eq.~\eqref{eq:renormalization}) any sequence of approximating spin-boson models must be characterized by an excitation energy $\omega^i_\e$ diverging as well, as opposed to the case considered in Theorem~\ref{thm:singrwa} where no renormalization of the excitation energy was required.

\begin{remark}\label{rem:vacuum2}
Reprising the discussion in Remark~\ref{rem:vacuum}, notice that the state $\Psi_0$ in Eq.~\eqref{eq:vacuum}, corresponding to the excited state of the atom coupled with the vacuum of the boson field, satisfies $\Psi_0\notin\mathcal{Q}(\tilde{H}_{f,\tilde{\omega}_\e})$ (the latter being the form domain of $\tilde{H}_{f,\tilde{\omega}_\e}$) whenever $f\in\hilb_{-2}\setminus\hilb_{-1}$. Physically, this means that the total energy distribution of such a state has an infinite average as well. Summing up, in the rotating-wave spin-boson model, the total energy distribution associated with the state $\Psi_0$ has
\begin{itemize}
	\item if $f\in\hilb$, finite average and finite variance;
	\item if $f\in\hilb_{-1}\setminus\hilb$, finite average but infinite variance;
	\item if $f\in\hilb_{-2}\setminus\hilb_{-1}$, infinite average and infinite variance.
\end{itemize}
This discussion mirrors the situation already observed in \cite{facchi2021spectral} for the single-excitation sector $\tilde{H}_{f,\tilde{\omega}_\e}$ of the model.
\end{remark}

To conclude, let us discuss the restrictions $\tilde{H}^{(n)}_{f,\tilde{\omega}_\e}$ of the model to the finite-excitation sectors $\hfrak^{(n)}$ introduced in Section~\ref{sec:singrwa}, thus finding the counterpart of Corollary~\ref{coroll1} for form factors $f\in\hilb_{-s}\setminus\hilb_{-1}$ for $s\in[1,2]$. While Hypothesis~\ref{hyp} is required for the self-adjointness of the model on the full Fock space, such an assumption can be relaxed when taking into account each $\tilde{H}^{(n)}_{f,\tilde{\omega}_\e}$ separately:
\begin{proposition}\label{prop:fiber}
	Let $f\in\hilb_{-s}$, $s\in[1,2]$. For all $n\in\mathbb{N}$, let $\tilde{H}_{f,\tilde{\omega}_\e}^{(n)}$ be the operator on $\hfrak^{(n)}$ with domain
	\begin{equation}\label{eq:singdom_n2}
		\mathcal{D}\left(\tilde{H}_{f,\tilde{\omega}_\e}^{(n)}\right)=\left\{
		\begin{pmatrix}
			\Phie^{(n-1)}\\\Phig^{(n)}-\lambda\frac{1}{\omega^{(n)}+1}\adag{f}\Phie^{(n-1)}
		\end{pmatrix}:\;\Phie^{(n-1)}\in\mathcal{D}(\omega^{(n-1)}),\;\Phig^{(n)}\in\mathcal{D}(\omega^{(n)}),	
		\right\},
	\end{equation}
	acting as
	\begin{equation}
		\tilde{H}_{f,\omega_\e}^{(n)}\begin{pmatrix}
			\Phie^{(n-1)}\\\Phig^{(n)}-\lambda\frac{1}{\omega^{(n)}+1}\adag{f}\Phie^{(n-1)}
		\end{pmatrix}=\begin{pmatrix}
			(\tilde{\omega}_{\mathrm{e}}+\omega^{(n-1)}-\lambda^2\tilde{\mathcal{S}}_f^{(n-1)}(-1))\Phie^{(n-1)}+\lambda\,\a{f}\Phig^{(n)}\\
			\omega^{(n)}\Phig^{(n)}+\lambda\frac{1}{\omega^{(n)}+1}\adag{f}\Phie^{(n-1)}
		\end{pmatrix}.
	\end{equation}
Then the following facts hold either for all values of $\lambda$ (if $s<2$) or for sufficiently small $\lambda$ (if $s=2$):
\begin{itemize}
	\item[(i)] $\tilde{H}^{(n)}_{f,\tilde{\omega}_\e}$ is a self-adjoint operator on $\hfrak^{(n)}$;
	\item[(ii)] there exists sequences $\{f^i\}_{i\in\mathbb{N}}\subset\hilb$ and $\{\omega^i_\e\}_{i\in\mathbb{N}}\subset\mathbb{R}$ such that $H_{f^i,\omega^i_\e}^{(n)}\to \tilde{H}_{f,\tilde{\omega}_\e}^{(n)}$ in the strong resolvent sense,
\end{itemize}
and the same holds for the operator $\bigoplus_{n\in\mathbb{N}}\tilde{H}^{(n)}_{f,\tilde{\omega}_\e}$ on $\focks$.

Finally, if $f$ satisfies Hypothesis~\ref{hyp} for some $r\in[s-1,1]$, then
\begin{equation}\label{eq:directsum}
	\tilde{H}_{f,\tilde{\omega}_\e}=\bigoplus_{n\in\mathbb{N}}\tilde{H}^{(n)}_{f,\tilde{\omega}_\e}
\end{equation}
either for all values of $\lambda$ (if $r<1$) or for sufficiently small $\lambda$ (if $r=1$).
\end{proposition}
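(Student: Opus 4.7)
The plan is to mirror the proof of Theorem~\ref{thm:singrwa2} on each fiber $\hfrak^{(n)}=\hilb^{(n-1)}\oplus\hilb^{(n)}$ separately, exploiting the crucial observation made in Remark~\ref{remark:regardless}: without invoking Hypothesis~\ref{hyp}, each fiber operator $\tilde{\mathcal{S}}_f^{(n)}(z)$ is relatively bounded with respect to $(\omega^{(n)})^{s-1}$ for every $f\in\hilb_{-s}$. Since $s-1\in[0,1]$, this yields infinitesimal relative boundedness with respect to $\omega^{(n)}$ when $s<2$ and ordinary relative boundedness (with a computable constant independent of $n$) when $s=2$; this is exactly the input needed to reproduce the content of Lemma~\ref{lemma:inv} fiber by fiber.

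The first step is to establish the fiber counterpart of Lemma~\ref{lemma:inv}: on $\hilb^{(n-1)}$, introduce the propagator
\begin{equation}
\tilde{\mathcal{G}}^{(n)}_{f,\tilde{\omega}_\e}(z)=\tilde{\omega}_\e-z+\omega^{(n-1)}-\lambda^{2}\tilde{\mathcal{S}}_f^{(n-1)}(z)
\end{equation}
with domain $\mathcal{D}(\omega^{(n-1)})$. Repeating verbatim the argument used in the proof of Lemma~\ref{lemma:inv}, relative boundedness furnishes closedness and the self-adjointness relation $\tilde{\mathcal{G}}^{(n)}_{f,\tilde{\omega}_\e}(\bar z)=\tilde{\mathcal{G}}^{(n)}_{f,\tilde{\omega}_\e}(z)^*$, and the sign of the imaginary part of $\tilde{\mathcal{S}}_f^{(n-1)}(z)$ (inherited as in Lemma~\ref{lemma:inv}) produces a bounded inverse with operator norm $\leq 1/|\Im z|$. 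The dichotomy on $\lambda$ is then dictated solely by whether the relative bound is zero (case $s<2$, all $\lambda$) or finite (case $s=2$, small $\lambda$). With this propagator at hand, claim (i) and an explicit resolvent formula follow by running the resolvent calculation of Theorem~\ref{thm:singrwa2}(ii) word by word inside $\hfrak^{(n)}$: the ansatz from~\eqref{eq:singdom_n2} substituted into $(\tilde{H}^{(n)}_{f,\tilde{\omega}_\e}-z)\Phi=\Psi$ is solved uniquely by $\tilde{\mathcal{G}}^{(n)}_{f,\tilde{\omega}_\e}(z)^{-1}$, exhibiting self-adjointness.

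For (ii), following Theorem~\ref{thm:singrwa2}(iii)--(iv) I would pick any sequence $\{f^i\}\subset\hilb$ with $\|f^i-f\|_{-s}\to 0$ and set $\omega_\e^i=\tilde{\omega}_\e-\|f^i\|_{-1}^{2}$; the identity $\mathcal{G}_{f^i,\omega_\e^i}^{(n)}(z)=\tilde{\mathcal{G}}_{f^i,\tilde{\omega}_\e}^{(n)}(z)$ together with the strong convergence $\tilde{\mathcal{S}}_{f^i}^{(n-1)}(z)\to\tilde{\mathcal{S}}_f^{(n-1)}(z)$ on $\mathcal{D}(\omega^{(n-1)})$ (a consequence of the uniform $(\omega^{(n-1)})^{s-1}$-relative bound combined with $\|f^i-f\|_{-s}\to 0$) delivers strong resolvent convergence on the fiber. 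Lifting to the direct sum: self-adjointness of $\bigoplus_n\tilde{H}^{(n)}_{f,\tilde{\omega}_\e}$ is automatic, and the uniform bound $1/|\Im z|$ on every fiber resolvent, combined with fiberwise strong convergence, gives strong resolvent convergence on the dense subspace of vectors supported in finitely many sectors and hence on all of $\hfrak$.

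The final equality $\tilde{H}_{f,\tilde{\omega}_\e}=\bigoplus_{n\in\mathbb{N}}\tilde{H}^{(n)}_{f,\tilde{\omega}_\e}$ is the place where Hypothesis~\ref{hyp} enters: under it, Theorem~\ref{thm:singrwa2} promotes the left-hand side to a genuine self-adjoint operator on $\hfrak$, and inspection of the parametrization~\eqref{eq:domrwa2} shows that its domain decomposes sectorwise exactly as in~\eqref{eq:singdom_n2}, with the action~\eqref{eq:action2} restricting to each fiber precisely as described in the statement; this forces the two self-adjoint operators to coincide. The main obstacle I expect is careful bookkeeping of the constraints on $\lambda$: the sharper conditions for the fibers depend only on $s$ (via Remark~\ref{remark:regardless}), while the coarser conditions for the full direct sum coincide with those of Theorem~\ref{thm:singrwa2} and involve the scaling parameter $r$ supplied by Hypothesis~\ref{hyp}; these must be tracked consistently so that the final equality is asserted exactly in the regime where both sides are simultaneously self-adjoint.
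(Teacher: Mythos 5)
Your proposal follows essentially the same route as the paper's proof: fiberwise relative boundedness from Remark~\ref{remark:regardless}, a fiber version of Lemma~\ref{lemma:inv} yielding an invertible propagator with norm bound $1/|\Im z|$, the resolvent computation of Theorem~\ref{thm:singrwa2} run inside each $\hfrak^{(n)}$, standard direct-sum facts, and the observation that under Hypothesis~\ref{hyp} the full operator restricts to the fiber operators. The only caveat is your parenthetical claim that for $s=2$ the relative bound constant is independent of $n$ --- without Hypothesis~\ref{hyp} this is not immediate from the estimates of Lemma~\ref{lemma:relbound}, though it is not needed for the fiberwise statements and the paper's own (terser) proof is equally silent on the uniformity in $n$ of the smallness threshold for $\lambda$ in the direct-sum claim.
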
\renewcommand\arraystretch{1}
\begin{proof}
	Lemma~\ref{lemma:relbound} implies (see Remark~\ref{remark:regardless}) that, for each fixed $n\in\mathbb{N}$ and $z\in\mathbb{C}\setminus[m,\infty)$, the operator $\tilde{S}^{(n)}_f(z)$ is relatively bounded, either with zero bound (if $s<2$) or finite bound (if $s=2$), with respect to $\omega^{(n)}$; an argument analogous to the one in Lemma~\ref{lemma:inv} shows that, in either case, $\tilde{\mathcal{G}}^{(n)}_f(z)$ admits a bounded inverse. Following the same arguments as in the proof of Theorem~\ref{thm:singrwa2}, the claims $(i)$ and $(ii)$ are shown. The properties of the direct sum of self-adjoint operators on Hilbert spaces readily implies that the same properties hold for $\bigoplus_{n\in\mathbb{N}}\tilde{H}^{(n)}_{f,\tilde{\omega}_\e}$.
	
	The final claim simply follows by noticing that, whenever the assumptions of Theorem~\ref{thm:singrwa2} are satisfied, the restriction of $\tilde{H}_{f,\tilde{\omega}_\e}$ to $\hfrak^{(n)}$ coincides with $\tilde{H}^{(n)}_{f,\tilde{\omega}_\e}$.
\end{proof}
\begin{remark}
Notice that Eq.~\eqref{eq:directsum} \textit{always} defines a self-adjoint operator on $\fock$, its domain being
\begin{equation}
	\mathcal{D}\!\left(\bigoplus_{n\in\mathbb{N}}\tilde{H}^{(n)}_{f,\tilde{\omega}_\e}\right)=\bigoplus_{n\in\mathbb{N}}\mathcal{D}\!\left(\tilde{H}^{(n)}_{f,\tilde{\omega}_\e}\right),
\end{equation}
even when $\tilde{H}_{f,\tilde{\omega}_\e}$ is not guaranteed to be well-defined, i.e. without invoking Hypothesis~\ref{hyp}. \textit{Only} if the additional assumptions at the root of Theorem~\ref{thm:singrwa2} hold, then the two operators coincide and the domain above admits the explicit expression~\eqref{eq:domrwa2}. This may be regarded as an alternative, ``bottom-up'' way to construct the operator $\tilde{H}_{f,\tilde{\omega}_\e}$, as opposed to the ``top-down'' approach followed beforehand.
\end{remark}
\begin{remark}
	In the case $n=1$, i.e. for the restriction $\tilde{H}^{(1)}_{f,\tilde{\omega}_\e}$ of the model to the single-excitation sector $\hfrak^{(1)}$, the results of Prop.~\ref{prop:fiber} can be further improved by noticing (see Remark~\ref{remark:self2}) that $\tilde{\mathcal{S}}^{(0)}_f(z)$ is a bounded operator. Consequently:
	\begin{itemize}
		\item even in the case $s=2$, $\tilde{H}^{(1)}_{f,\tilde{\omega}_\e}$ is self-adjoint for arbitrary values of $\lambda$;
		\item \textit{norm} resolvent convergence holds,
	\end{itemize}
	so that the results of \cite{facchi2021spectral} for the Friedrichs-Lee model with form factor $f\in\hilb_{-2}$ are completely recovered.
\end{remark}

\section{Concluding remarks}
We have shown that, by constructing scales of Fock spaces, it is possible to define creation and annihilation operators for a non-normalizable function $f\in\hilb_{-s}$ for $s\ge1$, i.e. satisfying a weaker growth constraint. This formalism has been used, as a first simple application, to introduce in a natural way a class of GSB models which allow us, for small enough values of the coupling constant $\lambda$, to select non-normalizable form factors $f_1,\dots,f_r\in\hilb_{-1}$, thus extending considerably the class of physical systems that can be rigorously described by GSB models. 

Furthermore, this result has been improved for a particular instance of such models, namely the rotating-wave (RW) spin-boson model, for which a nonperturbative result has been obtained via an explicit evaluation of its domain and its resolvent. Namely, after addressing the rotating-wave spin-boson model with $f\in\hilb_{-1}$, a further extension to form factors $f\in\hilb_{-2}$ has been obtained by means of a delicate technique closely resembling the renormalization procedures of quantum field theory. In all cases, the ``singular'' models introduced in this work have been shown to include the regular ones as a particular case, and can be approximated by them either in the norm or strong resolvent sense, respectively for form factors in $\hilb_{-1}$ or $\hilb_{-2}$.

We will list here some possible developments of the results presented here. First of all, while the findings in Section~\ref{sec:singgsb} for the GSB models are perturbative (and thus valid for small enough values of the coupling constant $\lambda$), the standard GSB models are known to be self-adjoint, as long as the atomic operators are bounded, for arbitrary values of $\lambda$~\cite{arai1997existence}. By assuming $f\in\hilb_{-s}$ for $s\in[0,1]$, a sharper estimate on the maximum value of $\lambda$ for which self-adjointness is ensured, probably dependent on $s$ and converging to $\infty$ when $s\to0$, may be obtained. Correspondingly, an extension to our results to unbounded atomic operators could also be achieved. Likewise, an application of the formalism developed here to the other two paradigmatic models discussed in this paper, cf. Eq.~\eqref{eq:def_sb} and~\eqref{eq:def_sb_pd}, or to the multi-atom generalization of the rotating-wave spin-boson model, see Remark~\ref{remark:natom}, is desirable and will be the object of future research.

Furthermore, while for simplicity we have only dealt with boson field with positive mass cutoff $m>0$, extending our results to the case of a massless boson field $m=0$ should be feasible; physically, this entails to taking into account both infrared and ultraviolet divergences. Finally, the formalism developed here may be also applied to more sophisticated models beyond the GSB structure, for instance involving quadratic terms in $\a{f}$ and $\adag{f}$~\cite{teranishi2015self}.

\section*{Acknowledgments}
We acknowledge fruitful discussions with Paolo Facchi. This work is partially supported by Istituto Nazionale di Fisica Nucleare (INFN) through the project “QUANTUM” and by the Italian National Group of Mathematical Physics (GNFM-INdAM).

\section*{Data availability statement}
Data sharing is not applicable to this article as no new data were created or analyzed in this study.

\AtNextBibliography{\small}
\DeclareFieldFormat{pages}{#1}
\printbibliography

\end{document}